\documentclass[psamsfonts]{amsart}

\usepackage{amsthm,amssymb,amscd,latexsym}


\def\R{{\mathbb{R}}}
\def\N{{\mathbb{N}}}
\def\Z{{\mathbb{Z}}}
\def \length {\ell}

\oddsidemargin 0pt

\evensidemargin 0pt

\marginparsep 0pt

\topmargin 0pt

\textwidth 6.5in

\textheight 9 in
\theoremstyle{plain}
\newtheorem{theorem}{Theorem}
\newtheorem{corollary}{Corollary}

\newtheorem{lemma}{Lemma}
\newtheorem*{fact*}{Fact}

\newtheorem{claim}{Claim}

\theoremstyle{definition}
\newtheorem{definition}{Definition}
\newtheorem{case}{Case}

\theoremstyle{remark}
\newtheorem{notation}{Notation}
\newtheorem{remark}{Remark}

\newcommand{\qedc}{{\qed}~{\rm Claim~{\theclaim}.}}

\numberwithin{equation}{section}
\DeclareMathOperator{\dist}{Dist}
\def \pn {\par \noindent}



\title{A rewrite based analysis of algorithms}

 \author[A. Akhavi, C. Moreira]{Ali Akhavi \and C\'eline
Moreira Dos Santos}
 \address{Computer Science dept.\\
 University of Caen\\
 F14032 Caen cedex\\
 FRANCE
 }
\email{ali.akhavi@info.unicaen.fr}
\urladdr{http://www.info.unicaen.fr/\~{}akhavi/}
\email{cmoreira@math.unicaen.fr}
\urladdr{http://www.math.unicaen.fr/\~{}cmoreira/}


\begin{document}

\begin{abstract}

We introduce here a new method for extracting worst--cases 
of algorithms
by using rewrite systems over automorphisms groups of 
inputs.
\pn We propose a canonical description of an algorithm, 
that is also
related to the problem it solves. The description 
identifies an 
algorithm with a set of a rewrite systems 
over the 
automorphisms groups of inputs. All possible execution of the 
algorithm will then be reduced words of these rewriting system.

\pn Our main example is reducing two-dimensional Euclidean 
lattice 
bases. We deal with the Gaussian algorithm that finds 
shortest vectors 
in a two--dimensional lattice. We introduce four rewrite 
systems in the 
group of unimodular matrices,
\emph{i.e.} matrices with integer entries and with
determinant equal to
$\pm 1$ and deduce a new worst-case analysis of the 
algorithm that
generalizes Vall\'ee's result\cite
{ValGaussRevisit}
to the case of the usual Gaussian algorithm.
An interesting (but not easy) future application will be 
lattice reduction
in higher dimensions, in order to exhibit a tight upper-
bound for the
number of iterations of LLL--like reduction algorithms in 
the worst case.

\pn Sorting ordered finite sets are here as a nice esay  example 
to 
illustrate the purpose of our method.
We propose several rewrite systems in the group $\mathcal 
S$ of 
permutations and canonically identify
a sorting algorithm with a rewrite system over $\mathcal 
S$. This brings us
to exhibit worst-cases of several sorting algorithms.

 \end{abstract}
\maketitle


\section{Introduction}

\pn A Euclidean lattice is the set of all integer linear
combinations
of a set of linearly independent vectors in ${\R}^p$. The
independent vectors
are called {\em a basis} of the lattice.
Any lattice can be generated by many bases. All of them
have the same
cardinality, that is called {\em the dimension} of the
lattice.
If $B$ and $B'$ represent matrices of two bases of the same
lattice in the
canonical basis of $\R^p$, then there is a unimodular
matrix $U$ such that
$B'=UB$. A unimodular matrix is a matrix with integer
entries and with
determinant equal to $\pm 1$.

\pn The lattice basis reduction problem is to find bases
with
good Euclidean properties, that is, with sufficiently short
vectors and
almost orthogonal.

\medskip
\pn In two dimensions, the problem is solved by the
Gaussian algorithm, that
finds in any two--dimensional lattice, a basis formed with
the shortest
possible
vectors. The worst--case complexity of Gauss' algorithm
(explained
originally in
the vocabulary of quadratic forms) was first studied by
Lagarias \cite{Lag1},
who showed that the algorithm is polynomial with respect to
its input. The
worst--case complexity of Gauss' algorithm was also studied
later more
precisely
by Vall\'ee\cite{ValGaussRevisit}.

\medskip
\pn In 1982, Lenstra, Lenstra and Lov\'asz \cite{LLL82}
gave a powerful
approximation reduction algorithm for lattices of
arbitrary dimension. Their
famous algorithm, called LLL, was an important breakthrough
to numerous
theoretical and practical problems in computational number
theory and
cryptography: factoring polynomials with rational
coefficients
\cite{LLL82}, finding linear Diophantine approximations
\cite{Lag0},
breaking various cryptosystems \cite{JoSt98} and integer
linear programming
\cite{Kann.G-Z,Len83}. The LLL algorithm is a possible
generalization of its
$2$--dimensional version, which is the Gaussian algorithm.

\pn The LLL algorithm seems difficult to analyze
precisely, both in the
worst--case\cite{Akh00,Len00,LLL82} and in average--
case\cite{DaFlVa,DaVa}.
In particular when the dimension is higher than two, the
problem of the real
worst--case of the algorithm is completely open. However,
LLL--like reduction
algorithms are so widely used in practice that the
analyzes are a real
challenge, both from a theoretical and practical point of
view.

\pn The purpose of our paper is a new approach to the worst-
-case analyze of
LLL--like lattice reduction algorithms. For the moment this
approach is
presented only in two dimensions. We have to observe here
that the worst
case of
some variant of the Gaussian algorithm is already known
\cite{ValGaussRevisit}.
Even if our paper generalize this knowledge to the case of
the usual Gaussian
algorithm, we do not consider it as the most important
point of this paper. Our
aim here is to present this new approach.

\pn An LLL--like lattice reduction algorithm uses some
(finite) elementary
transforms. We consider the group generated by these basic
transforms. Then we
exhibit a family of rewriting rules over this group,
corresponding to the
mechanism of the algorithm. The rewriting rules make some
forbidden sequences
and the length of a valid word over the set of generators
becomes very
close to
the number of steps of the algorithm. This makes appear
the smallest
length of
input demanding a given number of iterations to the
reduction algorithm.

\pn From a combinatorial point of view, the group of $n$--
dimensional lattice
transformations $GL_n(\Z)$, \emph{i.e.} the multiplicative
group of
$n\times n$
matrices with determinant $\pm 1$, is the group of
automorphisms of the free
Abelian group on $n$ free generators\footnote{By the free
Abelian group, we
mean
that the only non trivial relators (\emph{i.e.} the
additional relators
compared
to the free group) are the commutators.}. Here we are
concerned by $GL_2(\Z)$,
which is is well--known and whose presentation in terms
of generators and
relators is known since the nineteenth century.

\pn In this paper we present a rewriting system over $GL_2
(\Z)$, that makes us
{\em predict} how the Gaussian algorithm is running on an
arbitrary input. We
deduce from this the worst--case configuration of the usual
Gaussian algorithm
and give an ``optimal'' maximum for the number of steps of
the Gaussian
algorithm. Our result generalizes the result of Vall\'ee
\cite{ValGaussRevisit}.
She studied a variant of the Gaussian algorithm where
elementary transforms
made
by the algorithm are some integer matrices of determinant
equal to $1$. In the
case of the usual Gaussian algorithm, elementary transforms
are integer
matrices
of determinant either $1$ or $-1$.
\pn Il the following we briefly outline the two steps of our method.
\subsection{First step}
\pn Consider a deterministic algorithm $A$ that run on an input $x$ (the data $x$ is  a set $ X$ of data). Then by mean  of elementary transforms taken in a set $F \subset X^X$, the algorithm  changes the input step by step ($x \to f(x)$  until the modified  data satisfies
some output condition ($x \in  O \subset  X$. An elementary or atomic  transform is a transform that cannot be decomposed by the algorithm:
\begin{equation}\label{atomic}
\forall f \in F, \quad \forall k \in \N, k>1, \quad\forall (f_1,\dots,f_k) \in (F\backslash \{id\})^k , \quad f\neq \prod_{i=1}^k f_i
\end{equation}
This is of course a very general context containing both iterative and recursive algorithms.
\pn {\bf Algorithm $A$ }:

\par {\bf Input:} $x \in X$.
\par{\bf Output:} $y \in O$.

\pn {\bf Initialization:} $\mathbf{i:=1;}$ \ 
\pn {\bf While $x \notin O$ do}
\par Determine an adequate function $f \in F$ by a computation on $x$ and eventually on $i$.
\par x:=f(x)
\par i:=i+1

\pn The algorithm $A$ is deterministic.  We suppose that the determination of  the adequate function $f \in F$ at a moment $i$ (which may depend on the history of the execution)   has a cost that  can be added without ambiguity  to the cost of the function $f$ . Considering $F$ as an alphabet, the set $F^*$ of finite words on $F$ is then the monoid generated by the set of {\em free} generators $F$.  $F^*$ contain all (finite) executions of the algorithm.

Now fix  a sequence of transforms $(f_1,f_2,\dots, f_k) \in  F^k$. 

\begin{quotation}\label{q:general}{\em Is the sequence  $(f_1,f_2,\dots, f_k) \in  F^k$ a possible execution for the algorithm? } More precisely, is there $(x,y) \in X \times Y$ such that the algorithm $A$ outputs $y$ when running on an input $x$ and following the exact sequence of transforms $(f_1,f_2,\dots, f_k)$? 
\end{quotation}
\pn Answering  this question in such a  general context is very difficult and the general problem (formulated more precisely) is likely undecidable. However the answer in restricted class of algorithms bring indeed a strong understanding of the mechanism of the algorithm and we believe that it is interesting by its own with lots of possible applications in program verifying, or program designing.

\pn In this paper, we propose a method to answer this question in the case of Gaussian algorthm and three sorting algorithms. A set (finite in the case of our examples) of rewriting systems encode all possible executions of a given algorithm.  All possible executions will be the normal forms (or reduced forms) of these rewriting systems. 
\subsection{second step}

\pn Usually when counting the number of steps of an
algorithm, one
considers all
inputs of length less than a fixed bound, say $M$. Then one
estimates the
maximum number of steps taken over all these inputs by:
 \begin{equation}\label{majorant}
 f(M):= \max_{\text{all inputs of length less than $M$}}
 \text{number of steps of the algorithm}.
 \end{equation}\footnote{When dealing with a non--trivial
algorithm $f$ is
always an increasing function.}

\pn Here to exhibit the precise real worst--case, we first
proceed in ``the
opposite way''. Consider $k$ a fixed number of steps. We
will estimate the
minimum length of those inputs demanding at least $k$ steps
to be processed by
the algorithm:
 \begin{equation}\label{recip}
 g(k):= \min_{\text{all inputs demanding at least $k$
steps}}
 \text{length of the input}.
 \end{equation}
Clearly $f(g(k))=k$. Otherwise there would be an input of
length less than
$g(k)$ demanding more than $k$ steps. But $g(k)$ is by
definition the minimal
length of such inputs. So by inverting the fuction $g$ , we can compute
$f$.

\section{Gaussian algorithm and the new approach to its
worst--case analysis}

\pn Endow $\R^2$ with the usual scalar product
$( \: ,\: )$ and Euclidean length
$|\mathbf{u}| = {( \mathbf{u} , \mathbf{u})}^{1/2}$.
A two--dimensional lattice is a discrete additive subgroup
of $\R^2$.
Equivalently, it is the set of all integer linear
combinations of two
linearly independent vectors. Generally it is given by one
of its bases
$(\mathbf{b}_{1}, \mathbf{b}_{2})$.
Let $(\mathbf{e}_{1}, \mathbf{e}_{2})$ be the canonical 
basis
of $\R^2$. We often
associate to a lattice basis $(\mathbf{b}_{1}, \mathbf{b}_
{2})
$ a matrix $B$,
such that {\em the vectors of the basis are the rows of the
matrix}:
\begin{equation}\label{rows}
B= \bordermatrix{
 &\mathbf{e}_1 &\mathbf{e}_2 \cr
\mathbf{b}_1 &b_{1,1} &b_{1,2} \cr
\mathbf{b}_{2} &b_{2,1} &b_{2,2} \cr
}.
\end{equation}
The \emph{length $L$ of the previous basis} (or the \emph
{length of the matrix
$B$}) is defined here to be the maximum of $(|\mathbf{b}
_1|,|\mathbf{b}_2|)$.

\noindent The usual Gram--Schmidt orthogonalization process
builds, in
polynomial--time, from a basis $b=(\mathbf{b}_{1}, \mathbf
{b}_
{2})$ an orthogonal
basis
$b^{*}=(\mathbf{b}^{*}_{1}, \mathbf{b}^{*}_{2})$
and a lower--triangular matrix $M$ that expresses the
system $b$ into the
system
$b^{*}$\footnote{ Of course,
$b^*$ is generally not a basis for the lattice generated by
$b$.}.
Put
$m=\frac{(\mathbf{b}_2,\mathbf{b}_1)}{(\mathbf{b}_1,\mathbf
{b}_1)}$.
By construction, the following equalities hold:
 \begin{equation}\label{b*}
 \left\{ \begin{array}{lll}
 \mathbf{b}^*_1 &= &\mathbf{b}_1\\
 \mathbf{b}^*_2 &= &{\mathbf{b}_2} - m\:\mathbf{b}_1
 \end{array} \right. ,\
 M= \bordermatrix{
 &\mathbf{b}^*_1 &\mathbf{b}^*_2 \cr
 \mathbf{b}_1 &1 &0 \cr
 \mathbf{b}_{2} &m &1 \cr
 }.
 \end{equation}

\pn The ordered basis $B=(\mathbf{b}_1,\mathbf{b}_2)$ is 
called
{\em proper }
if the
quantity $m$ satisfies
 \begin{equation}\label{proper}
 -1/2 \leq m < {1}/{2}.
 \end{equation}

\pn There is a natural and unique representative of all the
bases of given
two--dimensional lattice. This basis is composed of two
shortest vectors
generating the whole lattice. It is called the Gauss--
reduced basis and the
Gaussian algorithm outputs this reduced basis running on
any basis of the
lattice. Any lattice basis in two dimensions can always be
expressed as
 \begin{equation}\label{deb}
 B=U\: R,
 \end{equation}
where $R$ is the so--called Gaussian reduced basis of the
same lattice and $U$
is a unimodular matrix, \emph{i.e.} an element of $GL_2(\Z)
$. The goal of a
reduction algorithm, the Gaussian algorithm in two
dimensions, is to find $R$
given $B$. The Gaussian algorithm is using two kinds of
elementary transforms,
explained in the sequel of this paper. Let $(b_1, b_2)$ be
an input basis of a
lattice and the matrix $B$ expressing $(b_1, b_2)$ in the
canonical basis of
$\R^2$ as specified by \eqref{rows}.

\pn The algorithm first makes an integer translation of $b_2
$ in the direction
of $b_1$ in order to make $b_2$ as short as possible. This
is done just by
computing the integer $x$ nearest to $m=(b_2,b_1)/(b_1,b_1)
$ and replacing
$b_2$
by $b_2 - x b_1$. Notice that, after this integer
translation, the
basis $(b_1,b_2)$ is proper.

\pn The second elementary transform is just the swap of the
vectors $b_1$ and
$b_2$ in case when after the integer translation we have
$|b_1|>|b_2|$.

\pn The algorithm iterates these transforms, until
after the
translation, $b_1$ remains still smaller than $b_2$, \emph
{i.e.},
$|b_1|\leq|b_2|$.

\pn The Gaussian algorithm can also be regarded (especially
for the analysis
purposes) as an algorithm that gives a decomposition of the
unimodular matrix
$U$ of relation \eqref{deb} by means of some basic
transforms:
 \begin{equation}\label{mieux}
 \begin{array}{lll}
 \text{Input:}&& B=U\: R.\\
 \text{Output:}&&R=T^{x_{k+1}}ST^{x_k}ST^{x_{k-1}}\dots
 ST^{x_2}ST^{x_{1}}B;
 \end{array}
 \end{equation}
where
 \begin{equation}\label{ST}
 S=\begin{pmatrix}
 0&1\\
 1&0
 \end{pmatrix}
 \quad \text{and}\quad
 T=\begin{pmatrix}
 1&0\\
 1&1
 \end{pmatrix}.
 \end{equation}
The matrix $T$ corresponds to an integer translation of $b_2
$
in the direction of $b_1$ by one.
Of course, we have:
 $$
 T^x=\prod_{i=1}^x \text{ and }T=\begin{pmatrix}1&0\\ x&1
\end{pmatrix},
 \text{ for all }x\in\Z.
 $$
The matrix $S$ represents a swap.
{\em Each step of the algorithm is indeed an integer
translation followed by a
swap}\footnote{A priori $x_1$ and $x_{k+1}$ in \eqref
{mieux} may be zero so the
algorithm may start by a swap or finish by a translation.}.
So each step of the Gaussian algorithm is represented by
$ST^x$,
$x\in \Z^*$.

\pn Writing the output in this way \eqref{mieux} shows not
only the output but
how precisely the algorithm is working since $T$ and $S$
represent the only
elementary transforms made during the execution of the
Gaussian algorithm.

\pn So when studying the mechanism of a reduction algorithm
in two
dimensions and for a fixed reduced basis $R$,  the algorithm can be regarded as a decomposition
algorithm over
$GL_2(\Z)$:
 \begin{equation}\label{enfin}
 \begin{array}{lll}
 \text{Input:}&&U \in GL_2(\Z).\\
 \text{Output:}&&\text{a decomposition of $U$, }
 U:=T^{x_{k+1}}\,ST^{x_k}\,ST^{x_{k-1}}\dots ST^{x_2}
\,ST^{x_{1}}.
 \end{array}
 \end{equation}
The integer $k$ denotes the number of steps. Indeed the
algorithm terminates\cite{Akh00,Lag1,ValGaussRevisit}. In
the sequel we will prove  that the above mechanism does not depend strongly
on the reduced basis $R$. More precisely there are exactly $4$ rewrite systems.
(for all reduced bases of $\R^2$) 
\medskip
\pn The unimodular group in two dimensions $GL_2(\Z)$ has
been already studied
\cite{Mag34,MKS76,Nie24,Nie24b} and it is well--known that 
$\{S,T\}$ is a
possible family of
generators for $GL_2(\Z)$. Of course there are relators
associated to these
generators and there is no unicity of the decomposition of
an element of
$GL_2(\Z)$ in terms of
$S$ and $T$. But the Gaussian algorithm gives one precise
of these possible
decompositions. In the sequel of this paper, we will
completely characterize
this decomposition and we will call it {\em the Gaussian
decomposition of a
unimodular matrix}. Roughly speaking, we exhibit forbidden
sequences of values
for the $x_i$--s.

\pn More precisely, we exhibit in~Section~\ref{S:combgroup}
a set of rewriting
rules that lead to the formulation output by the Gaussian
algorithm, from
any product of matrices involving $S$ and $T$. The
precise characterization of the Gaussian decomposition that
we give makes
appear
the slowest manner the length of a unimodular matrix can
grow with respect to
its Gaussian decomposition. More precisely we consider
unimodular matrices
the length of Gaussian decomposition of which is fixed, say
$k$:
 $$
 U:=T^{x_{k+1}}\, ST^{x_k}\,ST^{x_{k-1}}\dots ST^{x_2}
\,ST^{x_{1}}.
 $$
We exhibit in Section~\ref{S:increase} the Gaussian word of 
length $k$ with minimal length.
We naturally deduce the minimum length $g(k)$
of all inputs demanding at least $k$ steps (Section \ref
{S:result}). Finally by
``inverting'' the function $g$ we find the maximum number
of steps of the
Gaussian algorithm.


\section{The Gaussian decomposition of a unimodular matrix}
\label{S:combgroup}
Let $\Sigma$ be a (finite or infinite) set. A word $\omega$
on $\Sigma$ is
a finite sequence
 \begin{equation}\label{longueur}
 \alpha_1\alpha_2\dots\alpha_n
 \end{equation}
where $n$ is a positive integer, and $\alpha_i \in \Sigma$,
for all
$i\in\{1,\dots,n\}$. Let $\Sigma^*$ be the set of finite
words on
$\Sigma$. We introduce for convenience the {\em empty
word} and we
denote it by $1$.
\pn Consider the alphabet $\Sigma = \{S,T,T^{-1}\}$.
Remember that we call Gaussian decomposition, the
expression of $U$ output by the Gaussian algorithm.
Remember also that, for
a given
basis $B$, there exists a unique couple $(U,R)$ such that
$U$ and $R$ are
output by
the Gaussian algorithm while reducing $B$.

\begin{lemma}
Let $R=(b_1,b_2)$ be a reduced basis. Then one of the
following cases occurs:
\begin{itemize}
\item $|b_1|<|b_2|$ and $m\ne-1/2$;\label{E:dsdom}
\item $|b_1|=|b_2|$ and $m\ne-1/2$;\label{E:bddomg}
\item $|b_1|<|b_2|$ and $m=-1/2$;\label{E:bddomc}
\item $|b_1|=|b_2|$ and $m=-1/2$.\label{E:bddompt}
\end{itemize}
\end{lemma}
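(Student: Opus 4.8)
The plan is to read the statement as the assertion that two independent dichotomies, each forced by the definition of a reduced basis, partition all possibilities into exactly four cases. The two properties I would extract from the preceding discussion are: (i) a reduced basis is \emph{proper}, so that $m$ satisfies \eqref{proper}, namely $-1/2 \le m < 1/2$; and (ii) the reduced basis is ordered by length, $|b_1| \le |b_2|$. Property (i) holds because the Gaussian algorithm ends with an integer translation of $b_2$ in the direction of $b_1$, and, as noted just after \eqref{proper}, such a translation leaves the basis proper. Property (ii) is precisely the termination criterion of the algorithm: it iterates until, after the translation, $|b_1| \le |b_2|$ (equivalently, $b_1$ is a shortest lattice vector and $b_2$ a shortest vector completing it to a basis).

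First I would use (i) to split on the value of $m$. Since $-1/2 \le m < 1/2$, either $m = -1/2$ (the left endpoint, which is attained) or $-1/2 < m < 1/2$, i.e. $m \ne -1/2$. The strict inequality on the right is what rules out the symmetric boundary $m = 1/2$, so $-1/2$ is the only distinguished value; this asymmetry is an artifact of the rounding convention built into the translation step. Next I would use (ii) to split on the lengths: from $|b_1| \le |b_2|$ we get either $|b_1| < |b_2|$ or $|b_1| = |b_2|$, the case $|b_1| > |b_2|$ being excluded. Forming the Cartesian product of the two dichotomies $\{m = -1/2,\ m \ne -1/2\}$ and $\{|b_1| < |b_2|,\ |b_1| = |b_2|\}$ yields exactly the four listed cases, which are therefore exhaustive and pairwise disjoint.

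I do not expect a serious obstacle here; the care needed is conceptual rather than computational. One must verify that the two conventions fixed by the algorithm --- the proper range \eqref{proper} and the termination test $|b_1| \le |b_2|$ --- are exactly what make the boundary value $m = -1/2$ and the ordering $\le$ the correct discriminants, and that no fifth degenerate configuration slips through. The payoff is that these four cases will later index the four announced rewrite systems, so the lemma is really setting up that correspondence.
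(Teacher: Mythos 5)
Your proposal is correct: the lemma follows exactly as you say, from the two defining properties of the Gaussian output --- properness $-1/2 \le m < 1/2$ from \eqref{proper} and the termination condition $|b_1| \le |b_2|$ --- whose product gives the four cases. The paper in fact states this lemma without any proof, treating it as immediate from these definitions, so your argument simply makes explicit the same reasoning the authors left implicit.
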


Now consider a word $\omega$ on $\Sigma$, that is, an
element of
$\Sigma^*$, a unimodular matrix $U$ and a reduced basis
$R$. We give, in the
following subsections, sets of rewriting rules depending on
the form of
$R$, such that
any word in which none of these rewriting rules can be
applied is
Gaussian. Since
the results of these subsections are very similar, we only
give detailed proofs for Subsection~\ref{Ss:dsd} in the 
appendix.

\subsection{The basis $R$ is such that $|b_1|<|b_2|$ and
$m\ne-1/2$}\label{Ss:dsd}
Say that $\omega$ is a {\em reduced word} or a {\em
reduced
decomposition of the
unimodular matrix $U$}, if $\omega$ is
a decomposition
of $U$ in
which no one of the rewriting rules of Theorems~\ref{rules}
can be applied.

\pn Thus, Theorem~\ref{rules} shows that, for a given
reduced basis $R$
such that $|b_1|<|b_2|$ and $m\ne-1/2$, the Gaussian
decomposition and a
reduced
decomposition of a unimodular matrix are the same, which
implies that this
decomposition is unique.

\begin{theorem}\label{rules}
Let $\omega_1$ be any decomposition of $U$ in terms of the
family of generators $\{S,T\}$. The Gaussian
decomposition of $U$ is
obtained
from $\omega_1$ by applying repeatedly the following set of
rules:
 \begin{equation}\label{1}
 S^2 \longrightarrow 1;
 \end{equation}
 \begin{equation}\label{2}
 T^{x}T^{y}\longrightarrow T^{x+y};
 \end{equation}
 \begin{equation}\label{3}
 \forall x \in \Z_-^*, \qquad\qquad
 ST^{2}ST^{x}\mbox{\ }\longrightarrow
 TST^{-2}ST^{x+1};
 \end{equation}
 \begin{equation}\label{4}
 \forall x \in \Z_+^*,\qquad\qquad ST^{-2}ST^{x}
\longrightarrow
 T^{-1}ST^{2}ST^{x-1};
 \end{equation}
 \begin{equation}\label{5}
 \forall x \in \Z^*, \forall k\in \Z_+,\
 ST^{}ST^{x}\prod_{i=k}^1 ST^{y_{i}}
 \longrightarrow
 T^{}ST^{-x-1}\prod_{i=k}^1 ST^{-y_{i}};
 \end{equation}
 \begin{equation}\label{6}
 \forall x \in \Z^*, \forall k\in \Z_+,\
 ST^{-1}ST^{x}\prod_{i=k}^1 ST^{y_{i}}\longrightarrow
 T^{-1}ST^{-x+1}\prod_{i=k}^1 ST^{-y_{i}}.
 \end{equation}
\end{theorem}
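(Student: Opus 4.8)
The plan is to read Theorem~\ref{rules} as a normalization statement for the string rewriting system \eqref{1}--\eqref{6} on words over $\Sigma=\{S,T,T^{-1}\}$, and to establish it in three stages: soundness of each rule, termination, and the identification of the irreducible words with the Gaussian decompositions. Throughout I work in the case fixed by the preceding Lemma, namely $|b_1|<|b_2|$ and $m\ne -1/2$.

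First I would check \emph{soundness}: that each rule $L\longrightarrow R$ leaves unchanged the reduction data attached to the word, i.e. replaces a decomposition of $U$ relative to $R$ by another decomposition of the same $U$ relative to the same $R$. Rules \eqref{1} and \eqref{2} are just $S^2=1$ and the additivity $T^xT^y=T^{x+y}$, while \eqref{3} and \eqref{4} are direct $2\times2$ computations. For \eqref{5} and \eqref{6} the key structural observation is that the simultaneous sign flip $y_i\mapsto -y_i$ of the whole tail $\prod_{i=k}^{1}ST^{y_i}$ is, up to a scalar, conjugation of that tail by $D=\left(\begin{smallmatrix}1&0\\0&-1\end{smallmatrix}\right)$, since $DT^yD=T^{-y}$ and $DSD=-S$; this reduces each of \eqref{5}, \eqref{6} to a finite ``head'' identity together with bookkeeping of the resulting $\pm1$ coming from the parity of the number of swaps. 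One must be careful here that a basis determines its matrix only up to the signs of its rows, so the natural ambient object is $GL_2(\Z)$ modulo the group of row sign changes, and it is in this quotient that the rules preserve $R$.

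Next I would prove \emph{termination}. After exhausting \eqref{1} and \eqref{2}, every word is an alternating block form $T^{a_0}ST^{a_1}S\cdots ST^{a_n}$, and \eqref{3}--\eqref{6} act on the integer sequence $(a_i)$: rules \eqref{3} and \eqref{4} turn a block $\pm2$ into $\mp2$ while pushing an adjacent exponent one unit toward $0$, and \eqref{5}, \eqref{6} rewrite the leading two exponents and flip the signs of all later ones. I would exhibit a well-founded measure that strictly decreases under every rule: the candidate is a lexicographic pair whose first component controls the ``head'' instabilities removed by \eqref{5}, \eqref{6}, and whose second component is a positive functional of the exponent vector (for instance a weighted $\sum_i|a_i|$) handling \eqref{3}, \eqref{4}. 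Checking that the global sign flip in \eqref{5}, \eqref{6} does not increase this measure is the delicate point, and I expect termination to be the main obstacle, precisely because these two rules are not length decreasing and touch the entire tail of the word.

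Finally I would characterize the \emph{irreducible} words and match them with the Gaussian decompositions. An irreducible word is one in alternating block form whose exponent sequence contains none of the left patterns of \eqref{3}--\eqref{6}. Using the Lemma's geometric description of a reduced $R$ with $|b_1|<|b_2|$ and $m\ne-1/2$, I would show that this combinatorial avoidance is equivalent to the constraints the Gaussian algorithm enforces step by step, namely properness \eqref{proper} after each translation and the ordering $|b_1|\le|b_2|$ governing each swap; translating those two analytic conditions into admissible ranges for consecutive exponents is what pins down ``no rule applies $\iff$ Gaussian''. Since by termination every decomposition of $U$ rewrites to an irreducible word, and every irreducible word is then a Gaussian decomposition of $U$ relative to $R$, the uniqueness of that decomposition recalled before the statement forces the normal form to be \emph{the} Gaussian decomposition, and confluence of the system follows at once.
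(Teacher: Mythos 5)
Your plan is essentially the paper's own proof: its appendix establishes exactly your three stages --- termination (Lemma~\ref{L:rptGL}), ``irreducible $\Rightarrow$ Gaussian'' via preservation of properness and of the norm ordering at each step (Lemma~\ref{L:redgaus} and Corollary~\ref{C:redgaus}), and soundness of rules \eqref{3}--\eqref{6} up to row-sign changes using your same diagonal-conjugation identity $\left(\begin{smallmatrix}1&0\\0&-1\end{smallmatrix}\right)ST^{y}\left(\begin{smallmatrix}1&0\\0&-1\end{smallmatrix}\right)=T^{-y}\cdot(-S)\cdot(\pm1)$-bookkeeping (Lemmas~\ref{L:redST2STx} and~\ref{L:redSTSTx}) --- and it closes with the same uniqueness/determinism argument you give. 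The one point you flag as delicate is resolved in the paper by taking the lexicographic pair (word length $k$, $d(\omega)=\sum_{i\in S_1\cup S_2}i$, where $S_1$, $S_2$ are the positions carrying exponents $\pm1$, resp.\ sign-alternating $\pm2$): rules \eqref{5} and \eqref{6} each delete an $S$, so the global tail sign-flip is harmless for the first component, while your candidate second component $\sum_i|a_i|$ would need position-dependent weights, since rule \eqref{3} can leave the unweighted sum unchanged (e.g.\ exponents $(a,2,x)\mapsto(a+1,-2,x+1)$ with $a\geq 0$, $x<0$).
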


\pn The trivial rules \eqref{1} and \eqref{2} have to be
applied whenever
possible. So any word $\omega_1$ on the alphabet $\Sigma$
can trivially be written as
 \begin{equation}\label{bondepart}
 T^{x_{k+1}}\prod_{i=k}^1 ST^{x_{i}},
 \end{equation}
with $x_i\in \Z^*$ for $2\leq i\leq k$ and $(x_1,x_{k+1})
\in \Z^2$. The
integer
$k$ is called
{\em the length\footnote{The length of a word has of course
to be
distinguished
with what we call the length of a unimodular matrix, that
is the maximum of
absolute values of its coefficients.} of $\omega_1$.}
Notice that usually the length of a word as in
\eqref{longueur} is $n$, which would corresponds here to
$2k+1$. Here the
length
is $k$, which correspond to the number of iterations of
the algorithm minus 1.

The proof of Theorem~\ref{rules} is given in appendix. It
consists in the following lemmas:
\begin{itemize}
\item Lemma~\ref{L:rptGL}, where we prove that the
rewriting process terminates;
\item Lemma~\ref{L:redgaus}, where we prove that any
reduced word is also Gaussian;
\item Lemmas~\ref{L:redST2STx} and~\ref{L:redSTSTx}, where
we prove that the use of a nontrivial rewriting rules
changes a base of the lattice in another base of the same
lattice.
\end{itemize}
Proofs of Theorems~\ref{rules2},~\ref{rules3} and~\ref
{rules4}, which are given in the following subsections, are
very similar.

\subsection{The basis $R$ is such that $|b_1|=|b_2|$ and
$m\ne-1/2$}

\begin{theorem}\label{rules2}
Let $\omega_1$ be any decomposition of $U$ in terms of the
family of generators $\{S,T\}$. The Gaussian
decomposition of $U$ is
obtained
from $\omega_1$ by applying repeatedly the set of rules
\eqref{1} to
\eqref{6} of Theorem~\ref{rules}, together with the 
following
rules:
 \begin{equation}\label{53}
 \forall x \in \Z^*, \forall k\in \Z_+,\
 ST^{}ST^{x}\left(\prod_{i=k}^1 ST^{y_{i}}\right)
 \longrightarrow
 T^{}ST^{-x-1}\left(\prod_{i=k}^1 ST^{-y_{i}}\right)T;
 \end{equation}
 \begin{equation}\label{63}
 \forall x \in \Z^*, \forall k\in \Z_+,\
 ST^{-1}ST^{x}\left(\prod_{i=k}^1 ST^{y_{i}}\right)
\longrightarrow
 T^{-1}ST^{-x+1}\left(\prod_{i=k}^1 ST^{-y_{i}}\right)T;
 \end{equation}
 \begin{equation}\label{83}
 \text{ if } \omega_1=\omega\,S, \text{ then }
 \omega S \longrightarrow \omega ;
 \end{equation}
 \begin{equation}\label{93}
 \text{ if } \omega_1=\omega\,ST, \text{ then }
 \omega ST\longrightarrow \omega TST^{-1}.
 \end{equation}
\end{theorem}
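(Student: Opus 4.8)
The plan is to mirror, step for step, the three-part proof of Theorem~\ref{rules} indicated above: one shows (i) that the enlarged rewriting system terminates, (ii) that every word to which no rule applies is the Gaussian decomposition of $U$ relative to $R$, and (iii) that each nontrivial rule transforms a decomposition of the input $B=UR$ into a decomposition of the same $B$ relative to a (possibly different) reduced basis of the same lattice. The single new phenomenon governing all four additional rules is the extra isometry carried by an equal-length reduced basis. When $|b_1|=|b_2|$ the Gram matrix $G=RR^{T}$ satisfies $SGS^{T}=G$, so $S$ is a lattice automorphism and both $R$ and $SR$ are reduced bases of the very same lattice. Consequently $B=UR=(US)(SR)$, and at its final step the algorithm may legitimately close with or without a swap; the rules \eqref{53}, \eqref{63}, \eqref{83} and \eqref{93} are precisely the bookkeeping that removes this redundancy.

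For validity (step (iii)) I would treat the new rules exactly as Lemmas~\ref{L:redST2STx} and~\ref{L:redSTSTx} treat \eqref{5} and \eqref{6}, namely by checking that applying a rule turns one base into another base of the same lattice. Rules \eqref{53} and \eqref{63} are the verbatim analogues of \eqref{5} and \eqref{6}, the trailing $T$ being the factor produced when one rewrites the flipped block against the basis $SR$ rather than $R$; using $S^{2}=1$ and $SGS^{T}=G$ one verifies that the words before and after the rule decompose the same $B$ relative to $R$ and to $SR$ respectively. The boundary rules are simpler: in the standard form \eqref{bondepart}, rule \eqref{83} deletes a terminal $S$ (the innermost swap, vacuous because $|b_1|=|b_2|$), and rule \eqref{93} rewrites a terminal $ST$ as $TST^{-1}$, which is the same identity read against $SR$. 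Each new rule therefore keeps $B$ fixed and only toggles which of the two equivalent reduced bases is in use.

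For termination (step (i)) I would reuse the well-founded measure of Lemma~\ref{L:rptGL}. After absorbing their leading $T$ into the neighbouring block via \eqref{2}, rules \eqref{53} and \eqref{63}, like \eqref{5} and \eqref{6}, delete one $S$ and flip the signs of all following exponents, so they strictly decrease the length $k$; the extra trailing $T$ alters only the innermost exponent $x_1$ and leaves $k$ untouched. Rule \eqref{83} decreases $k$ outright. The only length-preserving newcomer is \eqref{93}, which sends the boundary data $(x_2,x_1)=(x_2,1)$ to $(x_2+1,-1)$; since it leaves the last exponent equal to $-1\neq 1$ it cannot immediately reapply, and it is controlled by the same lexicographic component that already tames \eqref{3} and \eqref{4}. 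Hence the enlarged system still admits the measure of Lemma~\ref{L:rptGL} and terminates.

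The real work, and the main obstacle, is step (ii): showing that a word on which none of \eqref{1}--\eqref{6}, \eqref{53}, \eqref{63}, \eqref{83}, \eqref{93} acts is exactly the Gaussian decomposition for a basis with $|b_1|=|b_2|$ and $m\neq-1/2$. Irreducibility under \eqref{1}--\eqref{6} already forces, by Lemma~\ref{L:redgaus}, the interior of the word into Gaussian form. What remains is the right end, where the equal-length degeneracy lives: I would argue that forbidding a terminal $S$ (rule \eqref{83}) and a terminal $ST$ (rule \eqref{93}), together with the flip-with-trailing-$T$ normalisation of \eqref{53} and \eqref{63}, selects a single representative from each $\{R,SR\}$-orbit, and that this representative coincides with the one the algorithm actually outputs when its last comparison yields equality rather than strict inequality. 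Pinning this convention down --- verifying that the normalised innermost block is the one the Gaussian algorithm produces and not its $S$-twin --- is the delicate point; once it is settled, uniqueness of the Gaussian decomposition for this class of bases follows exactly as in Subsection~\ref{Ss:dsd}.
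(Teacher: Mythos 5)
Your three--part plan (termination, validity of each rule as a change of reduced reference basis, then ``reduced $\Rightarrow$ Gaussian'') is exactly what the paper prescribes: it gives no separate argument for this theorem beyond the remark that one repeats Lemmas~\ref{L:rptGL}, \ref{L:redgaus}, \ref{L:redST2STx} and~\ref{L:redSTSTx}, and your key observation --- that for $|b_1|=|b_2|$ both $R$ and $SR$ are reduced bases of the same lattice --- is the right new ingredient, and it does account for the boundary rules~\eqref{83} and~\eqref{93}. The genuine gap is in your step (iii) for rules~\eqref{53} and~\eqref{63}: the verification you assert (``one verifies that the words before and after the rule decompose the same $B$ relative to $R$ and to $SR$ respectively'') cannot be carried out, and this is precisely the point where this theorem differs from Theorem~\ref{rules}.

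Concretely, the identity behind \eqref{5} is $ST\,ST^{x}\,\omega = T\,ST^{-x-1}\,\omega^{-}\,\beta D$ with $D=\begin{pmatrix}-1&0\\0&1\end{pmatrix}$ and $\beta=\pm1$ (Lemma~\ref{L:redSTSTx}). For the trailing--$T$ rule \eqref{53} to turn a decomposition of the input $B$ over $R$ into a decomposition of the same $B$ over some reduced basis $R''$, one would need $T\,R''=\beta D\,R$, i.e.\ $R''=\beta\, T^{-1}D\,R$, whose second row is $\pm(b_1+b_2)$; its $m$--parameter is $-(1+m)\notin[-1/2,1/2)$, so $R''$ is not proper, hence not a reduced basis, and it is certainly not $SR$ (reading the right--hand side against $SR$ would force $TS=\beta D$, which is false). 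The failure is visible on the smallest example: take $R$ with $|b_1|=|b_2|$ and $m=1/4$ and run the algorithm on $ST\,ST^{3}\,R$; the execution is translate by $1$, swap, translate by $-4$, ending at the reduced basis $(-b_1,b_2)$, so the Gaussian word is $T\,ST^{-4}$ --- i.e.\ rule \eqref{5} without the trailing $T$ describes the execution, while the word $T\,ST^{-4}\,T$ produced by \eqref{53} does not. Any correct proof must therefore either delimit when \eqref{53}/\eqref{63} rather than \eqref{5}/\eqref{6} are to be applied --- they have identical left--hand sides and different right--hand sides, so as stated the enlarged system cannot have unique normal forms --- or justify the trailing $T$ by some mechanism other than the $\{R,SR\}$ symmetry. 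Your proposal does neither, and it also defers exactly this question by leaving step (ii) (which representative of the swap--orbit the algorithm actually outputs) as an unresolved ``delicate point''; that is the step where the theorem's new content lives, so the proof is not established.
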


\subsection{The basis $R$ is such that $|b_1|<|b_2|$ and
$m=-1/2$}

\begin{theorem}\label{rules3}
Let $R$ be a reduced basis and let $U$ be a unimodular
matrix, \emph{i.e.}, an
element of $GL_2(\Z)$. Let $\omega_1$ be any decomposition
of $U$ in terms
of the
family of generators $\{S,T\}$. The Gaussian
decomposition of $U$ is
obtained
from $\omega_1$ by applying repeatedly the rules \eqref{1}
to \eqref{4} of
Theorem~\ref{rules} until $\omega_1$ is reduced in the
sense of Theorem~\ref{rules}. Then, if we have
$\omega_1=\omega\,ST^2\,S$,
the following rule applies:
 \begin{equation}\label{7}
 \omega ST^2\,S \longrightarrow \omega T\,ST^{-2}\,ST,
 \end{equation}
and the rewriting process is over.
\end{theorem}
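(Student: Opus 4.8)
The plan is to follow the same three-part scheme announced for Theorem~\ref{rules}, adapting each ingredient to the boundary geometry $m=-1/2$, $|b_1|<|b_2|$. First I would settle termination. The rules \eqref{1}--\eqref{4} terminate by the argument of Lemma~\ref{L:rptGL}, so it only remains to observe that \eqref{7} cannot spoil termination: it matches the purely terminal pattern $\omega\,ST^2S$ and, by the stipulation that ``the rewriting process is over'', is applied exactly once, at the very end. This stipulation is essential rather than cosmetic. The output $\omega\,TST^{-2}ST$ ends in $ST^{-2}ST^{1}$, to which rule \eqref{4} (with the admissible value $x=1$) would otherwise apply, producing $\omega\,TT^{-1}ST^{2}S=\omega\,ST^{2}S$ after \eqref{2}; declaring \eqref{7} terminal is precisely what breaks this two-cycle, and I would record this explicitly so that the normal form is well defined.

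Second I would verify that \eqref{7} changes a basis into another basis of the same lattice, which is the analogue here of Lemmas~\ref{L:redST2STx} and~\ref{L:redSTSTx}. A direct computation with $S$ and $T$ from \eqref{ST} gives $ST^{2}S=\begin{pmatrix}1&2\\0&1\end{pmatrix}$ and $TST^{-2}ST=\begin{pmatrix}-1&-2\\0&-1\end{pmatrix}$, so that $ST^{2}S=-\,TST^{-2}ST$. Since $-I$ sends any basis $(b_1,b_2)$ to $(-b_1,-b_2)$, which generates the same lattice and has the same length and the same value of $m$, the two sides of \eqref{7} decompose the same unimodular matrix up to the lattice-irrelevant factor $-I$; hence applying \eqref{7} keeps us decomposing the same $U$ relative to the same reduced basis $R$, and the rule is sound.

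The heart of the argument, and the step I expect to be the main obstacle, is the analogue of Lemma~\ref{L:redgaus}: showing that a word reduced in the sense of Theorem~\ref{rules} and \emph{not} ending in $ST^2S$ is exactly the word output by the Gaussian algorithm when $R$ satisfies $m=-1/2$ and $|b_1|<|b_2|$. The difficulty is concentrated entirely at the last step of the execution. When $m=-1/2$ one computes $|b_2+b_1|=|b_2|$, so at the boundary the basis $(b_1,b_2)$ and its twin $(b_1,b_2+b_1)$ have equal length while the twin has $m=+1/2$; the final integer translation is therefore a priori ambiguous, and the convention $-1/2\le m<1/2$ of \eqref{proper} resolves it in favour of the representative encoded by the suffix $TST^{-2}ST$ rather than $ST^2S$. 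I would thus carry out the last-step case analysis to prove that $ST^2S$ is the unique terminal configuration created by this twin ambiguity, that it corresponds to the algorithm having selected the excluded representative, and that once it is rewritten by \eqref{7} no rule of Theorem~\ref{rules} applies any longer, so the word coincides with the Gaussian decomposition. Because every other potential obstruction has already been removed by \eqref{1}--\eqref{4} exactly as in the proof of Theorem~\ref{rules}, this single terminal correction is all that is needed, which is precisely why the list of extra rules for this case collapses to \eqref{7} alone.
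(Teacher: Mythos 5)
Your three-part plan (termination, soundness of the new rule, reduced-implies-Gaussian) is exactly the scheme the paper intends: its entire proof of Theorem~\ref{rules3} is the remark that it is ``very similar'' to that of Theorem~\ref{rules}, i.e.\ to the analogues of Lemmas~\ref{L:rptGL}, \ref{L:redgaus}, \ref{L:redST2STx} and~\ref{L:redSTSTx}. Your computation $ST^2S=-T\,ST^{-2}\,ST$ is precisely the $x=0$ instance of Lemma~\ref{L:redST2STx}, your twin-basis diagnosis ($|b_1+b_2|=|b_2|$ and $m$ jumping to $+1/2$ when $m=-1/2$) is the right explanation of where rule~\eqref{7} comes from, and the two-cycle between \eqref{7} and \eqref{4} is a correct observation that the paper leaves implicit.

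There is, however, a genuine gap in your closing step, namely the claim that ``every other potential obstruction has already been removed by \eqref{1}--\eqref{4}''. In the proof of Theorem~\ref{rules}, the obstructions $ST^{\pm1}ST^{x}\cdots$ are removed by rules \eqref{5} and \eqref{6}, not by \eqref{1}--\eqref{4}, and they must be removed here too: for \emph{any} reduced basis, including $m=-1/2$, an intermediate exponent of absolute value $1$ cannot occur in a Gaussian word, because if the pair $(c_1,c_2)$ entering a step satisfies the properness inequality $|(c_1,c_2)|\le\tfrac12|c_2|^2$ and $|c_1|<|c_2|$, then a quotient $x$ with $|x|\le1$ gives $|c_2-xc_1|\ge|c_1|$, hence no swap and termination. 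Consequently a word such as $ST^{3}\,ST\,ST^{3}$ is a fixed point of the process you describe (and of the theorem read literally) yet is not Gaussian, so your argument fails on it. Closing this gap is exactly where the case $m=-1/2$ differs from Theorem~\ref{rules}: rules \eqref{5},\eqref{6} cannot simply be invoked, since Lemma~\ref{L:redSTSTx} is explicitly restricted to $m\ne-1/2$ --- its proof trades the reference basis $R$ for $\pm\left(\begin{smallmatrix}-1&0\\0&1\end{smallmatrix}\right)R$, whose Gram coefficient is $+1/2$, i.e.\ not proper. The repair is the trailing-$T$ variant that the paper uses in Theorems~\ref{rules2} and~\ref{rules4}, rules \eqref{53} and \eqref{63}: one checks that $\left(\begin{smallmatrix}-1&0\\0&1\end{smallmatrix}\right)R=T\cdot(-b_1,\,b_1+b_2)$ and that $(-b_1,\,b_1+b_2)$ is again a proper reduced basis with the same lengths and $m=-1/2$, so the rewritten word decomposes the input with respect to a legitimate reduced basis. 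Your proof needs this ingredient; relatedly, your final sentence ``no rule of Theorem~\ref{rules} applies any longer'' contradicts your own (correct) two-cycle observation --- the process stops by stipulation after \eqref{7}, not because nothing matches.
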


\subsection{The basis $R$ is such that $|b_1|=|b_2|$ and
$m=-1/2$}

\begin{theorem}\label{rules4}
Let $R$ be a reduced basis and let $U$ be a unimodular
matrix, \emph{i.e.}, an
element of $GL_2(\Z)$. Let $\omega_1$ be any decomposition
of $U$ in terms
of the
family of generators $\{S,T\}$. The Gaussian
decomposition of $U$ is
obtained
from $\omega_1$ by applying repeatedly Rules~\eqref{1}
to~\eqref{4} of
Theorem~\ref{rules}, together with Rules~\eqref{53} 
and~\eqref{63} and the
following set of
rules:

 \begin{equation}\label{101}
 \text{ if }\omega_1=\omega\,S,\text{ then }
 \omega S \longrightarrow \omega ;
 \end{equation}
 \begin{equation}\label{102}
 \text{ if }\omega_1=\omega\,ST,\text{ then }
 \omega ST\longrightarrow \omega T;
 \end{equation}
 \begin{equation}\label{103}
 \text{ if }\omega_1=\omega\,ST^2,\text{ then }
 \omega ST^2\longrightarrow \omega TST^{-1}.
 \end{equation}
\end{theorem}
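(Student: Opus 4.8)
The plan is to reproduce the four-part scheme used for Theorem~\ref{rules} (termination, ``reduced $\Rightarrow$ Gaussian'', and preservation of the lattice under each rule), adapting every step to the present basis. The first observation is that the fourth case, $|b_1|=|b_2|$ together with $m=-1/2$, is exactly the hexagonal lattice, in which $\mathbf{b}_1$ and $\mathbf{b}_2$ have equal length and make an angle of $2\pi/3$; its point group has order $12$, against the order $2$ stabilizer $\{\pm I\}$ of the generic case of Theorem~\ref{rules}. All the additional rules merely record this larger symmetry. For the analogue of Lemma~\ref{L:rptGL} I would note that the trivial rules \eqref{1}, \eqref{2} and the bulk rules \eqref{3}, \eqref{4}, \eqref{53}, \eqref{63} already terminate by the arguments of Theorems~\ref{rules} and~\ref{rules2}, while the three new rules \eqref{101}, \eqref{102}, \eqref{103} are guarded by the shape of the suffix of $\omega_1$ and rewrite only that suffix; it then suffices to show that adjoining such terminal rules to a terminating system preserves termination, using the same well-founded measure refined by a lexicographic comparison of the rightmost block.

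The heart of the proof is the analogue of Lemmas~\ref{L:redST2STx} and~\ref{L:redSTSTx}: each rule must send a decomposition of $U$ to another decomposition of the same matrix modulo the stabilizer of $R$, which here is $\{V\in GL_2(\Z):VGV^{T}=G\}$ for the Gram matrix $G$ of the hexagonal basis. For the rules inherited from Theorems~\ref{rules} and~\ref{rules2} this is already known. For the three terminal rules I would exhibit the correcting symmetry explicitly: rule \eqref{101} differs from the identity by the right factor $S$ (the swap available since $|b_1|=|b_2|$, exactly as rule \eqref{83} of Theorem~\ref{rules2}), rule \eqref{102} by $T^{-1}ST$, and rule \eqref{103} by $TST^{-1}ST^{2}$. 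A direct computation shows that each of these three matrices $V$ satisfies $VGV^{T}=G$, so that it belongs to the point group of $R$; this is the precise place where the swap symmetry of Theorem~\ref{rules2} and the $m\mapsto -m$ reflection of Theorem~\ref{rules3} combine into the full order-$12$ group.

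It remains to prove the analogue of Lemma~\ref{L:redgaus}, that a word on which none of the rules can be applied is the Gaussian decomposition. The bulk rules drive the interior of the word into the canonical form \eqref{bondepart} with the same sign constraints isolated in the proof of Theorem~\ref{rules}, and the three terminal rules then pin down the rightmost block by forbidding the suffixes $S$, $ST$ and $ST^{2}$, thereby selecting the single representative that the algorithm returns on the hexagonal lattice. Reading this normal form step by step as a product of factors $ST^{x}$ identifies it with the output of the Gaussian algorithm.

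I expect the main obstacle to be this last identification, namely confluence of the enlarged system. Because the order-$12$ stabilizer produces the greatest number of coincidences among the four cases, one must rule out competing rewriting orders that would leave two different irreducible words; in particular a terminal rule such as \eqref{103}, which produces the suffix $TST^{-1}$, can reopen the pattern matched by the propagation rule \eqref{63}, and one has to check that all such critical pairs resolve to a common word. Verifying the absence of these overlaps, rather than any individual matrix identity, is the delicate part of the argument.
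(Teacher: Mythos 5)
Your plan reproduces the paper's own architecture: the paper proves Theorem~\ref{rules4} only by reference, declaring its proof ``very similar'' to that of Theorem~\ref{rules}, whose appendix proof consists of a termination lemma (Lemma~\ref{L:rptGL}), a ``reduced implies Gaussian'' lemma (Lemma~\ref{L:redgaus}), and lemmas showing that each rewrite replaces the basis by another basis of the same lattice (Lemmas~\ref{L:redST2STx} and~\ref{L:redSTSTx}); your three steps are exactly these, and your matrix computations are correct: $S$, $T^{-1}ST$ and $TST^{-1}ST^{2}$ do fix the hexagonal Gram matrix. Where you genuinely differ, and gain something: the paper's preservation lemmas exhibit the corrections as signed matrices such as $\pm\begin{pmatrix}-1&0\\0&1\end{pmatrix}$, which in the generic case of Theorem~\ref{rules} do \emph{not} fix the Gram matrix (they send $m$ to $-m$); your point-group formulation explains structurally why the case $|b_1|=|b_2|$, $m=-1/2$ needs the extra terminal rules and the $T$-corrected rules \eqref{53}, \eqref{63} --- flipping the sign of $m$ would destroy properness there, so every correction must lie in the order-$12$ stabilizer. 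Where your proposal goes astray: your last paragraph makes confluence (resolution of critical pairs) the outstanding delicate step, but in the paper's scheme --- and already in yours --- it comes for free. The Gaussian algorithm is deterministic, so the input basis $\omega_1 R$ has a unique execution sequence; once every irreducible word reachable from $\omega_1$ is shown to be Gaussian for that same input (your third paragraph combined with the preservation step), all irreducible words reachable from $\omega_1$ coincide with that unique execution word. This is precisely how Corollaries~\ref{C:redgaus} and~\ref{C:GLconcl} conclude; no overlap analysis is needed, and the step you single out as the hardest part of the argument is not part of the argument at all. The only place where your write-up is weaker than it should be is termination: ``adjoining terminal rules to a terminating system preserves termination'' is false as a general principle (rule \eqref{103} can, as you note, re-enable \eqref{2} and then \eqref{63}), so the refined well-founded measure you allude to must actually be exhibited --- though to be fair, the paper itself supplies no more detail for this case.
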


\section{The length of a unimodular matrix with respect to
\\its Gaussian decomposition}\label{S:increase}

\pn Let $B=(b_1,b_2)$ be a basis. The \emph{length} of $B$,
denoted by
$\length(B)$, is
the sum of the squares of the norms of its vectors, that is,
$\length(B)=|b_1|^2+|b_2|^2$.

The easy but tedious proof of the following theorem is
given in the appendix, see Lemmas~\ref{L:sizemin1},~\ref
{L:sizemin2},~\ref{L:sizemin3},~\ref{L:sizemin4} and~\ref
{L:sizemin5}.

\begin{theorem}\label{T:basemin}
Let $R=(b_1,b_2)$ be a reduced basis, let $k$ be a positive
integer, and let
$x_1$, \dots, $x_{k+1}$ be integers such that the word
$\omega=T^{x_{k+1}}\prod_{i=k}^1 ST^{x_{i}}$ is Gaussian.
Then
the following properties hold:
\begin{enumerate}
\item if $|b_1|<|b_2|$ and $m\geq 0$ then
$\length(\omega R)\geq \length((ST^{-2})^{k-1}\,S\,R)$;
\item if $|b_1|<|b_2|$ and $-1/2<m<0$ then
$\length(\omega R)\geq \length((ST^{2})^{k-1}\,S\,R)$;
\item if $|b_1|<|b_2|$ and $m=-1/2$ then
$\length(\omega R)\geq\length((ST^{-2})^{k-1}\,ST\,R)$;
\item if $|b_1|=|b_2|$ then
$\length(\omega R)\geq\length((ST^{-2})^{k-1}\,ST^{-1}\,R)$.
\end{enumerate}
\end{theorem}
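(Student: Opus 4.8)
The plan is to reduce the statement to a scalar optimisation governed by a Fibonacci-type vector recurrence. First I would record how a single block acts on a basis: writing $R=(b_1,b_2)$ with rows $b_1,b_2$, one computes $ST^x(b_1,b_2)=(b_2+xb_1,\,b_1)$. Introducing the vectors $p_{-1}=b_2$, $p_0=b_1$ and $p_i=x_ip_{i-1}+p_{i-2}$ for $1\le i\le k$, an immediate induction gives $ST^{x_i}\cdots ST^{x_1}R=(p_i,p_{i-1})$, so that
\[
\omega R=\bigl(p_k,\;p_{k-1}+x_{k+1}p_k\bigr),\qquad
\length(\omega R)=|p_k|^2+\bigl|p_{k-1}+x_{k+1}p_k\bigr|^2 .
\]
Since every block $ST^x\in GL_2(\Z)$ has determinant $-1$, the area $\Delta:=|\det R|=|p_i\wedge p_{i-1}|$ is the same for all $i$; this is the one conserved quantity controlling the whole computation.

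Second, I would set $a_i=|p_i|^2$ and $c_i=(p_i,p_{i-1})$ and derive the coupled recurrences $c_i=x_ia_{i-1}+c_{i-1}$ and $a_i=x_i^2a_{i-1}+2x_ic_{i-1}+a_{i-2}$, together with the constraint $a_ia_{i-1}-c_i^2=\Delta^2$. Rewriting $\length(\omega R)=a_{k-1}+(1+x_{k+1}^2)a_k+2x_{k+1}c_k$ shows that, once the endpoints are optimised, the length is essentially an increasing function of $a_k$, so the core task is to minimise $a_k$ over all Gaussian words $\omega$ of length $k$.

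Third, this is where Theorems~\ref{rules}--\ref{rules4} enter: a word is Gaussian exactly when none of the rewriting rules applies, and rules~\eqref{3}--\eqref{6} force $|x_i|\ge 2$ for every interior index $2\le i\le k$ and constrain their signs. By induction on $k$ I would show that, under these constraints, each step $a_i=x_i^2a_{i-1}+2x_ic_{i-1}+a_{i-2}$ is made smallest by taking $|x_i|=2$ and the sign that renders the cross term $2x_ic_{i-1}$ as negative as reducedness allows. The four cases of the theorem are precisely the four types of reduced basis isolated in the Lemma preceding Theorem~\ref{rules}; they select $x_i=+2$ exactly in the regime $-1/2<m<0$ (case~(2)) and $x_i=-2$ in all the others. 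The endpoints $x_1$ and $x_{k+1}$ are then optimised separately, and it is this boundary analysis that produces $x_{k+1}=0$ and the trailing factor $S$, $ST$, or $ST^{-1}$ recorded in each case.

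Finally I would verify that each candidate word $(ST^{-2})^{k-1}S$, $(ST^{2})^{k-1}S$, $(ST^{-2})^{k-1}ST$ and $(ST^{-2})^{k-1}ST^{-1}$ is itself Gaussian of length $k$, so that the infimum is attained and the inequalities are sharp. The main obstacle is the third step: the sign of the cross term $c_{i-1}$ is not free but is propagated by $c_i=x_ia_{i-1}+c_{i-1}$, so one must track, case by case and along the whole recurrence, that the reducedness-imposed pattern keeps each intermediate basis in the regime where $|x_i|=2$ with the prescribed sign is genuinely optimal. This bookkeeping, rather than any single inequality, is the tedious part, which is why the proof is split across Lemmas~\ref{L:sizemin1}--\ref{L:sizemin5}.
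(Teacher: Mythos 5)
Your first two steps are sound and genuinely different in flavour from the paper's argument: the continuant recurrence $p_i=x_ip_{i-1}+p_{i-2}$, the scalars $a_i=|p_i|^2$, $c_i=(p_i,p_{i-1})$ with $c_i=x_ia_{i-1}+c_{i-1}$, $a_i=x_i^2a_{i-1}+2x_ic_{i-1}+a_{i-2}$, and the determinant invariant correctly encode the problem, and your reading of the Gaussian constraints ($|x_i|\ge 2$ for $2\le i\le k$, sign coherence forced by rules \eqref{3}--\eqref{4}) is accurate. The paper instead argues outside-in: it replaces one block at a time by $ST^{\pm 2}$, keeping the inner part of the word fixed, and compares lengths of two \emph{different} bases under the \emph{same} standardized outer word $(ST^{\pm2})^j$, whose explicit symmetric coefficients (Lemma~\ref{L:sizemin1}) make the comparison a direct computation in the Gram--Schmidt frame of the intermediate proper basis (Lemmas~\ref{L:sizemin2}--\ref{L:sizemin5}). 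That structural choice is what lets the paper's induction close, and it is exactly what your plan lacks.

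The gap is in your third step. You argue that each $a_i$ is minimized \emph{given the state} $(a_{i-1},a_{i-2},c_{i-1})$ by taking $|x_i|=2$ with sign opposite to $c_{i-1}$; that is true, but it does not prove the theorem, because the choice of $x_i$ also changes the state handed to the next step, and it changes it in the unfavourable direction for your induction. Concretely, with the greedy choice one gets $a_i=4a_{i-1}-4|c_{i-1}|+a_{i-2}$ and $|c_i|=2a_{i-1}-|c_{i-1}|$, while $|x_i|=3$ with the good sign gives a larger $a_i'=9a_{i-1}-6|c_{i-1}|+a_{i-2}$ but also a larger $|c_i'|=3a_{i-1}-|c_{i-1}|$; since a large $|c_i|$ is precisely what makes the next cross term more negative, there is a genuine trade-off, and per-step optimality does not propagate. (One can check, e.g., that greedy-then-greedy yields $9a_{i-1}-12|c_{i-1}|+4a_{i-2}$ against $25a_{i-1}-20|c_{i-1}|+4a_{i-2}$ for the alternative, which is indeed larger because properness gives $|c_{i-1}|\le a_{i-1}/2$ --- but this is already a two-step domination statement, and it must be propagated along the whole word under a strengthened induction hypothesis comparing entire states, not single coordinates.) You acknowledge this bookkeeping and defer it, but it is not a tedious afterthought: it is the whole content of the theorem, and as stated your induction hypothesis is too weak to carry it. A second, smaller imprecision of the same kind: after disposing of $x_{k+1}$ the quantity to minimize is $a_k+a_{k-1}$, not $a_k$ alone, so the induction must deliver simultaneous (componentwise) minimization of both, which again only a domination-type hypothesis can do.
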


%
%


\section{The maximum number of steps of the Gaussian
algorithm}\label{S:result}

\begin{theorem}\label{T:compirecas}
\pn Let $k>2$ be a fixed integer. There exists an absolute
constant $A$ such
that input basis demanding more than $k$ steps to the
Gaussian algorithm has a
length greater than $A (1+\sqrt{2})^k$:
$$g(k)\geq A (1+\sqrt{2})^k.$$
\end{theorem}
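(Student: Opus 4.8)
\pn The plan is to feed Theorem~\ref{T:basemin} into the definition of $g$ and then extract the exponential growth rate from the spectrum of $ST^{\pm 2}$. First I would record that, by the reformulation \eqref{enfin} and the discussion following \eqref{bondepart}, an input basis demanding at least $k$ steps is a basis $B=\omega R$ whose Gaussian decomposition $\omega$ has length at least $k$. Theorem~\ref{T:basemin} then tells us that, whatever the type of the reduced basis $R$, the quantity $\length(\omega R)$ is bounded below by $\length\bigl((ST^{\pm2})^{k-1}\,w\,R\bigr)$ for a fixed short factor $w\in\{S,ST,ST^{-1}\}$. Since the right-hand side turns out to be increasing in $k$, it suffices to bound $\length\bigl((ST^{\pm2})^{k-1}\,w\,R\bigr)$ from below, uniformly over the four cases and over all reduced $R$; this lower bound, after passing to the max-norm length $L(B)=\max(|b_1|,|b_2|)$ via $L(B)\ge\sqrt{\length(B)/2}$, will be exactly what controls $g(k)$.

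\pn The heart of the argument is that $ST^{2}=\bigl(\begin{smallmatrix}2&1\\1&0\end{smallmatrix}\bigr)$ and $ST^{-2}=\bigl(\begin{smallmatrix}-2&1\\1&0\end{smallmatrix}\bigr)$ are both symmetric, with characteristic polynomials $\lambda^2\mp2\lambda-1$ and hence eigenvalues $1\pm\sqrt2$ (respectively $-1\pm\sqrt2$); in either case the spectral radius is $1+\sqrt2$. Writing $M=ST^{\pm2}=QDQ^T$ orthogonally with $D=\mathrm{diag}(\lambda_+,\lambda_-)$ and letting $G=(wR)(wR)^T$ be the Gram matrix of the vectors $wR$, I would use symmetry of $M$ and $S$ to compute
\[
\length\bigl((ST^{\pm2})^{k-1}\,w\,R\bigr)=\mathrm{tr}\bigl(M^{k-1}GM^{k-1}\bigr)=\lambda_+^{2(k-1)}H_{11}+\lambda_-^{2(k-1)}H_{22},
\]
where $H=Q^TGQ$ is positive definite, so $H_{11},H_{22}>0$. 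Keeping only the term whose eigenvalue has modulus $1+\sqrt2$ gives $\length(\omega R)\ge(1+\sqrt2)^{2(k-1)}h$, where $h\in\{H_{11},H_{22}\}$ is the corresponding positive diagonal entry of $H$.

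\pn It remains to bound $h$ below by an absolute constant, and this is where I expect the real work to lie. Since $h\ge\lambda_{\min}(G)$ and $w$ is unimodular, $\det G=\det(RR^T)$ is the squared covolume while $\lambda_{\max}(G)\le\|w\|^2(|b_1|^2+|b_2|^2)$ is controlled; because $R$ is proper the angle $\theta$ between $b_1$ and $b_2$ lies in $[\pi/3,2\pi/3]$, so $\det(RR^T)=|b_1|^2|b_2|^2\sin^2\theta\ge\tfrac34|b_1|^2|b_2|^2$ and therefore $\lambda_{\min}(G)=\det G/\lambda_{\max}(G)\ge c\,|b_1|^2$ for an absolute $c>0$. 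Normalising the shortest vector to have length at least $1$ — legitimate since the step count of the Gaussian algorithm is scale invariant, and automatic for integer inputs — one obtains $\length(\omega R)\ge c\,(1+\sqrt2)^{2(k-1)}$ uniformly, whence $L(B)\ge A(1+\sqrt2)^k$ after absorbing $\sqrt{c/2}$ and the factor $(1+\sqrt2)^{-1}$ into $A$.

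\pn The eigenvalue computation and the trace identity are routine once the symmetry of $ST^{\pm2}$ is noticed; the genuine obstacles are the two uniformity issues. The first is obtaining a single absolute constant $A$ valid simultaneously for all four types of reduced basis and for every $R$ — handled above through the properness bound on $\lambda_{\min}$ of the Gram matrix. The second is the careful bookkeeping needed to match ``at least $k$ steps'' with the hypothesis ``$\omega$ Gaussian of length $k$'' of Theorem~\ref{T:basemin}, accounting for the off-by-one between word length and iteration count and for the possibly vanishing outer exponents $x_1,x_{k+1}$ allowed by \eqref{mieux}.
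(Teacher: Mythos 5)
Your proposal follows essentially the same route as the paper's own proof: invoke Theorem~\ref{T:basemin} to reduce to the extremal inputs $(ST^{\pm 2})^{k-1}$ (times a short suffix) applied to $R$, then diagonalize the symmetric matrix $ST^{\pm 2}$, whose eigenvalues $1\pm\sqrt2$ (resp. $-1\pm\sqrt2$) have spectral radius $1+\sqrt 2$, to extract the exponential lower bound. If anything, your write-up is more careful than the paper's brief sketch, which does not address the dependence of the constant on the reduced basis $R$ or the scale normalization -- precisely the points you handle with the properness angle bound on $\lambda_{\min}$ of the Gram matrix and the normalization $|b_1|\geq 1$.
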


\pn It follows that any input with length less than $A
(1+\sqrt{2})^k$ is
demanding less than $k$ steps. We deduce the following
corollary.

\begin{corollary}\label{C:conspircas}
There is an absolute constant $A$ such that the number of
steps of the
Gaussian algorithm on inputs of length less than $M$ is
bounded from above by
\[ \log_{(1+\sqrt 2)} \left(\frac{M}{A}\right).\]
\end{corollary}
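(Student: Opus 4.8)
The plan is to feed the extremal‑word estimates of Theorem~\ref{T:basemin} into a spectral computation for the two matrices $ST^2$ and $ST^{-2}$, and then to translate the auxiliary length $\length$ used there into the input length appearing in $g(k)$.

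\emph{Reduction to an extremal word.} By the correspondence set up in Section~\ref{S:combgroup}, an input $B=UR$ that demands more than $k$ steps has a Gaussian decomposition $U=T^{x_{k+1}}\prod_{i=k}^1 ST^{x_i}$ whose word length is at least $k$ (the word length equals the number of iterations minus one). Hence it suffices to bound $\length(\omega R)$ from below over all reduced bases $R$ and all Gaussian words $\omega$ of length at least $k$. For each of the four types of reduced basis, Theorem~\ref{T:basemin} already performs this minimisation and leaves a single explicit word: writing $W_k$ for the relevant one of $(ST^{-2})^{k-1}S$, $(ST^{2})^{k-1}S$, $(ST^{-2})^{k-1}ST$, $(ST^{-2})^{k-1}ST^{-1}$, and using that these extremal lengths are nondecreasing in $k$, we obtain $\length(B)\ge\length(W_kR)$.

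\emph{Spectral growth.} A direct computation gives the characteristic polynomials $\lambda^2-2\lambda-1$ for $ST^2$ and $\lambda^2+2\lambda-1$ for $ST^{-2}$, with roots $1\pm\sqrt2$ and $-1\pm\sqrt2$; both matrices therefore have spectral radius $1+\sqrt2$, so $\|(ST^{\pm2})^{k-1}\|_{\mathrm{op}}\ge(1+\sqrt2)^{k-1}$. Writing $W_k=(ST^{\pm2})^{k-1}C$ with $C\in\{S,ST,ST^{-1}\}$ a bounded trailing factor, repeated use of $\sigma_{\max}(AB)\ge\sigma_{\max}(A)\,\sigma_{\min}(B)$ yields $\sigma_{\max}(W_kR)\ge(1+\sqrt2)^{k-1}\,\sigma_{\min}(C)\,\sigma_{\min}(R)$. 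Since $\length(W_kR)=\|W_kR\|_F^2\ge\sigma_{\max}(W_kR)^2$, this gives $\length(W_kR)\ge c\,(1+\sqrt2)^{2k}$ once $\sigma_{\min}(C)$ and $\sigma_{\min}(R)$ are known to be bounded below. (Alternatively one may expand $W_kR$ explicitly: its vectors obey the recurrence attached to $\lambda^2\mp2\lambda-1$, whose dominant root is $1+\sqrt2$.)

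\emph{Back to the input length.} The inputs being integer lattice bases, the shortest vector of any reduced $R$ has norm at least $1$, so $\sigma_{\min}(R)$ is bounded below independently of $R$; this is exactly what makes the constant $c$ uniform. The length used in $g$ is $L=\max(|b_1|,|b_2|)$, which satisfies $L\ge\sqrt{\length/2}$, so every input demanding more than $k$ steps obeys $L\ge A(1+\sqrt2)^k$ with $A=\sqrt{c/2}$, that is $g(k)\ge A(1+\sqrt2)^k$. I expect the one genuinely delicate point to be this uniformity: the eigenvalue computation only controls the dominant eigendirection, so one must rule out a reduced basis $R$ aligned so as to annihilate the leading term. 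It is the normalisation $|b_1|\ge1$ together with the near‑orthogonality of a reduced basis — which keeps $\sigma_{\min}(R)$ of order $|b_1|$ — that closes this gap and makes the bound genuinely independent of $R$.
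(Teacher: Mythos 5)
Your proposal is correct and, at bottom, follows the same route as the paper: the corollary is the one--line inversion of Theorem~\ref{T:compirecas} (the bound $g(k)\geq A(1+\sqrt 2)^k$), and that theorem is itself obtained, exactly as you do, by feeding the extremal words of Theorem~\ref{T:basemin} into a spectral estimate for $ST^{\pm 2}$, whose spectral radius is $1+\sqrt 2$.

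The differences are in execution, and two are worth recording. Where the paper diagonalizes the symmetric matrix $Q=ST^2$ and asserts that every coefficient of $Q^k$ has the form $\alpha(1+\sqrt 2)^k+\beta(1-\sqrt 2)^k$ with $\alpha,\beta>0$, you use $\|(ST^{\pm2})^{k-1}\|_{\mathrm{op}}=(1+\sqrt2)^{k-1}$ together with $\sigma_{\max}(AB)\geq\sigma_{\max}(A)\,\sigma_{\min}(B)$ and $\length(\cdot)=\|\cdot\|_F^2\geq\sigma_{\max}(\cdot)^2$; this is the same spectral content, but it handles uniformly the four trailing factors $S$, $ST$, $ST^{-1}$, which the paper's proof simply drops. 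More substantially, the paper's proof of Theorem~\ref{T:compirecas} estimates the coefficients of $Q^k$ alone, i.e.\ it silently takes $R$ to be the identity (equivalently, it measures the length of the unimodular matrix $U$ rather than of the input basis $B=UR$), whereas you carry $R$ through the estimate and observe that an absolute constant $A$ forces a lower bound on $\sigma_{\min}(R)$, which you extract from reducedness together with the integrality normalization $|b_1|\geq 1$. This is a genuine repair rather than a cosmetic one: for arbitrary real inputs the statement would fail by scale invariance of the algorithm, so some such normalization (or the convention that the input is $U$ itself) is needed, and the paper never says which it intends. Two points you leave implicit should still be written out: the monotonicity in $k$ of the extremal lengths, needed to pass from words of length exactly $k$ in Theorem~\ref{T:basemin} to words of length at least $k$ (it follows from the fact that applying $ST^{\pm2}$ to a proper basis does not decrease its length, cf.\ Lemma~\ref{L:redgaus}), and the final inversion itself --- if an input of length $L<M$ demands $s$ steps then $A(1+\sqrt2)^{s}\leq L<M$, hence $s<\log_{(1+\sqrt2)}(M/A)$ --- which, given the theorem, is the paper's entire derivation of the corollary and is the statement actually being asked for.
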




\section{Sorting algorithms}
Let $n$ be a positive integer, and let $[1,\dots,n]$ be the
sorted list of the $n$ first positive integers. Let
$\mathcal{S}_n$ be the set of all permutations on
$[1,\dots,n]$, and let $\mathcal{S}$ be the set of all
permutations on a list of distinct integers of variable
size. Let us denote by $t_i$ the transposition which swaps
the elements in positions $i$ and $i+1$ in the list , for
all $i\in\{1,\dots,n\}$. Any permutation can be written in
terms of the $t_i$-s. Put $\Sigma_n=\{t_1,\dots,t_n\}$ and
$\Sigma=\{t_i\colon i\in\N^*\}$. Thus $\Sigma_n$ (resp.
$\Sigma$) is a generating set of $\mathcal{S}_n$ (resp.
$\mathcal{S}$).

As in previous sections, any word $\omega$ on $\Sigma$ will
be
denoted as following:
 \[
 \omega= t_{i_1} t_{i_2} \dots t_{i_k}=\prod_{j=1}^{k} t_
{i_j},
 \]
where $k$ and $i_1$, \dots, $i_k$ are positive integers.

\begin{definition}
Let $\omega_1=t_{i_1}t_{i_2}\dots t_{i_k}$ and $\omega_2=t_
{j_1}t_{j_2}\dots t_{j_l}$ be words on $\Sigma$.
 \begin{enumerate}
 \item The \emph{length of $\omega$}, denoted by
$|\omega|$, is $k$;
 \item the \emph{distance between $\omega_1$ and
$\omega_2$}, denoted by
 $\dist(\omega_1,\omega_2)$, is given
 by $\min_{t_i\in\omega_1,t_j\in\omega_2}|i-j|$;
 \item the \emph{maximum} (resp. \emph{minimum}) of
$\omega_1$, denoted
 by $\max(\omega_1)$ (resp. $\min(\omega_1)$), is given
 by $\max_{t_i\in\omega_1}(i)$ (resp. $\min_
{t_i\in\omega_1}(i)$);
 \item $\omega_1$ is an \emph{increasing word} (resp.
\emph{decreasing word})
 whether $i_p<i_{p+1}$ (resp. $i_p>i_{p+1}$), for all
$p\in\{1,\dots,k-1\}$;
 \item $\omega_1$ is a \emph{consecutively increasing
word}
 (resp. \emph{consecutively decreasing word}) whether $i_
{j+1}-i_j=1$
 (resp. $i_j-i_{j+1}=1$), for all $j\in\{1,\dots,k-1\}$;
 \item $\omega_1<\omega_2$ (resp. $\omega_1>\omega_2$)
whether
 $\omega_1$ and $\omega_2$ are increasing (resp.
decreasing) words such that
 $\max(\omega_1)\leq\min(\omega_2)$ (resp. $\min
(\omega_1)\geq\max(\omega_2))$;
 \item $\omega_1$ is \emph{minimal on the left in
$\omega_2\omega_1$} (resp.
 \emph{maximal on the right in $\omega_1\omega_2$})
whether $\omega_1$ is an
 increasing word such that $i_1\leq j_l$ (resp. $j_1\leq
i_k$);
 \item similarly, $\omega_1$ is \emph{minimal on the
right in $\omega_2\omega_1$} (resp.
 \emph{maximal on the left in $\omega_1\omega_2$})
whether $\omega_1$ is a
 decreasing word such that $j_l\leq i_1$ (resp. $i_k\leq
j_1$).
 \end{enumerate}
\end{definition}

It is easy to prove that any word $\omega$ on $\Sigma$ can
be uniquely written on the form
 \begin{equation}\label{permdecomp}
 \omega=\omega_1\omega_2\dots\omega_m,
 \end{equation}
where $\omega_i$ is an increasing (resp. decreasing) word
maximal on the right and on the left, for all
$i\in\{1,\dots,m\}$. We will call \eqref{permdecomp}
the \emph{increasing
decomposition} (resp. \emph{decreasing decomposition}) of
$\omega$, and we will denote it by
$[\omega_1,\dots,\omega_m]$.
We define $s\colon\Sigma^*\to\N$ as the map given by the
rule
 \[
 s(\omega)=m,
 \]
where $[\omega_1,\dots,\omega_m]$ is the increasing
decomposition of $\omega$.
Moreover, it is also easy to prove that $\omega_i$ can be
uniquely written on the form
 \[
 \omega_i=\omega'_{i_1}\omega'_{i_2}\dots\omega'_{i_{p_i}},
 \]
where the $\omega'_j$-s are consecutively increasing (resp.
decreasing) and minimal on the left (resp. right), for all
$i\in\{1,\dots,m\}$. The decomposition $[\omega'_
{i_1},\omega'_{i_2},\dots,\omega'_{i_{p_i}}]$
is called the \emph{consecutively increasing decomposition}
(resp. \emph{consecutively decreasing decomposition}) of
$\omega_i$.

\subsection{Bubble sort}\label{Ss:bubble}
The basic idea of the bubble sort algorithm is the
following: pairs of adjacent values in the list to be
sorted are compared and interchanged if they are out of
order, the process starting from the beginning of the list.
Thus, list entries `bubble upward' in the list until they
bump into one with a higher sort value.

The algorithm first compares the two first elements of the
list and swap them if they are in the wrong order. Then,
the algorithm compares the second and the third elements of
the list and swaps them if necessary. The algorithms
continues to compare adjacent elements from the beginning
to the end of the list. This whole process is iterated
until no changes are done.

Let $\sigma$ be a permutation on $[1,\dots,n]$. There
exists a unique decomposition $\omega$ of $\sigma$ on the
alphabet $\Sigma$ corresponding to the sequence of
elementary transforms performed by the bubble sort
algorithm on $\sigma[1,\dots,n]$. We will call it the \emph
{bubblian decomposition} of $\sigma$. Notice that $(\omega)^
{-1}\sigma=1$.

 \begin{equation}\label{bubble}
 \begin{array}{lll}
 \text{Input:}&&\sigma \in \mathcal{S}.\\
 \text{Output:}&&\text{a decomposition of $\sigma$, }
 \sigma:=t_1\dots t_m.
 \end{array}
 \end{equation}

\begin{definition}
A word $\omega$ on $\Sigma$ is a \emph{bubblian word} if it
corresponds to a possible execution of the bubble sort
algorithm.
\end{definition}

Let us define some rewriting rules on $\Sigma^*$. In the
following equations, $i$, $j$ and $k$ are arbitrary
positive integers and $\omega$ is a word on $\Sigma$:
 \begin{equation}\label{perm1}
 t_i\,t_i \longrightarrow 1;
 \end{equation}
 \begin{equation}\label{perm2}
 \text{ if } \dist(i+1,\omega)>1, \text{ then }
 t_{i+1}\,\omega\,t_{i}\,t_{i+1}\longrightarrow \omega
t_{i}\,t_{i+1}\,t_{i};
 \end{equation}
 \begin{equation}\label{perm3}
 \text{ if } \dist(i+1,\omega)>1 \text{ and }\omega \text
{ is maximally increasing, then }
 \omega\,t_{i}\longrightarrow t_i\,\omega;
 \end{equation}
 \begin{equation}\label{perm4}
 \text{ if } \dist(j,k\omega)>1 \text{ and either }
 i\leq j\leq k
 \text{ or }k<i\leq j,
 \text{ then }
 t_i\,t_k\,\omega\,t_{j}\longrightarrow
t_i\,t_j\,t_k\,\omega.
 \end{equation}

\begin{theorem}\label{T:redbubble}
Let $\sigma$ be a permutation and let $\omega\in\Sigma^*$
be a decomposition of $\sigma$ on $\Sigma$. The bubblian
decomposition of $\sigma$ is obtained from $\omega$ by
applying
repeatedly the rules \eqref{perm1} to \eqref{perm4}.
\end{theorem}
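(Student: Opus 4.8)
The plan is to establish Theorem~\ref{T:redbubble} by a standard rewriting argument in two stages: first show the rewriting process terminates and its normal forms are exactly the bubblian words, and second show each rule preserves the underlying permutation, so that whatever normal form we reach does in fact equal the bubblian decomposition of $\sigma$. The key observation is that the bubblian decomposition has a very rigid shape. Because the bubble sort algorithm sweeps from the beginning of the list to the end, performing swaps only on out-of-order adjacent pairs, each full sweep produces a \emph{consecutively increasing} factor (the indices $t_i$ used during one pass strictly increase by steps we can control), and successive sweeps stack these factors. In the language of the increasing decomposition \eqref{permdecomp}, a bubblian word is precisely one whose increasing decomposition $[\omega_1,\dots,\omega_m]$ consists of factors that are themselves consecutively increasing and suitably nested, with $s(\omega)$ equal to the number of passes. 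So I would first characterize the target normal form intrinsically in these terms.

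First I would prove termination. The rules \eqref{perm1} and \eqml{perm2} strictly decrease the length $|\omega|$, while \eqref{perm3} and \eqref{perm4} preserve the length but move transpositions leftward toward a canonical ordering. The plan is to exhibit a well-founded measure: I would take the lexicographic pair $(|\omega|,\ N(\omega))$, where $N(\omega)$ counts the total number of ``inversions'' or out-of-position index pairs obstructing the consecutively-increasing normal form (roughly, the number of pairs of letters that still need to be commuted past one another by \eqref{perm3} or \eqref{perm4}). Rules \eqref{perm1}, \eqref{perm2} drop the first coordinate; rules \eqref{perm3}, \eqref{perm4} keep the first coordinate fixed and strictly drop $N$. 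Since both coordinates are non-negative integers, no infinite rewrite sequence exists, and the process terminates at a normal form.

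Next I would show that any word to which none of \eqref{perm1}--\eqref{perm4} applies is bubblian. This is the heart of the argument. The idea is contrapositive: if a word is \emph{not} bubblian, then its shape violates the nesting/consecutiveness conditions characterizing the bubblian decomposition, and one can always locate a local pattern matching the left-hand side of one of the four rules. Specifically, \eqref{perm1} removes redundant back-and-forth swaps (which bubble sort never performs), \eqref{perm2} and \eqref{perm4} sort the relative order of non-interfering transpositions whose indices are far apart (distance greater than $1$), and \eqref{perm3} pulls a single transposition to the front of a maximally increasing block. By a careful case analysis on the first place where the increasing decomposition fails to match the bubblian pattern, I would verify that at least one rule fires, so irreducible words are exactly the bubblian words.

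Finally, since by the definition of the bubblian decomposition each $\sigma$ has a \emph{unique} such decomposition, it suffices to check that every rule \eqref{perm1}--\eqref{perm4} preserves the product of the transpositions in $\mathcal{S}$, i.e. the permutation represented by $\omega$. Rule \eqref{perm1} is $t_i t_i = 1$; the remaining three are instances of the braid/commutation relations $t_i t_j = t_j t_i$ for $|i-j|>1$ together with $t_i t_{i+1} t_i = t_{i+1} t_i t_{i+1}$, which I would verify directly hold as identities in $\mathcal{S}$ under the stated distance hypotheses. Combining termination, the normal-form characterization, and invariance of the represented permutation, the word obtained by exhausting the rules is an irreducible decomposition of the same $\sigma$, hence it is the bubblian decomposition. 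I expect the main obstacle to be the normal-form characterization step: pinning down the exact combinatorial invariant distinguishing bubblian words and proving that every non-bubblian irreducible-looking word actually exposes a redex of \eqref{perm2}, \eqref{perm3}, or \eqref{perm4} requires a delicate analysis of how the distance condition $\dist(\cdot,\cdot)>1$ interacts with the increasing decomposition.
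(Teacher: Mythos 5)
Your overall architecture---termination, then ``every irreducible word is bubblian'', then invariance of the represented permutation combined with uniqueness of the bubblian decomposition---is exactly the plan of the paper's proof (Lemma~\ref{L:bubblerpt}, Lemmas~\ref{L:iwp} and~\ref{L:wordmax}, Corollary~\ref{C:bub=red}, plus the Remark on the presentation of $\mathcal{S}$). But two of your three steps contain genuine gaps. The termination measure fails: rule~\eqref{perm2} rewrites $t_{i+1}\,\omega\,t_i\,t_{i+1}$ into $\omega\,t_i\,t_{i+1}\,t_i$, and both sides have $|\omega|+3$ letters, so \eqref{perm2} does \emph{not} strictly decrease the word length $|\omega|$. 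Hence your lexicographic pair $(|\omega|,N(\omega))$ does not drop in its first coordinate under \eqref{perm2} as you claim, and since $N$ is only described informally as counting obstructions removed by \eqref{perm3} and \eqref{perm4}, there is no way to check that it drops in the second coordinate either. The paper instead takes $l(\omega)=\sum_i \alpha_i$, the sum of the \emph{indices} of the letters, which decreases by $1$ under \eqref{perm2} (a $t_{i+1}$ is traded for a $t_i$) and by $2i$ under \eqref{perm1}, and pairs it with an explicit potential $h(\omega)=\sum_{i=1}^{s(\omega)}(s(\omega)-i)(\max(\omega)-|\omega_i|)$ built on the increasing decomposition, which is shown by direct computation to decrease strictly under \eqref{perm3} and \eqref{perm4}.

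Second, your characterization of the normal forms is incorrect, and the step that would use it is only promised, not carried out. A single bubble-sort pass is an \emph{increasing} word but generally not a \emph{consecutively increasing} one: on the list $[2,1,4,3]$ the first pass performs $t_1 t_3$. So ``increasing decomposition whose factors are consecutively increasing and suitably nested'' is not the right invariant, and ``suitably nested'' is precisely what must be made exact. The paper's key combinatorial fact is Lemma~\ref{L:iwp}: in a reduced word with increasing decomposition $[\omega_1,\dots,\omega_m]$, if $t_i$ occurs in $\omega_{p+1}$ then $t_{i+1}$ occurs in $\omega_p$; its proof is a case analysis exhibiting, in any word violating this, a redex of \eqref{perm2}, \eqref{perm3} or \eqref{perm4}. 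This is then combined with Lemma~\ref{L:wordmax}, which tracks the actual values of the list $\omega[1,\dots,n]$, to conclude that reduced words are bubblian (Corollary~\ref{C:bub=red}); uniqueness of the bubblian decomposition and termination then finish the theorem. Your sentence ``by a careful case analysis \dots\ I would verify that at least one rule fires'' is exactly this missing content: without the correct invariant and that case analysis, the argument does not go through. (Your third step---that the rules hold in $\mathcal{S}$ because they follow from $t_i^2=1$, commutation, and the braid relation---matches the paper's Remark and is fine.)
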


\begin{remark}
Let $\omega$ and $\omega'$ be words on $\Sigma$.
It is well known that a presentation of $\mathcal{S}$ on
$\Sigma$ is the following:
\begin{itemize}
\item $t_i\,t_i=1$;
\item $t_i\,t_j=t_j\,t_i$;
\item $t_i\,t_{i+1}\,t_i=t_{i+1}\,t_i\,\,t_{i+1}$;
\end{itemize}
for all positive integers $i$, $j$ such that $|i-j|=1$.
Thus,
it is easy to prove that if $\omega'$ is obtained from
$\omega$, then $\omega=\omega'$ in $\mathcal{S}$.
\end{remark}

The sketch of the proof of Theorem~\ref{T:redbubble} is
very similar to the proof of Theorem~\ref{rules} and is
given in the appendix. Notice that we can easily deduce 
from Theorem~{T:redbubble} the worst-case for the bubble 
sort algorithm.

\subsection{Other iterative sorting algorithms}
We also give without proof some rewriting rules for the
insertion sort algorithm and the selection sort algorithm,
see the appendix.

\section{Conclusion}

\pn In this paper we studied the Gaussian algorithm by
considering a rewriting
system over $GL_2(\Z)$. We first believe that our method
should be applied to
other variants of the Gaussian algorithm (for example,
Gaussian algorithm with
other norms \cite{KaSc96}) and for each variant there is an
adequate rewriting
system over $GL_2(\Z)$. 
\pn The most important and interesting continuation to this
work is to
generalize the approach in higher dimensions. Even in three
dimensions the
worst--case configuration of all possible generalization
of the Gaussian
algorithm is completely unknown for the moment. (\cite
{vSpr} has tried without
success in $3$ dimensions.) Although the problem is really
difficult, we have
already achieved a step, since the LLL algorithm uses the
Gaussian
algorithm as
an elementary transform.

\pn The group of $n$--dimensional lattice transformations
has been studied
first by Nielsen \cite{Nie24} ($n=3$) and for an arbitrary
$n$ by Magnus
\cite{Mag34,MKS76}, based on the work of Nielsen\cite
{Nie24b}. Their work
should certainly help to exhibit such rewriting systems on
$GL_n(\Z)$ if there
exists.

\pn This approach may also be an insight to the still open
problem of the
complexity of the optimal LLL algorithm\cite{Akh00,Len00}.

\pn {\bf Acknowledgments. }
The authors are indebted to Brigitte Vall\'ee for drawing
their
attention to algorithmic
problems in lattice theory and for regular helpful
discussions.
\medskip

\renewcommand{\baselinestretch}{0.9}

\appendix
\section{The Gaussian decomposition of a unimodular matrix}
\pn In the following proofs, we will essentially use the
set of rules
\eqref{3}, \eqref{4}, \eqref{5}, \eqref{6} and apply \eqref
{1} and \eqref{2}
implicitly whenever possible.

\pn So with any initial $\omega_1$, we always obtain a
reduced word after
applying a finite number of times the rewriting rules.
Moreover, no matter in
which order the different rewriting rules are used, the
same unique reduced
word -- corresponding to a Gaussian word -- is always
obtained from
$\omega_1$, as proved by the following lemmas.

\begin{lemma}\label{L:rptGL}
Let $\omega_1$ be a word as in \eqref{bondepart}. Then the
rewriting process always terminates\footnote{Of course
saying that the
rewriting
process presented
by the
previous Theorem always terminates has a priori nothing to
do with the
well-known fact that the Gaussian algorithm always
terminates.}.
\end{lemma}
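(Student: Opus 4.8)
The plan is to exhibit a well-founded measure that strictly decreases under every nontrivial rewriting step. I would first dispose of the trivial rules: \eqref{1} and \eqref{2} strictly reduce the number of letters, so they cannot be applied forever, and after exhausting them any word is in the reduced shape \eqref{bondepart}, say $\omega=T^{x_{k+1}}\prod_{i=k}^1 ST^{x_i}$. Write $\nu(\omega)$ for the number of occurrences of $S$ in $\omega$; this $\nu$ will be the primary component of a lexicographic measure. A direct count shows that rules \eqref{5} and \eqref{6} strictly decrease $\nu$: the left-hand side $ST^{\pm1}ST^{x}\prod_{i=k}^1 ST^{y_i}$ contains $k+2$ copies of $S$, while the right-hand side $T^{\pm1}ST^{\mp x\mp1}\prod_{i=k}^1 ST^{-y_i}$ contains only $k+1$. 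By contrast, rules \eqref{3} and \eqref{4} preserve $\nu$ (two $S$'s on each side), except that when the modified exponent becomes $0$ two $S$-blocks collide and \eqref{1} removes them, which again lowers $\nu$. Thus $\nu$ is non-increasing throughout, and strictly decreases at every application of \eqref{5}, \eqref{6}, and at every ``collapsing'' application of \eqref{3}, \eqref{4}.

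The heart of the argument is therefore to bound the length of a run consisting only of non-collapsing applications of \eqref{3} and \eqref{4}, during which $\nu$ stays equal to a fixed $k$. For this I would use the secondary potential $\Phi(\omega)=\sum_{i} 2^{-i}\,|x_i|$, evaluated on the reduced form. Applying \eqref{3} to the pair $(x_{i+1},x_i)=(2,x)$ with $x<0$ does three things: it reduces the right exponent $x_i=x$ toward $0$, so $|x_i|$ drops by $1$ (a change of $-2^{-i}$ in $\Phi$); it flips $x_{i+1}$ from $2$ to $-2$, leaving $|x_{i+1}|$ unchanged; and the prefixed $T$ merges into the block on its left, i.e.\ it adds $1$ to $x_{i+2}$, changing $|x_{i+2}|$ by at most $+1$ (a change of at most $+2^{-(i+2)}$). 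The net effect is at most $-2^{-i}+2^{-(i+2)}=-\tfrac34\,2^{-i}<0$, and \eqref{4} is entirely symmetric. Hence $\Phi$ strictly decreases at each such step. The decisive point is that the carry always lands at index $i+2$, which is strictly larger than the index $i$ of the exponent being reduced, so it carries strictly smaller weight; this is exactly why a geometric weight with ratio $<1$ (here $2^{-i}$, any $w_{i+2}<w_i$ works) makes the reduction dominate the carry.

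Finally I would assemble the pieces into an honest termination proof. Since the active index satisfies $i\le k-1$, the per-step decrease of $\Phi$ is bounded below by the uniform constant $\tfrac34\,2^{-(k-1)}=\tfrac32\,2^{-k}>0$, while $\Phi\ge 0$ always; therefore at most $\Phi_{\mathrm{init}}/(\tfrac32\,2^{-k})$ non-collapsing \eqref{3}/\eqref{4} steps can occur before $\nu$ must drop. As $\nu$ starts at a finite value and decreases at every other kind of step, there are only finitely many steps that lower $\nu$, and between consecutive such steps only finitely many steps that keep $\nu$ fixed; the total number of rewriting steps is thus finite. The main obstacle I anticipate is precisely the one the potential is designed to overcome: $\Phi$ is real-valued and merely bounded below, so ``strictly decreasing'' alone would not force termination; the argument only closes because within a fixed value of $\nu$ the decrease is uniformly bounded away from $0$, and one must check carefully that the carry introduced by \eqref{3} and \eqref{4} (the only term that can raise $\Phi$) is always strictly outweighed by the simultaneous reduction of $|x_i|$.
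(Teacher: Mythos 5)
Your proof is correct, and its secondary measure is genuinely different from the paper's. Both arguments share the same skeleton: a lexicographic measure whose primary component is the word length (your $\nu$, the number of $S$'s), which rules \eqref{5}, \eqref{6} and the collapsing instances of \eqref{3}, \eqref{4} strictly decrease, so that everything reduces to bounding runs of non-collapsing applications of \eqref{3} and \eqref{4}. Where you diverge is in how such a run is bounded. The paper proceeds by induction on the length and on the integer quantity $d(\omega_1)=\sum_{i\in S_1\cup S_2}i$, where $S_1$ collects the positions with exponent $\pm 1$ and $S_2$ the positions with exponent $\pm 2$ whose right-hand neighbour has opposite sign; it then verifies through an explicit case analysis (the cases $x\in\{-2,-1,1\}$, $y\in\{-3,-2,-1\}$, etc.) that each application of \eqref{3} or \eqref{4} either shortens the word or decreases $d$, well-foundedness being automatic since $d\in\N$. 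You instead use the geometric potential $\Phi(\omega)=\sum_i 2^{-i}|x_i|$, and a single three-term computation (the right exponent loses $1$ in absolute value, the middle exponent keeps its absolute value, and the carry changes the left exponent by at most $1$ but sits two positions higher, hence at weight $2^{-(i+2)}$) gives $\Delta\Phi\le -\tfrac{3}{4}2^{-i}$ with no case distinctions at all. The price, which you correctly identify and pay, is that $\Phi$ is real-valued, so strict decrease alone proves nothing; you need the uniform lower bound $\tfrac{3}{4}2^{-(k-1)}$ on the per-step decrease while $\nu$ is frozen at $k$. In short, the paper buys well-foundedness for free at the cost of a tedious case analysis, while you buy a one-line verification at the cost of an (easily supplied) uniformity argument; your version also makes transparent the structural reason for termination, namely that carries always migrate to lower-weight positions.

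Two harmless imprecisions are worth smoothing out. First, rule \eqref{2} applied to two blocks of equal sign does not shorten the underlying word over $\{S,T,T^{-1}\}$ (it only merges the block notation), so ``strictly reduce the number of letters'' is overstated; such an application is a no-op on the word itself, however, so it cannot spoil termination. Second, when the modified exponent becomes $0$ at the rightmost position ($i=1$), no two $S$-blocks collide: since $x_1=0$ is allowed in \eqref{bondepart}, the word simply ends in $S$, and the step is an ordinary non-collapsing one, still covered by your $\Phi$-bound; the phrase ``two $S$-blocks collide'' is accurate only for $i\ge 2$ (and for the carry position when $x_{i+2}+1=0$ with $i+2\le k$). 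Neither point affects the validity of the argument.
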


The proof of Lemma~\ref{L:rptGL} will use the following
notations.

\begin{notation}
Let k be a nonnegative integer, and let $x_1$, \dots, $x_
{k+1}$ be integers
such
that $x_2$,\dots, $x_k$ are nonzero. Put
$\omega_1=T^{x_{k+1}}\prod_{i=1}^k ST^{x_{i}}$. We denote by
$\omega_1^-$ the word $T^{-x_{k+1}}\prod_{i=1}^k ST^{-x_{i}}
$. Put
\begin{align*}
&S_1=\{i\colon 2\leq i\leq k \text{ and } |x_i|=1\};\\
&S_2=\{i\colon 2\leq i\leq k,\ x_ix_{i-1}<0 \text{ and }
|x_i|=2\}.
\end{align*}
We also put
$d(\omega_1)=\sum_{i\in S_1\cup S_2} i$.
\end{notation}

\begin{proof}
We proceed by induction on the length of $\omega_1$, and on
the sum
$d=d(\omega_1)$. The property is trivially true
whether $|\omega_1|\in\{0,1,2\}$ and $d\in\N$.

Let $k$ be a positive integer such that $k\geq 2$. Suppose
that the property
holds for any word of length $k$. Suppose that
$|\omega_1|=k+1$. The property holds whether $d$ belongs to
$\{0,1\}$.
Suppose that the property holds for any word $\omega$ of
length $k+1$ such
that
$d>d(\omega)\geq 1$. Let $i$ be in $S_1\cup S_2$. If $x_i=1
$ (resp.
$x_i=-1$), then we use the rule \eqref{5} (resp. \eqref
{6}), and the length of
$\omega_1$ strictly decreases. Suppose now that $x_i=2$.
Then
$\omega_1$ can be written as
$\omega_2\, ST^x\, ST^2\, ST^y\, \omega_3$, where $x\in\Z$,
$y\in\Z_-^*$, $\omega_2$ and $\omega_3$ are words on the
alphabet
$\Sigma$, such that $\omega_2$ does not end by a $S$ and
$\omega_3$ is
either 1 or starts by $S$. If we use the rule~\eqref{3},
then we get the word
$\omega'_1=\omega_2\, ST^{x+1}\, ST^{-2}\, ST^{y+1}\,
\omega_3$.
Put $d'=d(\omega'_1)$, and
$\delta=\sum_{j\in S_1\cup S_2\setminus\{i+1,i,i-1\}} j$.
Notice that if $x=0$,
then $\omega_1=T^z\, ST^2\, ST^y\, \omega_3$, with $z\in\Z$.

\setcounter{case}{0}
\begin{case}
Suppose that either $x=-1$ or $y=-1$.
\end{case}
Then $|\omega'_1|=k$, and the rewriting process terminates.

\begin{case}
Suppose that $y\notin\{-3,-2,-1\}$.
\end{case}
\begin{itemize}
\item Suppose that $x\notin\{-2,-1,1\}$: then $d'=d-1$.
\item Suppose that $x=-2$. Then the following equalities
hold:

\begin{align*}
&\omega_1=\omega_2\, ST^{-2}\, ST^2\, ST^y\,\omega_3;\\
&\omega'_1=\omega_2\, ST^{-1}\, ST^{-2}\, ST^{y+1}\,
\omega_3.
\end{align*}

Thus, we have $d'=i+1<d=2i+1+\delta$.
\item Suppose that $x=1$. Then the following equalities
hold:

\begin{align*}
&\omega_1=\omega_2\, ST^{1}\, ST^2\, ST^y\,\omega_3;\\
&\omega'_1=\omega_2\, ST^{2}\, ST^{-2}\, ST^{y+1}\,
\omega_3.
\end{align*}

Thus we have $d'=i+1<d=2i+1+\delta$.
\end{itemize}

\begin{case}
Suppose that $y=-3$.
\end{case}
\begin{itemize}
\item Suppose that $x\notin\{-2,-1,1\}$.
Then $d'\leq i-1+\delta< i+\delta\leq d$.
\item Suppose that $x=-2$. Then

\begin{align*}
&\omega_1=\omega_2\, ST^{-2}\, ST^2\, ST^{-2}\,\omega_3;\\
&\omega'_1=\omega_2\, ST^{-1}\, ST^{-2}\, ST^{-2}\,
\omega_3.
\end{align*}

Thus $d'\leq 2i-1+\delta< 2i+1+\delta\leq d$.
\item Similarly, if $x=1$, then the following equalities
hold:

\begin{align*}
&\omega_1=\omega_2\, ST^{1}\, ST^2\, ST^{-3}\,\omega_3\\
&\omega'_1=\omega_2\, ST^{2}\, ST^{-2}\, ST^{-2}\, \omega_3.
\end{align*}

Thus $d'\leq 2i-1+\delta< 2i+1+\delta\leq d$.
\end{itemize}

\begin{case}
Suppose that $y=-2$.
\end{case}
\begin{itemize}
\item Suppose that $x\notin\{-2,-1,1\}$.
Then $d'\leq i-1+\delta< i+\delta\leq d$.
\item Suppose that $x=-2$. Then the following equalities
hold:

\begin{align*}
&\omega_1=\omega_2\, ST^{-2}\, ST^2\, ST^{-2}\,\omega_3;\\
&\omega'_1=\omega_2\, ST^{-1}\, ST^{-2}\, ST^{-1}\,
\omega_3.
\end{align*}

Thus $d'\leq 2i-1+\delta< 2i+1+\delta\leq d$.
\item Suppose that $x=1$. Then the following equalities
hold:

\begin{align*}
&\omega_1=\omega_2\, ST^{1}\, ST^2\, ST^{-2}\,\omega_3\\
&\omega'_1=\omega_2\, ST^{2}\, ST^{-2}\, ST^{-1}\, \omega_3.
\end{align*}

Thus $d'\leq 2i-1+\delta< 2i+1+\delta\leq d$.
\end{itemize}
By induction hypothesis, the rewriting process always
terminates.
\end{proof}

\begin{lemma}\label{L:redgaus}
Let $B$ be the matrix of a proper basis $(b_1,b_2)$ (see
\eqref{rows},
\eqref{b*} and \eqref{proper}). Let $x\in \Z^*$ be a non
zero integer and
$\tilde{B} := ST^x B$ express the matrix of the basis
$(\tilde{b}_1,\tilde{b}_2)$, \emph{i.e.} $\tilde{b}_1$ is
the first row of
$\tilde{B}$ and $\tilde{b}_2$ is its second row.
\begin{enumerate}
\item If $|x|\geq 3$, then $\tilde{B}$ is still proper.
Moreover, \begin{itemize}
\item if $(b_1,b_2)$ and $x$ are both positive or both
negative, then $\tilde{B}$ is proper whenever $|x|\geq 2$.
\item if $B$ is reduced, $|b_1|<|b_2|$ and $m\ne-1/2$, then
$\tilde{B}$ is
proper
for all $x\in \Z $.
\end{itemize}
\item If $|x|\geq 2$, then $|\tilde{b}_2| <|\tilde{b}_1|$.
Moreover, if $(b_1,b_2)$ and $x$ are both positive or both
negative, it is
true provided that $|x|\geq 1$.
\item If $ |x|\geq 2$, then $\max (|\tilde{b}_1|,|\tilde{b}
_2|)\geq
\max(|b_1|,|b_2|)$.
\item If $|x|\geq 1$, then $(\tilde{b}_1,\tilde{b}_2)$ and
$x$ are both
positive or both negative.
\end{enumerate}
\end{lemma}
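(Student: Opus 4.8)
The plan is to compute the three basic invariants of the transformed basis $(\tilde b_1,\tilde b_2)$ explicitly and then read off each of the four assertions from simple estimates. Since the rows of $B$ are the basis vectors, one checks directly that $\tilde B=ST^xB$ has rows $\tilde b_1=x\,b_1+b_2$ and $\tilde b_2=b_1$. Writing $A=|b_1|^2$, $C=|b_2|^2$ and $p=(b_1,b_2)$, so that $m=p/A$ and properness of $B$ reads $-A/2\le p<A/2$, I would record
\begin{equation*}
|\tilde b_2|^2=A,\qquad |\tilde b_1|^2=x^2A+2xp+C,\qquad (\tilde b_1,\tilde b_2)=xA+p,
\end{equation*}
whence the Gram--Schmidt coefficient of $(\tilde b_1,\tilde b_2)$ is $\tilde m=(xA+p)/(x^2A+2xp+C)$. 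As every assertion is invariant under scaling of the lattice, I would normalise $A=1$, so that $|m|\le 1/2$, $C>m^2$ (linear independence, i.e.\ $AC>p^2$, so in particular $C>0$), and the formulas become $|\tilde b_1|^2=x^2+2xm+C$ and $(\tilde b_1,\tilde b_2)=x+m$.

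With these in hand the last three assertions are short. Assertion (4) is immediate: $(\tilde b_1,\tilde b_2)=x+m$ and $|m|<1$, so for $|x|\ge 1$ the sign of $x+m$ equals the sign of $x$, using $m\ge-1/2$ when $x\ge 1$ and $m<1/2$ when $x\le-1$. For assertion (2) I would bound $|\tilde b_1|^2=x^2+2xm+C>x^2-|x|$ (from $2xm\ge-|x|$ and $C>0$), which exceeds $1=|\tilde b_2|^2$ as soon as $|x|\ge 2$; and if $xm>0$ then $2xm>0$ gives $|\tilde b_1|^2>x^2\ge 1$ already for $|x|\ge 1$. Assertion (3) then follows from (2): for $|x|\ge 2$ one has $\max(|\tilde b_1|,|\tilde b_2|)=|\tilde b_1|$, and since $|\tilde b_1|^2-C=x(x+2m)\ge 2$ while $|\tilde b_1|^2>2$, we get $|\tilde b_1|^2>\max(1,C)=\max(|b_1|,|b_2|)^2$.

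The heart of the lemma is assertion (1). Because $|\tilde b_1|^2>0$, properness $-1/2\le\tilde m<1/2$ of $\tilde B$ is equivalent to the two polynomial conditions
\begin{equation*}
P_+(x):=x(x-2)+2m(x-1)+C>0 \quad\text{and}\quad P_-(x):=x(x+2)+2m(x+1)+C\ge 0.
\end{equation*}
For $|x|\ge 3$ I would bound each cross term $2m(x\mp1)$ by $\pm(x\mp1)$ using $|m|\le 1/2$, reducing both conditions to quadratics in $x$ that are manifestly positive, so properness holds unconditionally. The same-sign hypothesis $xm>0$ then disposes of the only remaining values $x=\pm2$: at $x=2$ one has $P_+(2)=2m+C$, positive precisely because $m>0$, and symmetrically at $x=-2$. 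Finally, in the reduced case $|b_1|<|b_2|$ (that is $C>1$) with $m\ne-1/2$ (that is $m>-1/2$), the finitely many values $|x|\in\{0,1,2\}$ are checked by direct substitution, each reducing to an inequality such as $C-1>0$ or $2m+C>0$ that is guaranteed by $C>1$ and $-1/2<m<1/2$, while $|x|\ge 3$ is covered by the general bound. I expect assertion (1) to be the main obstacle, and within it the tight cases $x=\pm2$: here the required inequality is exactly the one that fails when $m=-1/2$ and $x,m$ have opposite signs, which is why the extra hypotheses are needed and why the strictness of $m<1/2$ versus $m\ge-1/2$ must be tracked with care.
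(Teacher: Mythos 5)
Your proof is correct, and at top level it follows the same strategy as the paper's own proof: compute $\tilde b_1=b_2+xb_1$, $\tilde b_2=b_1$, and estimate the transformed Gram quantities. The differences lie in how parts (1) and (3) are closed, and they are worth recording. For part (1) the paper discards the term $|b_2^*|^2$ from the denominator and works with the single bound $|\tilde m|<1/|m+x|$, which settles $|x|\ge 3$ (and $|x|\ge 2$ when $mx>0$, since then $|m+x|=|x|+|m|\ge 2$), but is vacuous when $|m+x|$ is small; as a consequence the paper's written proof never addresses the third bullet of (1) --- the reduced case $|b_1|<|b_2|$, $m\ne-1/2$, arbitrary $x$ --- where one genuinely needs $C=|b_2|^2/|b_1|^2>1$ and not just $|m|\le 1/2$. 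Your polynomial conditions $P_{\pm}(x)$ keep $C$ in play, so all three bullets are handled uniformly, and your finite check at $|x|\le 2$ supplies exactly the argument the paper omits. For part (3), in the case $|b_2|\ge|b_1|$ the paper argues lattice-theoretically: a proper ordered basis is Gauss--reduced, hence $b_1,b_2$ attain the successive minima and $|b_2+xb_1|\ge|b_2|$ for all $x\ne 0$; your inequality $|\tilde b_1|^2-C=x(x+2m)\ge 2$ for $|x|\ge 2$ replaces that appeal to reduction theory by an elementary computation. Both routes are valid; yours is self-contained, slightly stronger (it gives strict inequality in (3)), and avoids the inverted-fraction slips in the paper's displayed bound for part (1).
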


\begin{proof}
The definitions of $B$, $b^*$ and $m$ are given by
Relations \eqref{rows},
\eqref{b*} and \eqref{proper}. We have $\tilde{b}_1= b_2
+xb_1$ and
$\tilde{b}_2=b_1$. In the sequel we express all the needed
quantities in the
orthogonal basis $b^*$.
\begin{enumerate}
\item We will show that $
\left\vert \frac{(\tilde{b}_1,\tilde{b}_1)}{(\tilde{b}
_1,\tilde{b}_2)} \right\vert \leq 2$.

\pn Notice that
$$\frac{(\tilde{b}_1,\tilde{b}_1)} {|(\tilde{b}_1,\tilde{b}
_2)|}=
\frac{(m+x)^2 |b_1^*|^2 + |b_2^*|^2}{|m+x| |b_1^*|^2} <
\frac {1}{|m+x|}.$$
Since $ ||x|-|m|| < |m+x|$ and $|m|\leq 1/2$, when $|x|\geq
3$,
clearly $2<|m+x|$ and when $mx>0$ this is still true
whenever $|x|\geq 2$.
\item We have to show that $ |b_1| < |b_2 +x b_1|$.
We notice that
$$\frac{|b_1|^2}{|b_2 +x b_1|^2} =\frac{|b_1^*|^2}{(m+x)^2
|b_1^*|^2 +
|b_2^*|^2}
 < \frac {1}{(m+x)^2}.$$
If $mx<0$, we have $1< 1.5 \leq ||x|-|m|| < |m+x|$.
If $mx>0$, $|m+x|\geq 1$, $\forall x\geq 1$.
\item From the last point, we know that $|b_1|\leq
|b_2+xb_1|$.
If $|b_2|< |b_1|$, we have also $|b_2|< |b_2+xb_1|$. (This
is true for
$|x|\geq 2$ or for $|x|\geq 1$ provided that $xm>0$.
\pn Now if $|b_2|\geq |b_1|$, since $(b_1,b_2)$ is proper,
it is indeed
Gauss--reduced. So $b_1$ and $b_2$ are the shortest vector
generating the
whole lattice. So for all $x\neq 0$, $|b_2+xb_1|>b_2$.
\item By definition $(\tilde{b}_1,\tilde{b}_2):=
(b_2+xb_1,b_1)=
(x+m)|b_1^*|^2$. Since $|m|\leq 1/2$ if $|x|\geq 1$, the
quantities $x+m$
and $x$
have the same sign.
\end{enumerate}
\end{proof}

\begin{corollary}\label{C:redgaus}
Let $\omega_1$ be a word of the form~\eqref{bondepart}.
Then there is a
unique reduced word $\omega_1'$ corresponding to $\omega_1$.
Moreover, $\omega'_1$ is a Gaussian word.
\end{corollary}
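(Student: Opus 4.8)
The plan is to combine the termination result with a semantic identification of reduced words, rather than to verify confluence by a critical-pair analysis. First I would settle existence: by Lemma~\ref{L:rptGL} the rewriting process started at $\omega_1$ terminates, so after finitely many applications of rules~\eqref{1}--\eqref{6} one reaches a word $\omega_1'$ to which no rule applies; this is by definition a reduced word, and exhausting~\eqref{1} and~\eqref{2} puts it in the canonical form~\eqref{bondepart}, say $\omega_1'=T^{x_{k+1}}\prod_{i=k}^1 ST^{x_i}$ with $|x_i|\geq 1$ internally.

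The heart of the argument is to show that such a reduced word is Gaussian. Being reduced with respect to rules~\eqref{5} and~\eqref{6} forces $|x_i|\geq 2$ for every internal index $i$ (no exponent $\pm1$ can survive), while being reduced with respect to rules~\eqref{3} and~\eqref{4} forces the sign-compatibility between consecutive exponents that those patterns would otherwise eliminate. I would then read the product $\omega_1'R$ from right to left and apply Lemma~\ref{L:redgaus} factor by factor: starting from the reduced, hence proper, basis $R$, each factor $ST^{x_i}$ with $|x_i|\geq 2$ sends the current proper basis to another proper basis with $|\tilde b_2|<|\tilde b_1|$, so it is exactly an integer translation of the Gaussian algorithm followed by a swap that the algorithm is obliged to perform; point~4 of Lemma~\ref{L:redgaus} provides the sign of the translation and matches the constraints imposed by rules~\eqref{3} and~\eqref{4}. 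The leading factor $T^{x_{k+1}}$ is the terminal integer translation after which no swap occurs, which is precisely the stopping condition $|b_1|\leq|b_2|$. Hence the sequence $(x_1,\dots,x_{k+1})$ is exactly the one produced by the Gaussian algorithm, i.e. $\omega_1'$ is the Gaussian decomposition.

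Uniqueness I would then deduce semantically. By Lemmas~\ref{L:redST2STx} and~\ref{L:redSTSTx} each nontrivial rule replaces a factor by one that performs the same transformation of the underlying lattice, so any two reduced words obtained from $\omega_1$ describe reductions of one and the same input by the Gaussian algorithm. Since that algorithm is deterministic, its output decomposition is unique; as each reduced word has just been identified with this output, the reduced word $\omega_1'$ is unique. In particular the normal form is shown to be unique without a direct confluence (critical-pair) analysis.

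The main obstacle is the middle step: matching the purely syntactic description of a reduced word to the semantic trajectory of the algorithm through Lemma~\ref{L:redgaus}. The delicate points are the boundary factors $T^{x_{k+1}}$ and $ST^{x_1}$, where $x_{k+1}$ and $x_1$ may vanish, and the dependence on which of the four cases for the reduced basis $R$ (listed in the first lemma of this section) one is in, since the admissible sign patterns, and hence the precise shape of a Gaussian word, change from case to case.
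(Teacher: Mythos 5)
Your proposal is correct and takes essentially the same route as the paper: existence of a normal form from the termination Lemma~\ref{L:rptGL}, the factor-by-factor application of Lemma~\ref{L:redgaus} to a reduced word acting on the reduced basis to show it is Gaussian, and uniqueness deduced semantically from Lemmas~\ref{L:redST2STx} and~\ref{L:redSTSTx} plus determinism of the Gaussian algorithm, with no critical-pair/confluence analysis. This is exactly the decomposition into lemmas that the paper itself announces and uses for this corollary.
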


\begin{lemma}\label{L:redST2STx}
Let $B$ be a proper basis. Put
 \begin{align*}
 &\tilde{B}:=ST^2\,ST^x\,B
 \text{ and }
 B':=(T\, ST^{-2}\, ST^{x+1})^{-1}\tilde{B};\\
 &(\text{resp. }\tilde{B}:=ST^{-2}\,ST^x\,B
 \text{ and }
 B':=(T^{-1}\, ST^{2}\, ST^{x-1})^{-1}\tilde{B}).
 \end{align*}
Then $B'$ is such that $B'=-B$.
\end{lemma}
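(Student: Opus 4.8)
The plan is to observe that the claimed equality $B'=-B$ does not actually depend on $B$ at all. Writing out the definition, $B'=(T\,ST^{-2}\,ST^{x+1})^{-1}\tilde B=(T\,ST^{-2}\,ST^{x+1})^{-1}\,ST^2\,ST^x\,B$, so proving $B'=-B$ is equivalent to establishing the single matricial identity
\[
ST^2\,ST^x=-\,T\,ST^{-2}\,ST^{x+1}
\]
in $GL_2(\Z)$, together with the companion identity $ST^{-2}\,ST^x=-\,T^{-1}\,ST^2\,ST^{x-1}$ for the ``resp.'' case. In particular the hypothesis that $B$ is proper plays no role in this statement; it is carried over from the surrounding discussion but is irrelevant to the conclusion $B'=-B$.

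First I would record the explicit form of the relevant products, using \eqref{ST} and $T^x=\left(\begin{smallmatrix}1&0\\x&1\end{smallmatrix}\right)$. A one-line computation gives $ST^x=\left(\begin{smallmatrix}x&1\\1&0\end{smallmatrix}\right)$, and hence $ST^2=\left(\begin{smallmatrix}2&1\\1&0\end{smallmatrix}\right)$ and $ST^{-2}=\left(\begin{smallmatrix}-2&1\\1&0\end{smallmatrix}\right)$. For the first identity I would then multiply out each side separately: the left-hand side is $ST^2\,ST^x=\left(\begin{smallmatrix}2x+1&2\\x&1\end{smallmatrix}\right)$, while on the right $ST^{-2}\,ST^{x+1}=\left(\begin{smallmatrix}-2x-1&-2\\x+1&1\end{smallmatrix}\right)$, and left-multiplying by $T$ yields $T\,ST^{-2}\,ST^{x+1}=\left(\begin{smallmatrix}-2x-1&-2\\-x&-1\end{smallmatrix}\right)$, whose negative is exactly $\left(\begin{smallmatrix}2x+1&2\\x&1\end{smallmatrix}\right)$. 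Since the entries agree as polynomials in $x$, the identity holds for every $x\in\Z$, giving $B'=-B$. The second identity would be checked identically: $ST^{-2}\,ST^x=\left(\begin{smallmatrix}-2x+1&-2\\x&1\end{smallmatrix}\right)$ and $-\,T^{-1}\,ST^2\,ST^{x-1}=\left(\begin{smallmatrix}-2x+1&-2\\x&1\end{smallmatrix}\right)$ as well.

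I do not expect any genuine obstacle here: the whole content is a finite $2\times2$ matrix computation, and the only mild care needed is to treat the entries as linear functions of the integer parameter $x$, so that a single identity covers all values of $x$ at once. The point worth emphasising is conceptual rather than computational: the identity says that the nontrivial rewriting rule \eqref{3} (and its mirror \eqref{4}) acts on a basis matrix by left multiplication by $-I$, sending $(\tilde b_1,\tilde b_2)$ to $(-b_1,-b_2)$. As negating both rows leaves the generated lattice unchanged, this is precisely what guarantees that applying a nontrivial rule replaces one basis of the lattice by another basis of the \emph{same} lattice, as required in the proof of Theorem~\ref{rules}.
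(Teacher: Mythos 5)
Your proof is correct and follows essentially the same route as the paper's: both reduce the claim to the single matrix identity $ST^2\,ST^x=-\,T\,ST^{-2}\,ST^{x+1}$ (and its mirror) and verify it by explicit $2\times 2$ computation, with your entries $\left(\begin{smallmatrix}2x+1&2\\x&1\end{smallmatrix}\right)$ and $\left(\begin{smallmatrix}-2x-1&-2\\-x&-1\end{smallmatrix}\right)$ matching the paper's exactly. Your additional observations --- that properness of $B$ is not used, and that the lemma says rules \eqref{3} and \eqref{4} act by $-I$ and hence preserve the lattice --- are accurate glosses rather than a different argument.
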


\begin{proof}
The following equalities hold:
 \[
 ST^2\,ST^x=\begin{pmatrix} 2x+1 & 2 \\ x & 1 \end{pmatrix}
 \text{ and }
 T\,ST^{-2}\,ST^{x+1}=\begin{pmatrix} -2x-1 & -2 \\ -x & -1
\end{pmatrix},
 \]
which concludes the proof.
\end{proof}

\begin{lemma}\label{L:redSTSTx}
Let $R$ be a reduced basis such that $m\ne-\frac{1}{2}$. Put
$\tilde{B}:=ST\,ST^x\,\omega\,B$
(resp. $\tilde{B}:=ST^{-1}\,ST^x\,\omega_1\,B$),
where $\omega$ is a word on $\Sigma$ of the form~\eqref
{bondepart}. Then
the basis
$B':=(T\, ST^{-x-1}\,\omega^-)^{-1}\tilde{B}$ (resp.
$B':=(T^{-1}\, ST^{-x+1}\,\omega^-)^{-1}\tilde{B}$) is a
reduced basis.
\end{lemma}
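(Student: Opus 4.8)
The plan is to show that both rewriting rules \eqref{5} and \eqref{6} amount, at the level of matrices, to left--multiplying $B$ by a fixed signature matrix that negates exactly one of the two basis vectors; since $R$ is reduced with $m\neq-1/2$, such an operation again yields a reduced basis. This is the analogue of Lemma~\ref{L:redST2STx} (where the answer was $B'=-B$), the difference being that passing from $\omega$ to $\omega^{-}$ forces a single sign flip rather than none.

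First I would record the elementary $2\times2$ computations. Writing $ST^{x}=\begin{pmatrix}x&1\\1&0\end{pmatrix}$, one gets
\[
P:=ST\,ST^{x}=\begin{pmatrix}x+1&1\\x&1\end{pmatrix},\qquad
Q:=T\,ST^{-x-1}=\begin{pmatrix}-x-1&1\\-x&1\end{pmatrix},
\]
and a direct check gives the crucial relation $Q=PD$ with $D=\begin{pmatrix}-1&0\\0&1\end{pmatrix}$, hence $Q^{-1}P=D$. For the $ST^{-1}$ case (rule \eqref{6}) the same computation with $P':=ST^{-1}ST^{x}$ and $Q':=T^{-1}ST^{-x+1}$ yields $Q'=P'D'$ with $D'=\begin{pmatrix}1&0\\0&-1\end{pmatrix}$, so $Q'^{-1}P'=D'$. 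Both $D$ and $D'$ are signature matrices of determinant $-1$, and both satisfy the conjugation identities $D\,T^{a}\,D=T^{-a}$ and $D\,S\,D=-S$ (and likewise for $D'$), whence $D(ST^{a})D=-ST^{-a}$.

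Next I would propagate these identities through the tail $\omega$. If $\omega$ is of the form \eqref{bondepart} with $k$ occurrences of $S$, then inserting $D^{2}=\mathrm{id}$ between consecutive factors and applying the conjugation identities termwise gives $D\,\omega\,D=(-1)^{k}\omega^{-}$, equivalently $(\omega^{-})^{-1}=(-1)^{k}D\,\omega^{-1}\,D$. Substituting $\tilde B=P\,\omega\,B$ and $Q^{-1}P=D$ into
\[
B'=(Q\,\omega^{-})^{-1}\,\tilde B=(\omega^{-})^{-1}\,Q^{-1}P\,\omega\,B=(\omega^{-})^{-1}\,D\,\omega\,B,
\]
and using $D^{2}=\mathrm{id}$, the whole product collapses to $B'=(-1)^{k}D\,B$; the case of rule \eqref{6} is identical with $D'$ in place of $D$. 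Thus $B'$ is obtained from $B$ by negating the first row or the second row, according to the parity of $k$.

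Finally I would check that negating one vector of the reduced basis $R$ leaves it reduced. The norms $|b_{1}|,|b_{2}|$ are unchanged and $B'=(-1)^k DB$ generates the same lattice (as $D\in GL_2(\Z)$), so the rows of $B'$ are, up to sign, still the two shortest vectors generating the lattice; only the Gram coefficient changes, from $m$ to $m'=-m$. Here the hypothesis $m\neq-1/2$ is exactly what is needed: since $R$ is reduced we have $-1/2\le m<1/2$, so $m\neq-1/2$ forces $-1/2<m<1/2$ and hence $-1/2<m'<1/2$, keeping $B'$ proper and therefore reduced. The main bookkeeping obstacle is tracking the parity sign $(-1)^{k}$ alongside the two distinct signature matrices $D,D'$ for the two rules; once the identity $D\,\omega\,D=(-1)^{k}\omega^{-}$ is established, the reduction to a single sign flip, and with it the whole statement, is immediate.
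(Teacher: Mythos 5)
Your proof is correct and takes essentially the same approach as the paper's: both hinge on the identity $ST\,ST^{x}=T\,ST^{-x-1}\,D$ with $D=\begin{pmatrix}-1&0\\0&1\end{pmatrix}$ (and its analogue with $D'=-D$ for the $ST^{-1}$ case), followed by commuting the signature matrix through $\omega$ to convert it into $\omega^{-}$ up to sign, yielding $B'=\pm D\,B$. Your write-up only makes explicit two points the paper leaves implicit: the sign is exactly $(-1)^{k}$, and negating one vector of a reduced basis with $m\neq-1/2$ again gives a reduced basis.
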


\begin{proof}
\pn Notice that:
 \[
 ST\,ST^x=\begin{pmatrix} x+1 & 1 \\ x & 1 \end{pmatrix}
 \text{ and }
 T\,ST^{-x-1}=\begin{pmatrix} -x-1 & 1 \\ -x & 1 \end
{pmatrix}.
 \]
Thus, it is easy to prove that
$ST\,ST^x=T\,ST^{-x-1}\,\begin{pmatrix} -1 & 0 \\ 0 & 1 \end
{pmatrix}$.

The proof of the following easy claim is left to the reader.
\begin{claim}
Let $y$ be an integer. Then the following equalities hold:
$$
\begin{pmatrix} -1 & 0 \\ 0 & 1 \end{pmatrix}\,ST^y
=ST^{-y}\,\begin{pmatrix} 1 & 0 \\ 0 & -1 \end{pmatrix}
\text{ and }
\begin{pmatrix} 1 & 0 \\ 0 & -1 \end{pmatrix}\,ST^y
= ST^{-y}\,\begin{pmatrix} -1 & 0 \\ 0 & 1 \end{pmatrix}.
$$
\end{claim}

Thus, it is easy to prove by induction on the length of
$\omega_1$ that
there exists an integer $\beta\in\{-1,1\}$ such that
$ST\,ST^x\omega_1=
T\,ST^{-x-1}\,\omega^-_1\,
\beta\begin{pmatrix} -1 & 0 \\ 0 & 1 \end{pmatrix}$.
Thus, we have $B'=\beta\begin{pmatrix} -1 & 0 \\ 0 & 1 \end
{pmatrix}\, B$,
which
concludes the proof.
\end{proof}

\begin{corollary}\label{C:GLconcl}
Let $\omega_1=T^{x_{k+1}}\,\prod_{i=k}^1 ST^{x_{i}}$ be a
non--reduced
word, and let $R$ be a reduced basis. If
$\omega_1\, B$ is the input of the Gaussian algorithm, then
the sequence of
elementary transforms made during the execution is exactly
represented by the
reduced word uniquely associated to $\omega_1$.
\end{corollary}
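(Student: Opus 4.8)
The plan is to combine the three facts already established in this section: the termination and uniqueness of the reduced word (Corollary~\ref{C:redgaus}), the fact that every reduced word is a Gaussian word (Lemma~\ref{L:redgaus}), and the invariance of the underlying lattice under each rewriting rule (Lemmas~\ref{L:redST2STx} and~\ref{L:redSTSTx}). First I would invoke Corollary~\ref{C:redgaus} to obtain the unique reduced word $\omega_1'$ associated to $\omega_1$, together with the information that $\omega_1'$ is Gaussian. It then remains to show two things: that the Gaussian algorithm on input $\omega_1 R$ terminates with a reduced basis equivalent to $R$, and that the sequence of transforms it performs is exactly $\omega_1'$.

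The central computation is the pair of word identities behind the rules. Lemma~\ref{L:redST2STx} gives $ST^2ST^x=-\,(T\,ST^{-2}\,ST^{x+1})$ and its analogue for rule~\eqref{4}, while Lemma~\ref{L:redSTSTx} gives $ST\,ST^x\omega = T\,ST^{-x-1}\,\omega^-\,\beta\begin{pmatrix}-1&0\\0&1\end{pmatrix}$ with $\beta\in\{-1,1\}$, and its analogue for rule~\eqref{6}; the trivial rules \eqref{1} and \eqref{2} are plain matrix identities. I would read each of these as a statement about bases: applying a rule to the rightmost part of a word replaces the reduced basis $R$ by $\rho$, where $\rho$ is $\pm R$ or $\beta\begin{pmatrix}-1&0\\0&1\end{pmatrix}R$, in every case a reduced basis generating the same lattice, while leaving the product unchanged, i.e.\ $\omega_1 R=\omega_1'\rho$. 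Chaining these identities along the whole reduction $\omega_1\longrightarrow\cdots\longrightarrow\omega_1'$ (which terminates by Corollary~\ref{C:redgaus}) yields a single equation $\omega_1 R=\omega_1'\rho$ with $\rho$ a reduced basis of the lattice generated by $\omega_1 R$.

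From here the two remaining points follow. Since $\rho$ is reduced and generates the same lattice as the input, the Gaussian algorithm run on $\omega_1 R$ outputs $\rho$; and since $\omega_1'$ is Gaussian, Lemma~\ref{L:redgaus} guarantees that each of its factors $ST^{x_i'}$ is a genuine algorithmic step, namely an integer translation making the current basis proper followed by a swap that is forced because the translated vector has become the longer one (parts (1)--(2)), the lengths increasing at each step so that the algorithm cannot halt prematurely (part (3)). Reading $\omega_1'$ from its innermost factor outward and comparing with the deterministic choice made by the algorithm, an induction on the length of $\omega_1'$ shows that at every iteration the translation coefficient and the decision to swap agree with those prescribed by $\omega_1'$. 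As the Gaussian algorithm is deterministic, its execution on $\omega_1 R$ must coincide with $\omega_1'$, which is the claim.

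I expect the main obstacle to be the bookkeeping of the sign and reflection factors $-I$ and $\beta\begin{pmatrix}-1&0\\0&1\end{pmatrix}$ contributed by Lemmas~\ref{L:redST2STx} and~\ref{L:redSTSTx}. One has to check that these factors, which act on the reduced-basis side, may be absorbed into a relabelling of $R$ by an equivalent reduced basis without ever changing the exponents $x_i'$ of the transforms nor disturbing the properness and swap conditions that certify each step as Gaussian. Once this is verified, the word equation $\omega_1 R=\omega_1'\rho$ and the determinism of the algorithm align the combinatorial reduced word with the algorithmic execution exactly, completing the proof.
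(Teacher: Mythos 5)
Your proof is correct and takes essentially the same route as the paper, which states this corollary without a separate proof precisely because it is meant to be the assembly you carry out: Corollary~\ref{C:redgaus} for the unique Gaussian reduced word, Lemmas~\ref{L:redST2STx} and~\ref{L:redSTSTx} to push the sign and reflection factors onto the reduced basis so that $\omega_1 R=\omega_1'\rho$ with $\rho$ a reduced basis of the same lattice, and Lemma~\ref{L:redgaus} together with determinism to identify the algorithm's execution with $\omega_1'$. One minor remark: it is part~(2) of Lemma~\ref{L:redgaus} (the forced swap after each translation), rather than part~(3), that prevents the algorithm from halting prematurely, but this misattribution does not affect the argument.
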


\section{The length of a unimodular matrix with respect to
\\its Gaussian decomposition}

The easy proof of the following lemma is left to the reader.
\begin{lemma}\label{L:sizemin1}
Let $k$ be a positive integer. There exist nonnegative
integers $\alpha$,
$\beta$ and $\gamma$ such that:
 \[
 (ST^2)^k=\begin{pmatrix}\alpha&\beta\\\beta&\gamma\end
{pmatrix},
 (ST^{-2})^k=(-1)^k\begin{pmatrix}\alpha&-\beta\\-
\beta&\gamma\end{pmatrix}
\text{ and }\alpha=2\beta+\gamma.
 \]
\end{lemma}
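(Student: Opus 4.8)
The plan is to argue by induction on $k$, establishing the formula for $(ST^2)^k$ first and then deducing the one for $(ST^{-2})^k$ by a conjugation rather than a second induction. First I would record the two base matrices explicitly from the definitions~\eqref{ST}:
\[
ST^2=\begin{pmatrix}2&1\\1&0\end{pmatrix}, \qquad ST^{-2}=\begin{pmatrix}-2&1\\1&0\end{pmatrix}.
\]
For $k=1$ all three assertions hold with $\alpha=2$, $\beta=1$, $\gamma=0$, and indeed $\alpha=2\beta+\gamma$, which settles the base case.

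The inductive step for $A:=ST^2$ is the heart of the argument. Suppose $A^k=\begin{pmatrix}\alpha&\beta\\\beta&\gamma\end{pmatrix}$ with $\alpha,\beta,\gamma\ge 0$ and $\alpha=2\beta+\gamma$. Multiplying on the right by $A$ gives
\[
A^{k+1}=A^k\,A=\begin{pmatrix}2\alpha+\beta&\alpha\\2\beta+\gamma&\beta\end{pmatrix}.
\]
The crucial observation is that this matrix is symmetric \emph{precisely because} the inductive hypothesis supplies $\alpha=2\beta+\gamma$ (the two off-diagonal entries are $\alpha$ and $2\beta+\gamma$). One may therefore set $\alpha'=2\alpha+\beta$, $\beta'=\alpha$, $\gamma'=\beta$. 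Nonnegativity of $\alpha',\beta',\gamma'$ is immediate, since each is a sum of nonnegative quantities, and the required linear relation for $k+1$ is automatic: $2\beta'+\gamma'=2\alpha+\beta=\alpha'$. This closes the induction for $(ST^2)^k$. I note in passing that the entries satisfy the Pell-type recurrence $\alpha'=2\alpha+\beta$, $\beta'=\alpha$, the characteristic equation $\lambda^2-2\lambda-1=0$ of which has root $1+\sqrt2$; this is the source of the growth rate appearing later in Theorem~\ref{T:compirecas}.

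For $(ST^{-2})^k$ I would avoid repeating the induction and instead exploit a conjugation. With $D=\begin{pmatrix}-1&0\\0&1\end{pmatrix}$ (so that $D=D^{-1}$), a direct check gives $ST^{-2}=-D\,(ST^2)\,D^{-1}$, whence $(ST^{-2})^k=(-1)^k\,D\,(ST^2)^k\,D^{-1}$ by telescoping. Conjugating the symmetric matrix $\begin{pmatrix}\alpha&\beta\\\beta&\gamma\end{pmatrix}$ by $D$ simply flips the sign of the off-diagonal entries, producing exactly $(-1)^k\begin{pmatrix}\alpha&-\beta\\-\beta&\gamma\end{pmatrix}$ with the \emph{same} triple $(\alpha,\beta,\gamma)$ already obtained for $(ST^2)^k$. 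This simultaneously yields the second displayed identity and shows that one and the same $(\alpha,\beta,\gamma)$ works for both powers.

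No genuine obstacle is expected, as the lemma is elementary. The single point that demands care is the bookkeeping of the inductive invariant: the symmetry of $A^{k+1}$ is not automatic but relies on feeding the relation $\alpha=2\beta+\gamma$ back into the computation, so one must check that this relation is \emph{preserved} from $k$ to $k+1$ (which it is, through $\beta'=\alpha$ and $\gamma'=\beta$). Everything else reduces to a routine $2\times2$ matrix multiplication.
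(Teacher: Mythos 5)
Your proof is correct. Note that the paper itself offers no argument here (the lemma's proof is explicitly ``left to the reader''), so there is nothing to diverge from: your induction on $k$ for $(ST^2)^k$, with the relation $\alpha=2\beta+\gamma$ carried along as the invariant that makes $A^{k+1}$ symmetric, is exactly the kind of routine verification the authors intended, and your computation of the base matrices agrees with the paper's own display of $Q=ST^2=\begin{pmatrix}2&1\\1&0\end{pmatrix}$ in the proof of Theorem~\ref{T:compirecas}. The conjugation $ST^{-2}=-D\,(ST^2)\,D^{-1}$ with $D=\begin{pmatrix}-1&0\\0&1\end{pmatrix}$ is a nice economy: it delivers the $(ST^{-2})^k$ identity with the \emph{same} triple $(\alpha,\beta,\gamma)$ without a second induction, which is precisely what the lemma's statement requires.
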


\begin{lemma}\label{L:sizemin2}
Let $B$ be a proper basis, and let $x$ be an integer such
that $|x|\geq3$. Then
$\length(ST^x\,B)\geq\length(ST^2\,B)$ and
$\length(ST^x\,B)\geq\length(ST^{-2}\,B)$.
\end{lemma}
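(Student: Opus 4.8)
The plan is to compute $\length(ST^x\,B)$ explicitly as a function of $x$ and the Gram--Schmidt coefficient $m$ of $B$, and to observe that both comparisons then collapse to an elementary inequality between two squares.

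First I would reuse the matrix computation already carried out in the proof of Lemma~\ref{L:redgaus}: with $\tilde B := ST^x\,B$ and rows $\tilde b_1,\tilde b_2$, one has $ST^x=\left(\begin{smallmatrix} x & 1 \\ 1 & 0 \end{smallmatrix}\right)$, so that $\tilde b_1=b_2+x\,b_1$ and $\tilde b_2=b_1$. Expressing everything in the orthogonal basis $(b_1^*,b_2^*)$ of $B$ via $b_1=b_1^*$ and $b_2=m\,b_1^*+b_2^*$, orthogonality gives $|b_1|^2=|b_1^*|^2$ and $|b_2+x\,b_1|^2=(m+x)^2|b_1^*|^2+|b_2^*|^2$, whence
\[
\length(ST^x\,B)=\bigl[(m+x)^2+1\bigr]\,|b_1^*|^2+|b_2^*|^2 .
\]

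The key observation is that $|b_1^*|^2>0$ and $|b_2^*|^2\geq 0$ depend only on $B$, not on $x$; therefore $\length(ST^x\,B)$ is a strictly increasing affine function of $(m+x)^2$, and for all integers $x,y$ one has $\length(ST^x\,B)\geq\length(ST^y\,B)$ exactly when $(m+x)^2\geq(m+y)^2$. Taking $y\in\{2,-2\}$ and using that $B$ is proper, so $-1/2\leq m<1/2$ and hence $m+2>0$ and $2-m>0$, the two assertions of the lemma reduce to showing that, for every integer $x$ with $|x|\geq 3$,
\[
|m+x|\geq m+2 \qquad\text{and}\qquad |m+x|\geq 2-m .
\]

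I would finish with a short case analysis on the sign of $x$. If $x\geq 3$ then $m+x\geq m+3>0$, so $|m+x|=m+x$; the first inequality is immediate from $m+3>m+2$, and the second follows from $x\geq 3\geq 2-2m$ (using $m\geq-1/2$), which rearranges to $m+x\geq 2-m$. The case $x\leq-3$ is symmetric: then $m+x<0$, so $|m+x|=-m-x\geq-m+3$; the bound $-m+3>2-m$ is trivial, while $-m-x\geq m+2$ follows from $x\leq-3<-2m-2$ (using $m<1/2$). The only point that is not completely mechanical — the ``main obstacle'', such as it is — is each of the two \emph{crossed} comparisons, where one bounds $(m+x)^2$ below by the square taken on the far side of the origin; there the properness bound on $m$ must genuinely be invoked, whereas the near-side comparisons hold with room to spare.
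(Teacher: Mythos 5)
Your proposal is correct and follows essentially the same route as the paper: both compute the rows of $ST^x\,B$ in the orthogonal basis $(b_1^*,b_2^*)$, observe that the second row's norm is independent of $x$, and reduce the claim to the inequality $(m+x)^2\geq(2\pm m)^2$ via properness $-1/2\leq m<1/2$. The only difference is cosmetic: the paper factors the difference of squares as $(x+2+2m)(x-2)$ and $(x-2+2m)(x+2)$ and checks signs, whereas you take absolute values and split on the sign of $x$.
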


\begin{proof}
Put $\tilde{B}_x=(\tilde{b}_{1,x},\tilde{b}_{2,x})
=ST^x\,B$, for all
integer $x$.
Then
\begin{itemize}
\item $\tilde{b}_{1,x}=(x+m)b_1^*+b_2^*$;
\item $\tilde{b}_{2,x}=b_1^*$.
\end{itemize}

It is obvious that
$|\tilde{b}_{2,x}|^2=|\tilde{b}_{2,2}|^2=|\tilde{b}_{2,-2}
|^2$. Moreover,
we have
\[
|\tilde{b}_{1,x}|^2-|\tilde{b}_{1,2}|^2
=[(x+m)^2-(2+m)^2]|b_1^*|^2=[(x+2+2m)(x-2)]|b_1^*|^2\geq 0,
\]
and
\[
|\tilde{b}_{1,x}|^2-|\tilde{b}_{1,-2}|^2=[(x+m)^2-(m-2)^2]
|b_1^*|^2
=[(x-2+2m)(x+2)]|b_1^*|^2\geq 0,
\]
which concludes the proof.
\end{proof}

\begin{lemma}\label{L:sizemin3}
Let $B$ be a proper basis, let $k$ be a positive integer,
let
$\varepsilon\in\{1,-1\}$ be an integer, and let
$x$ be an integer such that $|x|\geq3$. The following
properties hold:
\begin{itemize}
\item if $x$ is positive, then
$\length((ST^2)^k\,ST^x\,B)\geq\length((ST^{\varepsilon2})^
{k+1}\,B)$;
\item if $x$ is negative, then
$\length((ST^{-2})^k\,ST^x\,B)\geq\length((ST^
{\varepsilon2})^{k+1}\,B)$.
\end{itemize}
\end{lemma}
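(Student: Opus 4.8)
The plan is to turn $\length$ into a trace and then exploit the nonnegativity of the entries of $(ST^2)^k$ provided by Lemma~\ref{L:sizemin1}. Writing the two vectors of a basis as the rows of a matrix, for any $M\in GL_2(\Z)$ one has $\length(MB)=\mathrm{tr}(M\,G\,M^{T})$, where $G$ is the Gram matrix of $B$, i.e. $G_{ij}=(b_i,b_j)$; in particular $\length$ is insensitive to an overall sign of $M$. Using the symmetric form $(ST^2)^k=\begin{pmatrix}\alpha&\beta\\\beta&\gamma\end{pmatrix}$ of Lemma~\ref{L:sizemin1}, a one-line expansion gives, for any symmetric $H=\begin{pmatrix}h_{11}&h_{12}\\h_{12}&h_{22}\end{pmatrix}$,
\[
\mathrm{tr}\!\left((ST^2)^k\,H\,\bigl((ST^2)^k\bigr)^{T}\right)=(\alpha^2+\beta^2)\,h_{11}+2\beta(\alpha+\gamma)\,h_{12}+(\beta^2+\gamma^2)\,h_{22}.
\]
Since $\alpha,\beta,\gamma\geq 0$, this is nonnegative as soon as $h_{11},h_{12},h_{22}\geq 0$. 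This is the engine of the whole argument: to compare two lengths $\length((ST^2)^kC_1)$ and $\length((ST^2)^kC_2)$ it suffices to check that the difference of Gram matrices $H:=G_{C_1}-G_{C_2}$ has nonnegative $(1,1)$ and $(1,2)$ entries (the $(2,2)$ entry will always vanish here, since both inner bases share the vector $b_1$).

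I would treat the case $x>0$ first. For $\varepsilon=1$ the target is $(ST^2)^{k+1}B=(ST^2)^k(ST^2B)$, so I set $C_1=ST^xB$, $C_2=ST^2B$ and compute $H$ directly from $\tilde b_1=b_2+xb_1$, $\tilde b_2=b_1$ and $(b_1,b_2)=m|b_1|^2$. One finds $h_{22}=0$, $h_{12}=(x-2)|b_1|^2$ and $h_{11}=(x-2)(x+2+2m)|b_1|^2$, both nonnegative because $x\geq 3$ and $-\tfrac12\leq m\leq\tfrac12$; the trace identity then yields $\length((ST^2)^kST^xB)\geq\length((ST^2)^{k+1}B)$. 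For $\varepsilon=-1$ I would first rewrite the target by a reflection. Putting $J:=\mathrm{diag}(1,-1)$, one has $\mathrm{tr}(W\,G\,W^{T})=\mathrm{tr}\!\left((JWJ)(JGJ)(JWJ)^{T}\right)$ and, by Lemma~\ref{L:sizemin1}, $J(ST^{-2})^{k+1}J=(-1)^{k+1}(ST^2)^{k+1}$; hence $\length((ST^{-2})^{k+1}B)=\length((ST^2)^{k+1}B')$, where $B'$ is the basis with Gram matrix $JGJ$, that is $B'=(b_1,-b_2)$. The same computation with $C_1=ST^xB$ and $C_2=ST^2B'$ gives $h_{22}=0$, $h_{12}=(x-2+2m)|b_1|^2$ and $h_{11}=(x+2)(x-2+2m)|b_1|^2$, again nonnegative for $x\geq3$, $|m|\leq\tfrac12$ (even at $m=-\tfrac12$ one has $x-2+2m=x-3\geq0$), which closes this case.

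Finally, the two cases $x<0$ follow from the two cases $x>0$ by the very same reflection $J$. A short matrix check gives $JST^xJ=-ST^{-x}$ and $J(ST^{-2})^kJ=(-1)^k(ST^2)^k$, so that $\length((ST^{-2})^kST^xB)=\length((ST^2)^kST^{-x}B')$ and $\length((ST^{\varepsilon2})^{k+1}B)=\length((ST^{-\varepsilon2})^{k+1}B')$, with $B'=(b_1,-b_2)$ and $-x>0$. Thus each $x<0$ inequality is literally an already-proved $x>0$ inequality for the reflected basis $B'$, whose parameter $m'=-m$ still satisfies $|m'|\leq\tfrac12$ (and the entry bounds above were verified on the closed range $|m|\le\tfrac12$, so the boundary value $m=-\tfrac12$, giving $m'=\tfrac12$, causes no trouble).

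The main obstacle is not conceptual but a matter of sign-bookkeeping: correctly computing the four Gram-matrix differences and, above all, tracking the sign of the $h_{12}$-term. For the negative runs the off-diagonal entry of $(ST^{-2})^k$ is $-\beta$, so one must pair it with an $h_{12}$ of the opposite sign; routing everything through the single reflection $J$ is precisely what keeps this sign-matching honest and spares us from redoing the trace computation four separate times.
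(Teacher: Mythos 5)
Your proof is correct, and its architecture is genuinely different from the paper's, even though both rest on the same two ingredients: the nonnegativity of $\alpha$, $\beta$, $\gamma$ from Lemma~\ref{L:sizemin1}, and the properness bound $|m|\le\frac12$. The paper's proof is coordinate-based: it expands the two rows of $(ST^2)^k\,ST^x\,B$ in the Gram--Schmidt basis $(b_1^*,b_2^*)$ and proves the stronger, componentwise statement that \emph{each} of the two vectors is at least as long as the corresponding vector of $(ST^{\varepsilon 2})^{k+1}\,B$, via four difference-of-squares factorizations in which $\alpha,\beta,\gamma$ occur inside the factors; the negative-$x$ bullet is then dismissed as ``very similar''. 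You instead route everything through $\length(MB)=\mathrm{tr}(M\,G\,M^{T})$, so that the outer power $(ST^2)^k$ enters only through a monotonicity principle (entrywise nonnegative symmetric $H$ gives nonnegative trace), and the case analysis shrinks to two $2\times2$ Gram-difference computations that never mention $\alpha,\beta,\gamma$ --- indeed your factors $(x-2)(x+2+2m)$ and $(x+2)(x-2+2m)$ are exactly the scalars appearing inside the paper's factorizations, so the underlying algebra coincides and only the packaging differs. Your reflection trick, $J(ST^{-2})^kJ=(-1)^k(ST^2)^k$ and $J\,ST^x\,J=-ST^{-x}$ with $J=\mathrm{diag}(1,-1)$, then settles both the $\varepsilon=-1$ target and the entire $x<0$ case by reduction to the case already done, which is tidier and more rigorous than the paper's appeal to similarity; and you correctly flagged the one genuine subtlety this creates, namely that the reflected basis $(b_1,-b_2)$ has $m'=-m$ and may violate the strict inequality $m<\frac12$ in the definition of proper when $m=-\frac12$ --- harmless, since your estimates were established on the closed interval $|m|\le\frac12$. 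The only thing you give up is the paper's per-vector conclusion, which is stronger than what the lemma asserts and is not needed anywhere.
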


\begin{proof}
Put $(ST^2)^k=\begin{pmatrix}\alpha&\beta\\\beta&\gamma\end
{pmatrix}$ and
$(ST^{-2})^k=(-1)^k\begin{pmatrix}\alpha&-\beta\\-
\beta&\gamma\end{pmatrix}$, as
in
Lemma~\ref{L:sizemin1}. Put
$\tilde{B}=(\tilde{b}_{1,x},\tilde{b}_{2,x})=(ST^2)
^k\,ST^x\,B$ and
$\tilde{B'}=(\tilde{b'}_{1,x},\tilde{b'}_{2,x})$.
Then
\begin{itemize}
\item $\tilde{b}_{1,x}=(\alpha(x+m)+\beta)b_1^*+\alpha
b_2^*$,
 \qquad $\tilde{b'}_{1,x}=(\alpha(x+m)-\beta)
b_1^*+\alpha b_2^*$;
\item $\tilde{b}_{2,x}=(\beta(x+m)+\gamma)b_1^*+\beta
b_2^*$,
 \qquad $\tilde{b'}_{2,x}=(-\beta(x+m)+\gamma)b_1^*-
\beta b_2^*$;
\end{itemize}
Suppose that $x$ is positive. Then
\begin{multline*}
|\tilde{b}_{1,x}|^2-|\tilde{b}_{1,2}|^2=
[(\alpha(x+m)+\beta)^2-(\alpha(2+m)+\beta)^2]|b_1^*|^2\\
=[\alpha(x-2)(\alpha(x+2+2m)+2\beta)]|b_1^*|^2\geq 0
\end{multline*}
and
\begin{multline*}
|\tilde{b}_{2,x}|^2-|\tilde{b}_{2,2}|^2=
[(\beta(x+m)+\gamma)^2-(\beta(2+m)+\gamma)^2]|b_1^*|^2\\
=[\beta(x-2)(\beta(x+2+2m)+2\gamma)]|b_1^*|^2\geq 0,
\end{multline*}
that is, $\length((ST^2)^k\,ST^x\,B)\geq\length((ST^2)^{k+1}
\,B)$.
Moreover,
\begin{multline*}
|\tilde{b}_{1,x}|^2-|\tilde{b'}_{1,-2}|^2=
[(\alpha(x+m)+\beta)^2-(\alpha(m-2)-\beta)^2]|b_1^*|^2\\
=[\alpha(x-2+2m)(\alpha(x+2)+2\beta)]|b_1^*|^2\geq 0
\end{multline*}
and
\begin{multline*}
|\tilde{b}_{2,x}|^2-|\tilde{b'}_{2,-2}|^2=
[(\beta(x+m)+\gamma)^2-(-\beta(m-2)+\gamma)^2]|b_1^*|^2\\
=[\beta(x-2+2m)(\beta(x+2)+2\gamma)]|b_1^*|^2\geq 0,
\end{multline*}
that is, $\length((ST^2)^k\,ST^x\,B)\geq\length((ST^{-2})^
{k+1}\,B)$. The
second part
of the proof is very similar.
\end{proof}

Proofs of Lemmas~\ref{L:sizemin4} and~\ref{L:sizemin5} are
very similar to
the proof of
Lemma~\ref{L:sizemin3}.

\begin{lemma}\label{L:sizemin4}
Let $B$ be a proper basis and let $x$ be an integer. Then
$\length(T^x\,B)\geq\length(B)$.
\end{lemma}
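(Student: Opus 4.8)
The plan is to reduce the claim to a single inequality about the squared length of the translated second vector, and then to verify that inequality by passing to the Gram--Schmidt system $b^*$, exactly in the spirit of the proof of Lemma~\ref{L:sizemin2}. First I would record that $T^x=\begin{pmatrix}1&0\\x&1\end{pmatrix}$, so that, writing $(\tilde b_1,\tilde b_2)=T^x\,B$, the rows are $\tilde b_1=b_1$ and $\tilde b_2=b_2+x\,b_1$. The first summand of the length is therefore unchanged, $|\tilde b_1|^2=|b_1|^2$, and the whole statement collapses to showing that $|b_2+x\,b_1|^2\geq|b_2|^2$.

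Next I would use the orthogonal system $b^*=(b_1^*,b_2^*)$ of \eqref{b*}, in which $b_1=b_1^*$ and $b_2=m\,b_1^*+b_2^*$. Then $b_2+x\,b_1=(x+m)\,b_1^*+b_2^*$, and since $b_1^*\perp b_2^*$ the cross term drops out, giving
$$
|b_2+x\,b_1|^2-|b_2|^2=\big[(x+m)^2-m^2\big]\,|b_1^*|^2=x(x+2m)\,|b_1^*|^2.
$$
Thus it suffices to check that $x(x+2m)\geq 0$ for every integer $x$, which is exactly where the properness hypothesis is consumed.

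The only point that needs any care is this final sign analysis. Since $B$ is proper we have $-1/2\leq m<1/2$, that is $-1\leq 2m<1$. For $x=0$ the quantity vanishes; for $x\geq 1$ we get $x+2m\geq 1+2m\geq 0$, so the product is nonnegative; and for $x\leq-1$ we get $x+2m\leq-1+2m<0$, so $x(x+2m)$ is a product of two negative factors and is again nonnegative. I would note in passing that the strict inequality $2m<1$ is precisely what closes the negative-$x$ case, while the boundary value $m=-1/2$ is harmless in the positive-$x$ case (there $x+2m$ can equal $0$ but never becomes negative). This yields $|\tilde b_2|^2\geq|b_2|^2$, hence $\length(T^x\,B)\geq\length(B)$, as claimed. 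There is no real obstacle here beyond this short case distinction; the content of the lemma is entirely the identity $x(x+2m)\geq0$ under $|2m|\leq 1$.
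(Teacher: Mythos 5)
Your proof is correct and follows essentially the same route the paper intends: the paper omits an explicit proof, stating only that it is ``very similar to the proof of Lemma~\ref{L:sizemin3}'', which is precisely your method of expressing the vectors in the Gram--Schmidt system $b^*$, computing the difference of squared lengths, factoring it, and using the properness bound $-1/2\leq m<1/2$ to get nonnegativity. Your sign analysis of $x(x+2m)$ is complete and correct, including the boundary case $m=-1/2$.
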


\begin{lemma}\label{L:sizemin5}
Let $R$ be a reduced basis, let $k$ be a positive integer,
let
$x$ and $x'$ be integers. Then
\begin{itemize}
\item if $x\geq x'\geq 0$, then
$\length((ST^{2})^k\,ST^x\,R)\geq\length((ST^{2})^k\,ST^{x'}
\,R)$;
\item if $x\leq x'\leq 0$, then
$\length((ST^{-2})^k\,ST^x\,R)\geq\length((ST^{-2})^k\,ST^
{x'}\,R)$.
\end{itemize}
Moreover, the following properties hold:
\begin{itemize}
\item if $m$ is positive, then
$\length((ST^2)^k\,S\,R)\geq\length((ST^{-2})^{k}\,S\,R)$;
\item if $m$ is negative, then
$\length((ST^{-2})^k\,S\,R)\geq\length((ST^{2})^{k}\,S\,R)$.
\end{itemize}
\end{lemma}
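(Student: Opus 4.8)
The plan is to push everything into the orthogonal frame $(b_1^*,b_2^*)$ and reduce each inequality to the behaviour of a single one–variable quadratic, following the template already used for Lemma~\ref{L:sizemin3}. First I would invoke Lemma~\ref{L:sizemin1} to write $(ST^2)^k$ and $(ST^{-2})^k$ in the stated shape with nonnegative integers $\alpha,\beta,\gamma$ obeying $\alpha=2\beta+\gamma$. Since the rows of $ST^xR$ are $(x+m)b_1^*+b_2^*$ and $b_1^*$, multiplying on the left by $(ST^2)^k$ yields the vectors $(\alpha(x+m)+\beta)b_1^*+\alpha b_2^*$ and $(\beta(x+m)+\gamma)b_1^*+\beta b_2^*$, while $(ST^{-2})^k$ gives $(\alpha(x+m)-\beta)b_1^*+\alpha b_2^*$ and $(\beta(x+m)-\gamma)b_1^*-\beta b_2^*$, up to the irrelevant global sign $(-1)^k$. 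Because $b_1^*\perp b_2^*$, the $|b_2^*|^2$–contribution is the constant $(\alpha^2+\beta^2)|b_2^*|^2$ in both cases, independent of $x$; hence $\length((ST^2)^k\,ST^x\,R)=f(x)\,|b_1^*|^2+(\alpha^2+\beta^2)|b_2^*|^2$, where $f(x)=(\alpha(x+m)+\beta)^2+(\beta(x+m)+\gamma)^2$, and the negative–power statement is identical with $f$ replaced by $g(x)=(\alpha(x+m)-\beta)^2+(\beta(x+m)-\gamma)^2$.

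For the first two bullets I would compare $f(x)$ with $f(x')$ (resp.\ $g$) by expanding and factoring, which gives
\[
f(x)-f(x')=(x-x')\bigl[(\alpha^2+\beta^2)(x+x'+2m)+2\beta(\alpha+\gamma)\bigr],
\]
\[
g(x)-g(x')=(x-x')\bigl[(\alpha^2+\beta^2)(x+x'+2m)-2\beta(\alpha+\gamma)\bigr].
\]
If $x=x'$ the difference vanishes. For the first bullet with $x>x'\ge 0$, integrality forces $x\ge 1$, so $x+x'\ge 1$ and, using $m\ge -1/2$, one gets $x+x'+2m\ge 0$; the bracket is then nonnegative while $x-x'>0$, giving $f(x)\ge f(x')$. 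The second bullet is the mirror image: $x<x'\le 0$ forces $x+x'\le -1$, so $m<1/2$ yields $x+x'+2m<0$, making the bracket nonpositive while $x-x'<0$, whence $g(x)\ge g(x')$. This integrality step is exactly where the main obstacle lies: the vertex of $f$ sits at a nonpositive value of $x+m$ and that of $g$ at a nonnegative one (both vertices in fact lie in $[-1/2,1/2]$, using $\alpha=2\beta+\gamma$), so a purely real–variable monotonicity argument breaks down in the narrow window $-1/2\le m<1/2$ around $x=0$; it is only the discrete jump $x+x'\ge 1$ (resp.\ $\le-1$) that, combined with the bounds on $m$, forces the correct sign.

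Finally, for the two \emph{Moreover} assertions I would specialise to $x=0$, i.e.\ to $S=ST^0$. Subtracting the two lengths cancels the common $|b_2^*|^2$–term, and
\[
f(0)-g(0)=\bigl[(\alpha m+\beta)^2-(\alpha m-\beta)^2\bigr]+\bigl[(\beta m+\gamma)^2-(\beta m-\gamma)^2\bigr]=4m\,\beta(\alpha+\gamma),
\]
so that $\length((ST^2)^k\,S\,R)-\length((ST^{-2})^k\,S\,R)=4m\,\beta(\alpha+\gamma)\,|b_1^*|^2$. Since $\alpha,\beta,\gamma\ge 0$, the sign of this difference is precisely the sign of $m$, which delivers both directions at once and completes the proof.
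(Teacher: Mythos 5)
Your proposal is correct and follows essentially the same route the paper intends: the paper omits the proof of Lemma~\ref{L:sizemin5}, stating only that it is ``very similar to the proof of Lemma~\ref{L:sizemin3}'', and your argument is exactly that method --- invoke Lemma~\ref{L:sizemin1}, expand the rows in the orthogonal frame $(b_1^*,b_2^*)$, and compare lengths via factored differences of squares such as $(x-x')\bigl[(\alpha^2+\beta^2)(x+x'+2m)\pm2\beta(\alpha+\gamma)\bigr]$. Your explicit observation that integrality ($x+x'\geq 1$, resp.\ $\leq -1$) combined with $-1/2\leq m<1/2$ is what forces the sign near the vertex is a detail the paper glosses over, and it is handled correctly.
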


\begin{proof}
Proof of Theorem~\ref{T:basemin}.

Notice first that, by Lemma~\ref{L:sizemin4}, we can
suppose that $x_{k+1}=0$.
Suppose that $k=1$. Then, by Lemma~\ref{L:sizemin5},
$\length(ST^{x_1}B)\geq\length(SB)$.

Suppose that $k\geq 1$. The variable $\varepsilon$ will
denote an integer in
$\{1,-1\}$. We construct a sequence $\omega_0$, $\omega_1$,
\dots, $\omega_k$
such that the following properties hold:
\begin{enumerate}
\item $\omega_0=\omega$ and $\omega_k=(ST^{\varepsilon2})^
{k-1}S$;
\item $\omega_j=(ST^{\varepsilon 2})^j\prod_{i=1}^{k-j}ST^
{x_i}$, with
$\varepsilon\in\{1,-1\}$ such that $(\varepsilon 2)x_{k-j}
\geq 0$, for
all $j\in\{1,\dots,k-1\}$;\label{E:motST}
\item $\length(\omega_j\,B)\leq\length(\omega_{j-1}\,B)$,
for all
$j\in\{1,\dots,k\}$.\label{E:motlong}
\end{enumerate}
Notice that $\omega_j$ may be equal to $\omega_{j-1}$ for
some
$j\in\{1,\dots,k\}$.
We proceed by induction on an integer $j$ such that $1\leq
j\leq k$.
Put $\omega'=\prod_{i=1}^{k-1}ST^{x_i}$, $\tilde{B'}
=\omega'B$ and
$\tilde{B}=ST^{x_k}\tilde{B'}=\omega\,B$. By Lemma~\ref
{L:sizemin2}, we
know that
$\length(\tilde{B})=\length(ST^{x_k}\tilde{B'})
\geq\length(ST^{\varepsilon2})\tilde{B'})$. Now, either $x_
{k-1}$ is
positive and we
put $\omega_1=ST^2\omega'$, or $x_{k-1}$ is negative and we
put $\omega_1=ST^{-2}\omega'$, so that $\omega_1$ is
Gaussian.

Let $j$ be an integer in $\{1,\dots,k-2\}$ such that
$\omega_j$
verifies~\eqref{E:motST} and~\eqref{E:motlong}. By
Lemma~\ref{L:sizemin3},
it is
obvious that we can put:
\begin{itemize}
\item $\omega_{j+1}=(ST^2)^{j+1}\prod_{i=1}^{k-j-1}ST^{x_i}
$ if $x$ is
positive;
\item $\omega_{j+1}=(ST^{-2})^{j+1}\prod_{i=1}^{k-j-1}ST^
{x_i}$ if $x$ is
negative.
\end{itemize}

Suppose now that we constructed
$\omega_{k-1}=(ST^{\varepsilon2})^{k-1}ST^{x_1}$.
Then, by Lemma~\ref{L:sizemin5}, we can put $\omega_k=(ST^
{2})^{k-1}S$
whether $m$ is
positive and $\omega_k=(ST^{-2})^{k-1}S$ whether $m$ is
negative.
\end{proof}
\section{The maximum number of steps of the Gaussian
algorithm}

\begin{proof}
Proof of Theorem~\ref{T:compirecas}.

 From the last section , we know that the input with the
smallest length
demanding $k$ steps is $(ST^2)^k$ or $(ST^{-2})^k$. Let $Q$
be equal to
$(ST^2)$.
The symmetrical matrix $Q$
$$\begin{pmatrix}2 & 1\\1 &0\end{pmatrix}$$
has its eigenvalues equal to $1+\sqrt 2$ and $1-\sqrt 2$.
It can be
diagonalized and one deduces that there exist two constants
$\alpha>0$ and
$\beta>0$ such that all coefficients of $Q^k$ are expressed
in the form
$$ \alpha (1+\sqrt 2)^k +\beta (1-\sqrt 2)^k =
\alpha (1+\sqrt 2)^k
\left(1 + \frac{\beta (1-\sqrt 2)^k}{\alpha (1+\sqrt 2)^k}
\right).$$ This leads
to the lower--bound proposed by the theorem.
\end{proof}

\begin{proof}
Proof of Corollary~\ref{C:conspircas}.
\end{proof}

\section{Sorting algorithms}

\begin{lemma}\label{L:bubblerpt}
The rewriting process always terminates.
\end{lemma}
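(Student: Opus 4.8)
The plan is to prove termination by exhibiting a well-founded monovariant, following the template of Lemma~\ref{L:rptGL}. The primary quantity I would use is the sum of the subscripts, $\sigma(\omega)=\sum_{j=1}^{k}i_j$ for $\omega=t_{i_1}\cdots t_{i_k}$, a nonnegative integer. First I would record how each rule affects $\sigma$. Rule~\eqref{perm1} deletes two copies of some $t_i$, so $\sigma$ drops by $2i\ge 2$. Rule~\eqref{perm2} replaces the active letters $t_{i+1},t_i,t_{i+1}$ by $t_i,t_{i+1},t_i$ (the letters of $\omega$ being untouched), i.e. it trades one occurrence of $t_{i+1}$ for one occurrence of $t_i$, so $\sigma$ drops by exactly $1$. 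Finally, rules~\eqref{perm3} and~\eqref{perm4} merely permute the letters of the word, leaving the multiset of subscripts, and hence $\sigma$, unchanged.

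Since $\sigma$ is a nonnegative integer that strictly decreases under~\eqref{perm1} and~\eqref{perm2} and never increases, these two rules can be applied only finitely many times along any derivation. Consequently it suffices to bound the length of any maximal run consisting solely of applications of~\eqref{perm3} and~\eqref{perm4}. These two rules preserve both the length of the word and its multiset of subscripts, so every such run takes place inside the finite set of rearrangements of a fixed multiset of letters of fixed length. Termination of the whole system therefore reduces to showing that the subsystem $\{\eqref{perm3},\eqref{perm4}\}$ is acyclic, equivalently that it admits its own strictly decreasing secondary monovariant.

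For this secondary monovariant I would exploit the fact that each of~\eqref{perm3} and~\eqref{perm4} moves one distinguished letter strictly to the left while shifting the intervening contiguous block one step to the right. I would therefore set up a measure of sum-of-positions type attached to the distinguished, left-moving letters, in exact analogy with the quantity $d(\omega_1)$ used in Lemma~\ref{L:rptGL}, and then verify that it strictly decreases under each rule. This verification is where the real work lies: a \emph{pure} positional sum is invariant under a block shift, so the measure must be weighted, and the weighting has to be controlled using the side conditions $\dist(i+1,\omega)>1$ in~\eqref{perm3} and $\dist(j,k\omega)>1$ together with $i\le j\le k$ or $k<i\le j$ in~\eqref{perm4}, since these are precisely what restrict which letter may move and past what.

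I expect the \textbf{main obstacle} to be organizing this last step into a finite case analysis, mirroring the case split carried out in Lemma~\ref{L:rptGL}, so that the secondary measure provably drops in every configuration allowed by the side conditions. The structural reason such a decreasing measure must exist is that, by the Remark following Theorem~\ref{T:redbubble}, all four rules preserve the element of $\mathcal{S}$ represented by the word; hence the subsystem $\{\eqref{perm3},\eqref{perm4}\}$ can only rearrange a fixed word toward the unique bubblian normal form, and within the finite rearrangement set this rules out cycles once the monovariant is in hand.
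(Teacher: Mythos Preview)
Your primary monovariant $\sigma$ coincides with the paper's $l(\omega)=\sum_j \alpha_j$, and your reduction of the problem to a secondary measure for the subsystem $\{\eqref{perm3},\eqref{perm4}\}$ is exactly the paper's strategy. The gap is that you never actually define this secondary measure: you describe it as ``sum-of-positions type attached to the distinguished, left-moving letters'' and then defer the verification. The paper supplies an explicit function,
\[
h(\omega)=\sum_{i=1}^{s(\omega)}(s(\omega)-i)\bigl(\max(\omega)-|\omega_i|\bigr),
\]
where $[\omega_1,\dots,\omega_{s(\omega)}]$ is the increasing decomposition of $\omega$, and checks directly that $h$ drops by exactly $1$ under~\eqref{perm3} and by $q-p>0$ under~\eqref{perm4}, where $p<q$ are the indices of the two blocks of the increasing decomposition touched by the rule. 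That computation is what you still owe; note that the measure is phrased in terms of the \emph{lengths of the increasing blocks}, not positions of individual letters, which is why the block-shift does not wash it out.

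Your closing ``structural reason'' does not fill the gap. That the rules preserve the element of $\mathcal{S}$, and that a unique bubblian normal form exists, are both compatible with non-termination: a rewriting system can cycle among words representing the same group element (trivially, $ab\to ba$ and $ba\to ab$). Uniqueness of the target says nothing about whether every rewriting sequence reaches it; the monovariant must be \emph{constructed}, not inferred from the existence of a normal form.
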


\begin{proof}
Let $l\colon\Sigma^*\to\N$ be the map given by the rule
 \[
 l(\omega)=\sum_{i=1}^{|\omega|} \alpha_i,
 \]
where $\omega=t_{\alpha_1}\dots t_{\alpha_{|\omega|}}$.
Let $h\colon\Sigma^*\to\N$ be the map given by the rule
 \[
 h(\omega)=\sum_{i=1}^{s(\omega)} (s(\omega)-i)(\max
(\omega)-|\omega_i|),
 \]
where $[\omega_1,\dots,\omega_m]$ is the increasing
decomposition of $\omega$.
We proceed here by induction on the nonnegative integers $l
(\omega)$ and $h(\omega)$. Suppose that $l(\omega)=0$. Then
$\sigma=1$ and the rewriting process is over. Suppose that
$l(\omega)=1$ and $h(\omega)=0$. Then $\omega=t_1$ and the
rewriting process is over. Notice that there exists no
word on $\Sigma^*$ such that $l(\omega)=1$ and $h(\omega)>0
$.

Suppose now that $l(\omega)=l$, that $h(\omega)=h$ and
that we can apply a rewriting rule. If we apply one of the
rules \eqref{perm1} or \eqref{perm2}, then we get a word
$\omega'$ such that $l(\omega')<l(\omega)$. If we apply
\eqref{perm3} or \eqref{perm4}, then $l(\omega')=l(\omega)
$. Let $[\omega_1,\dots,\omega_m]$ be the increasing
decomposition of $\omega$.

Suppose that we can apply \eqref{perm3}. Then there exists
$j\in\{1,\dots,m-1\}$ such that $\min(\omega_{j+1})=i$ and
$\min(\omega_{j})>i+1$. Put
 \begin{align*}
 \omega=\omega_1\,\dots\,\omega_j\, t_i\,\omega'_{j+1}
\,\dots\,\omega_m;\\
 \omega'=\omega_1\,\dots\, t_i\,\omega_j\, \omega'_{j+1}
\,\dots\,\omega_m;
 \end{align*}
where $\omega'_{j+1}$ is such that
$\omega_{j+1}=t_i\,\omega'_{j+1}$. Then
 \[
 \begin{tabular}{l l l l}
 $h(\omega')$ & = & $h(\omega)$& $-(s(\omega)-j)(\max
(\omega)-|\omega_j|)$
 $-(s(\omega)-(j+1))(\max(\omega)-|\omega_{j+1}|)$\\
 & & & $+(s(\omega)-j)(\max(\omega)-(|\omega_j|+1))$\\
 & & & $+(s(\omega)-(j+1))(\max(\omega)-(|\omega_{j+1}|-
1))$\\
 & = & $h(\omega)$ & $-(s(\omega)-j)+(s(\omega)-(j+1))$\\
 & = & $h(\omega)$ & $-1$.\\
 \end{tabular}
 \]
Suppose now that we can apply the rule~\eqref{perm4}. Then
there exist $p$, $q\in\{1,\dots,m\}$ and $i$, $j$,
$k\in\{1,\dots,\max(\omega)\}$ such that one of the
following cases occurs.
\setcounter{case}{0}
\begin{case}
Suppose that $i<k\leq\max(\omega_p)$.
\end{case}
Then
 \begin{align*}
 \omega=\omega_1\,\dots\,\omega'_p\, t_i\,t_k\,\omega''_
{p}\,\dots\,
 \omega'_q\, t_j\,\omega''_{q}\,\dots\,\omega_m;\\
 \omega'=\omega_1\,\dots\,\omega'_p\,
t_i\,t_j\,t_k\,\omega''_{p}\,\dots\,
 \omega'_q\,\omega''_{q}\,\dots\,\omega_m.
 \end{align*}
\begin{case}
Suppose that $i=\max(\omega_p)$ and $k=\min(\omega_{p+1})$.
\end{case}
Then,
 \begin{align*}
 \omega=\omega_1\,\dots\,\omega'_p\, t_i\,t_k\,\omega'_
{p+1}\,\dots\,
 \omega'_q\, t_j\,\omega''_{q}\,\dots\,\omega_m;\\
 \omega'=\omega_1\,\dots\,\omega'_p\,
t_i\,t_j\,t_k\,\omega'_{p+1}\,\dots\,
 \omega'_q\,\omega''_{q}\,\dots\,\omega_m.
 \end{align*}

Thus,
 \[
 \begin{tabular}{l l l l}
 $h(\omega')$ & = & $h(\omega)$& $-(s(\omega)-p)(\max
(\omega)-|\omega_p|)$
 $-(s(\omega)-q)(\max(\omega)-|\omega_{q}|)$\\
 & & & $+(s(\omega)-p)(\max(\omega)-(|\omega_p|+1))$\\
 & & & $+(s(\omega)-q)(\max(\omega)-(|\omega_{q}|-1))$\\
 & = & $h(\omega)$ & $-(s(\omega)-p)+(s(\omega)-q)$\\
 & = & $h(\omega)$ & $+p-q$,\\
 \end{tabular}
 \]
which concludes the proof.
\end{proof}

\begin{lemma}\label{L:iwp}
Let $\omega$ be a reduced word, and let
$[\omega_1,\dots,\omega_m]$ be its increasing
decomposition. Let $p\in\{1,\dots,m-1\}$ be an integer. If
$t_i$ is in $\omega_{p+1}$, then $t_{i+1}$ is in $\omega_p$.
\end{lemma}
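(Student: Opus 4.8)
The plan is to pass from the rewriting picture to the dynamics of the bubble sort itself. By Theorem~\ref{T:redbubble} a reduced word is its own bubblian decomposition, so I may assume that $\omega=\omega_1\cdots\omega_m$ records the successive passes of the bubble sort on the underlying permutation $\sigma$: for each $j$, the block $\omega_j$ is exactly the (increasing) list of positions $i$ at which a swap occurs during the $j$-th pass. With this reading, the statement ``$t_i\in\omega_{p+1}\Rightarrow t_{i+1}\in\omega_p$'' becomes the purely algorithmic assertion that a swap at position $i$ in pass $p+1$ forces a swap at position $i+1$ in the previous pass $p$.

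First I would record the prefix-maximum description of a single pass. Writing $B^{(q)}_1,\dots,B^{(q)}_n$ for the array at the start of pass $q$, the carry mechanism of one left-to-right sweep shows that the value confronted with position $i+1$ is the running maximum $\max(B^{(q)}_1,\dots,B^{(q)}_i)$; hence a swap occurs at position $i$ in pass $q$ if and only if $B^{(q)}_{i+1}<\max(B^{(q)}_1,\dots,B^{(q)}_i)$, and moreover $B^{(q+1)}_i\le\max(B^{(q)}_1,\dots,B^{(q)}_{i+1})$ in all cases. These two facts (distinctness of the entries is used freely) are the only computations needed. Then I would argue the contrapositive: assume no swap occurs at position $i+1$ in pass $p$; by the characterization this means $B^{(p)}_{i+2}$ is a prefix maximum, so $B^{(p+1)}_{i+1}=\max(B^{(p)}_1,\dots,B^{(p)}_{i+1})$. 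Combining this with the inequality above gives $B^{(p+1)}_{i+1}\ge B^{(p+1)}_j$ for every $j\le i$, i.e. $B^{(p+1)}_{i+1}$ is itself a prefix maximum of the start-of-pass-$(p+1)$ array; by the characterization there is then no swap at position $i$ in pass $p+1$, which is exactly the contrapositive of the claim.

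Finally I would reconcile the two indexings. The statement is phrased for the increasing decomposition of $\omega$, so I must check that its blocks coincide with the passes. This is where I expect the only real subtlety: one cannot simply declare ``blocks $=$ passes,'' since that identification is itself a consequence of the staircase property and asserting it up front would be circular. The clean order is therefore to prove the swap statement for passes first (the steps above), and only then to observe that it forces $\max(\omega_p)\ge\min(\omega_{p+1})+1>\min(\omega_{p+1})$ at every junction, so that the concatenation of the passes genuinely drops at each boundary and the pass decomposition is precisely the maximal-increasing-run (increasing) decomposition of $\omega$.

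The alternative route, staying entirely inside the rewriting system and deriving a contradiction from reducedness whenever the property fails, is also possible but messier. One would take the offending letter $t_i$ at the front of $\omega_{p+1}$ and try to move it leftwards: if $\omega_p$ meets none of $t_i,t_{i+1},t_{i+2}$ then Rule~\eqref{perm3} applies, and if $\omega_p$ contains $t_i$ or a usable gap then Rule~\eqref{perm4} (together with~\eqref{perm1}) applies. The trouble is the case where $\omega_p$ already contains $t_{i+2}$ but not $t_i,t_{i+1}$ and begins above $i$: there $t_i$ cannot be absorbed into $\omega_p$ and the argument must cascade into the earlier blocks by induction on $p$. Controlling that cascade is the main obstacle on this route, and it is exactly what the semantic argument sidesteps.
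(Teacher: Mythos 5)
Your semantic argument has a genuine circularity. You begin by invoking Theorem~\ref{T:redbubble} to read the reduced word $\omega$ as the record of an actual execution of bubble sort (``a reduced word is its own bubblian decomposition''). But in the paper's development Lemma~\ref{L:iwp} is itself a step in the proof of Theorem~\ref{T:redbubble}: the lemma gives Corollary~\ref{C:algow1}, which feeds Lemma~\ref{L:wordmax}, which gives Corollary~\ref{C:bub=red} (every reduced word is bubblian), and only then is the theorem concluded. At the point where Lemma~\ref{L:iwp} must be proved, a reduced word is merely a word in $\Sigma^*$ to which none of the rules \eqref{perm1}--\eqref{perm4} applies; nothing proved so far licenses reading its letters as swaps performed by the algorithm on the underlying permutation. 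Your reordering fix (prove the swap-propagation fact for passes first, then deduce that passes coincide with maximal increasing runs) repairs only the minor circularity of identifying blocks with passes \emph{inside a word already known to be bubblian}; it does not touch the real issue, namely that no connection between reduced words and bubblian words is yet available. What your argument honestly establishes is the staircase property for \emph{bubblian} words (your prefix-maximum analysis of a single pass is correct), but transferring it to \emph{reduced} words requires exactly the implication ``reduced $\Rightarrow$ bubblian,'' i.e. Corollary~\ref{C:bub=red}, which lies downstream of this lemma. Unless you supply an independent proof of that implication, the proposal does not prove the stated lemma.

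The route you set aside as ``messier'' is in fact the paper's proof, and it is shorter than you fear. Assuming the property fails for some $p$, one takes $i$ \emph{minimal} with $t_i\in\omega_{p+1}$ and $t_{i+1}\notin\omega_p$; minimality guarantees in particular that $t_{i-1}\in\omega_{p+1}$ forces $t_i\in\omega_p$, and a three-case analysis (according to whether $i=\min(\omega_{p+1})$ and where $t_{i-1}$ can sit) shows that one of the rules \eqref{perm2}, \eqref{perm3}, \eqref{perm4} applies to $\omega$, contradicting reducedness. The cascading induction on $p$ that you identified as the obstacle on this route never arises: the minimal choice of $i$ confines the entire argument to the two blocks $\omega_p$ and $\omega_{p+1}$.
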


\begin{proof}
Let $p\in\{1,\dots,m\}$ be such that the property does
not hold. Let $i\in\{1,\dots,\max(\omega)\}$ be the
smallest integer such that $t_i\in\omega_{p+1}$, and $t_
{i+1}\notin\omega_p$. Then, there exists $\omega'_p$,
$\omega''_p$, $\omega'_{p+1}$, $\omega''_{p+1}\in\Sigma^*$
such that
 \[
 \omega=\omega_1\,\dots\,\omega'_p\,\omega''_{p}\,
 \omega'_{p+1}\,t_i\,\omega''_{p+1}\,\dots\,\omega_m;
 \]
and such that the following inequalities hold:
 \begin{align*}
 &\max(\omega'_{p})<i+1<\min(\omega''_p);\\
 &\max(\omega'_{p+1})<i<\min(\omega''_{p+1});\\
 &\dist(t_i,\omega''_{p})>1.
 \end{align*}

\setcounter{case}{0}
 \begin{case}
 Suppose that $i=\min(\omega_{p+1})$.
 \end{case}
Then $\omega'_{p+1}=1$, $\omega''_{p}$ is maximal on the
left and we can apply the rule~\eqref{perm3}.
 \begin{case}
 Suppose that $i\ne\min(\omega_{p+1})$ and that $t_{i-1}
\in\omega''_{p+1}$.
 \end{case}
Since $t_i\in\omega_p$ and $\dist(t_i,\omega''_p)>1$, we
can apply the rule~\eqref{perm2}.
 \begin{case}
 Suppose that $i\ne\min(\omega_{p+1})$ and that $t_{i-1}
\notin\omega''_{p+1}$.
 \end{case}
Then $\dist(t_i,\omega''_p\omega'_{p+1})>1$. Thus, either
$\omega'_p=1$ and we can apply the rule~\eqref{perm3}, or
$\omega'_p\ne 1$ and we can apply the rule~\eqref{perm4}.
\end{proof}

The following corollary is an easy consequence of Lemma~\ref
{L:iwp}.

\begin{corollary}\label{C:algow1}
Let $\omega$ be a reduced word, and let
$[\omega_1,\dots,\omega_m]$ be its increasing
decomposition. Let $p\in\{1,\dots,m-1\}$ be an integer. Then
 \begin{enumerate}
 \item $\max(\omega_p)>\max(\omega_{p+1})$;
 \item $\dist(\omega_p,\omega_{p+1})\leq 1$.
 \end{enumerate}
\end{corollary}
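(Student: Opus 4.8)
The plan is to derive both assertions directly from Lemma~\ref{L:iwp}, which is the only ingredient needed. Recall that $\max(\omega_q)$ denotes the largest index $i$ with $t_i$ occurring in $\omega_q$, and that $\dist(\omega_p,\omega_{p+1})=\min_{t_i\in\omega_p,\,t_j\in\omega_{p+1}}|i-j|$.

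For the first item, I would set $i^{*}:=\max(\omega_{p+1})$, so that $t_{i^{*}}$ occurs in $\omega_{p+1}$. Lemma~\ref{L:iwp} then forces $t_{i^{*}+1}$ to occur in $\omega_p$, whence $\max(\omega_p)\geq i^{*}+1>i^{*}=\max(\omega_{p+1})$. This is exactly the strict inequality claimed in the first statement. For the second item, the very same pair of generators does the job: we have $t_{i^{*}}\in\omega_{p+1}$ and $t_{i^{*}+1}\in\omega_p$, two transpositions whose indices differ by exactly $1$. Since $\dist(\omega_p,\omega_{p+1})$ is defined as the \emph{minimum} of $|i-j|$ over all generators $t_i\in\omega_p$ and $t_j\in\omega_{p+1}$, exhibiting this single witnessing pair already gives $\dist(\omega_p,\omega_{p+1})\leq 1$, as required.

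There is essentially no obstacle here: both parts are immediate reformulations of Lemma~\ref{L:iwp} applied to the largest index of $\omega_{p+1}$, which explains why the paper labels the corollary ``an easy consequence''. The only point requiring a moment's care is that the distance is an infimum over pairs, so producing one witnessing pair at distance $1$ suffices to bound it from above; no lower bound is asserted, so the possibility $\dist(\omega_p,\omega_{p+1})=0$ (which could happen if some common index appears in both $\omega_p$ and $\omega_{p+1}$) need not be excluded.
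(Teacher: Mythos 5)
Your proof is correct and follows exactly the route the paper intends: the paper gives no explicit argument, merely declaring the corollary ``an easy consequence of Lemma~\ref{L:iwp}'', and your application of that lemma to $i^{*}=\max(\omega_{p+1})$ yields both the strict inequality and the witnessing pair at distance $1$. Nothing further is needed.
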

 
The easy but tedious proof of the following lemma is left
to the reader, who can proceed by induction.

\begin{lemma}\label{L:wordmax}
Let $\omega=\omega_1\dots\omega_m$ be a permutation put
under increasing decomposition. Put
$\omega[1,\dots,n]=[\alpha_1,\dots,\alpha_n]$ and
$\omega_1=\prod_{k=1}^q t_{i_k}\dots t_{i_k+p_k}$\footnote{
$\prod_{k=1}^q t_{i_k}\dots t_{i_k+p_k}$ is the
consecutively increasing decomposition of $\omega_1$.}.
Suppose that $\omega$ is reduced.
Then the following properties hold:
\begin{enumerate}
\item $\alpha_1<\alpha_2<\dots<\alpha_{i_1}$;
\item $\alpha_{i_k}=\max\{\alpha_1,\alpha_2,\dots,\alpha_
{i_k+p_k}\}$;
\item $\alpha_{i_k}<\alpha_{i_k+p_k+1}<\alpha_{i_k+p_k+2}
<\dots<\alpha_{i_{k+1}}$;
\end{enumerate}
for all $k\in\{1,\dots,q\}$.
\end{lemma}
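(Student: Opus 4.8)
The plan is to translate the combinatorics of the first increasing block $\omega_1$ into the order structure of the list $[\alpha_1,\dots,\alpha_n]=\omega[1,\dots,n]$. By Corollary~\ref{C:algow1} the blocks of the increasing decomposition have strictly decreasing maxima, so $\omega_1$ is the block of largest reach and records exactly the swaps of the first left-to-right pass of the bubble sort on the list. The single fact I would establish first is the elementary dynamics of one such pass: the entry carried at position $j$ is always the running maximum $\max(\alpha_1,\dots,\alpha_j)$, so the letter $t_j$ occurs in $\omega_1$ precisely when this running maximum exceeds $\alpha_{j+1}$, and the positions \emph{skipped} by $\omega_1$ are exactly those at which a new left-to-right maximum (a record) of $[\alpha_1,\dots,\alpha_n]$ appears. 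This dictionary between the letters of $\omega_1$ and the records of the list is the engine of the whole proof.

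With this dictionary, Properties~(1) and~(2) are almost immediate. The absence of any letter of $\omega_1$ on the positions $1,\dots,i_1-1$ forces $\alpha_1,\dots,\alpha_{i_1}$ to be successive records; since records strictly increase, this is Property~(1). For Property~(2) I would run the carried maximum across a single consecutive run $t_{i_k}\cdots t_{i_k+p_k}$: the presence of all these adjacent letters means the running maximum is never overtaken between positions $i_k$ and $i_k+p_k$, so it stays equal to $\alpha_{i_k}$ throughout, which is exactly $\alpha_{i_k}=\max(\alpha_1,\dots,\alpha_{i_k+p_k})$.

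For Property~(3), and to keep the bookkeeping honest, I would set up an induction on the number $q$ of consecutive runs of $\omega_1$ (equivalently, on $n$). In the base case $q=1$ there is a single peak and the statement reduces to Properties~(1) and~(2) just obtained. For the inductive step I would peel off the first run together with its peak $\alpha_{i_1}$: the gap positions $i_1+p_1+1,\dots,i_2-1$ lying between the first two runs are, by the dictionary, record positions, so their entries strictly increase and each exceeds $\alpha_{i_1}$, which is the chain asserted by Property~(3) for $k=1$; removing this initial segment leaves a reduced word whose first increasing block is $\omega_1$ shorn of its first run, to which the induction hypothesis applies. Throughout, the hypotheses coming from reducedness are what make the peeling legitimate: Corollary~\ref{C:algow1} gives $\dist(\omega_p,\omega_{p+1})\le 1$ and the strict decrease of the block maxima, while Lemma~\ref{L:iwp} guarantees that every letter of a later block is shadowed by a letter one unit higher in the preceding block, so that no record predicted by a gap of $\omega_1$ can be destroyed by a subsequent pass.

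The step I expect to be the main obstacle is Property~(3) in conjunction with Property~(2) at the junction of two consecutive runs: one must show simultaneously that $\alpha_{i_k}$ survives as the prefix maximum all the way to position $i_k+p_k$ and that the entries immediately afterwards resume a strictly increasing sequence of records up to the next peak $\alpha_{i_{k+1}}$. This is a global assertion about the records of $[\alpha_1,\dots,\alpha_n]$ rather than a local statement about an individual swap, and it is precisely here that reducedness must be invoked to exclude the configurations---a run that is too long, or a record appearing one position too early---that the rewriting rules \eqref{perm1}--\eqref{perm4} have already eliminated. The remaining work, namely keeping the offsets $i_k+p_k+1$ aligned and treating the tail past $\max(\omega_1)$, is the routine but lengthy bookkeeping that makes the full argument ``tedious'' in the sense announced before the statement.
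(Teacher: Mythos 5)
Your central step is circular. The engine of your argument is the ``dictionary'': that $\omega_1$ records exactly the swaps of the first bubble-sort pass on $\omega[1,\dots,n]$, so that its letters mark the non-record positions of the list. But this is not a consequence of Corollary~\ref{C:algow1}: the strict decrease of the block maxima is a purely combinatorial property of the word $\omega$ and says nothing, by itself, about how $\omega$ interacts with the order structure of the list $\omega[1,\dots,n]$. In the paper's logical chain the statement ``the first iterations of the algorithm on $\omega[1,\dots,n]$ are represented by $\omega_1$'' is precisely the (unnamed) corollary \emph{deduced from} Lemma~\ref{L:wordmax}, which then feeds into Corollary~\ref{C:bub=red} (reduced $\Rightarrow$ bubblian). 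The lemma itself must therefore be proven without attributing any algorithmic meaning to $\omega$: one has to compute the action of the reduced word on $[1,\dots,n]$ directly and extract the inequalities from reducedness alone, via Lemma~\ref{L:iwp} and Corollary~\ref{C:algow1} (this is the induction on the word that the paper, which leaves the proof to the reader, hints at). Your proof runs the implication in the opposite direction: you assume the algorithmic interpretation to read off the order properties, i.e.\ you assume what the lemma exists to establish. Your only gesture at discharging this (``it is precisely here that reducedness must be invoked to exclude the configurations\dots'') names the obligation but does not meet it.

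There are also gaps in the parts that would remain even if the dictionary were granted. Your induction peels off the first consecutive run of $\omega_1$; you would need to prove that the remaining suffix is still reduced (not immediate, since the applicability conditions of rules \eqref{perm3} and \eqref{perm4} are context-dependent, e.g.\ the ``maximally increasing'' hypothesis can become true after deleting a prefix), and to relate the list of the peeled word to $[\alpha_1,\dots,\alpha_n]$ --- under the conventions that make the lemma true, the deleted prefix acts \emph{last} on the list, not first, so the two lists differ by a cycle that your bookkeeping never tracks. Finally, your own dictionary, applied carefully, gives $\alpha_{i_k}>\alpha_{i_k+p_k+1}$ (that entry is the one displaced by the last swap of the $k$-th run, hence not a record) and produces records only from index $i_k+p_k+2$ onward; so the chain you claim to obtain, namely Property~(3) verbatim starting at $\alpha_{i_k+p_k+1}$, does not follow from your argument and is in fact incompatible with it. This off-by-one appears to be a slip in the statement itself, but your write-up reproduces it without noticing the conflict, which is a sign that the deferred ``routine bookkeeping'' is exactly where the substance of the proof lies.
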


\begin{corollary}
In the conditions of Lemma~\ref{L:wordmax}, the first
iterations
of the algorithm on $\omega[1,\dots,n]$ are represented by
$\omega_1$.
\end{corollary}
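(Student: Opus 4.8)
The plan is to reduce the statement to an elementary analysis of a single pass of the bubble sort algorithm and then read off the answer from the structural description of $\omega[1,\dots,n]$ provided by Lemma~\ref{L:wordmax}. Write $\omega[1,\dots,n]=[\alpha_1,\dots,\alpha_n]$. The first ingredient is a running--maximum invariant for one pass: scanning the list from left to right and comparing the entries in positions $j$ and $j+1$ for $j=1,\dots,n-1$, one checks by an immediate induction on $j$ that, just before the comparison at position $j$, that position holds $\max(\alpha_1,\dots,\alpha_j)$. Indeed, each comparison either leaves in place a strictly larger entry (a new left--to--right maximum, which I will call a \emph{record}) or carries the current maximum one step to the right. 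Consequently the first pass performs the transposition $t_j$ if and only if $\alpha_{j+1}<\max(\alpha_1,\dots,\alpha_j)$, that is, exactly when $\alpha_{j+1}$ is \emph{not} a record of $[\alpha_1,\dots,\alpha_n]$, and it leaves positions $j,j+1$ untouched precisely when $\alpha_{j+1}$ is a record.

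The second step is to match this set of swap positions with the indices occurring in $\omega_1$. Let $\omega_1=\prod_{k=1}^q t_{i_k}\dots t_{i_k+p_k}$ be its consecutively increasing decomposition. Property~(1) of Lemma~\ref{L:wordmax}, namely $\alpha_1<\alpha_2<\dots<\alpha_{i_1}$, says that every position up to $i_1$ is a record, so no swap occurs before position $i_1$. Property~(2), that $\alpha_{i_k}$ is the maximum of the initial segment $\alpha_1,\dots,\alpha_{i_k+p_k}$, forces the running maximum to remain equal to $\alpha_{i_k}$ throughout the block $t_{i_k}\dots t_{i_k+p_k}$, so a swap is produced along the whole of that block. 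Property~(3), that the entry immediately following the block is a new record strictly larger than $\alpha_{i_k}$, is exactly the condition for the single no--swap gap that separates consecutive blocks. Reading these three facts together, the indices $j$ for which $\alpha_{j+1}$ is a non--record are precisely those appearing in the blocks $t_{i_k}\dots t_{i_k+p_k}$, and they are visited in increasing order of $j$; hence the sequence of transpositions performed during the first pass is exactly $\omega_1$.

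The step demanding the most care is the verification that the running maximum is never reset \emph{inside} a block and is reset at \emph{every} gap, so that each block yields an uninterrupted run of swaps and distinct blocks are separated by exactly one no--swap position. This is what the combination of properties~(2) and~(3) of Lemma~\ref{L:wordmax} encodes, together with the fact (Corollary~\ref{C:algow1}) that consecutive increasing factors of a reduced word satisfy $\max(\omega_p)>\max(\omega_{p+1})$ and $\dist(\omega_p,\omega_{p+1})\le 1$; the distance condition is what guarantees that the blocks are correctly delimited and that the first pass terminates exactly at the last index of $\omega_1$ rather than spilling over into $\omega_2$. Once this is established the conclusion is immediate, and one moreover sees that after the first pass the list has become $\omega_2\cdots\omega_m[1,\dots,n]$, which sets up the natural induction identifying each subsequent pass with the corresponding block $\omega_p$ and thereby recovering the full bubblian decomposition.
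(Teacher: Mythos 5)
Your strategy is the natural one, and since the paper states this corollary with no proof at all (it is offered as an immediate consequence of Lemma~\ref{L:wordmax}), it is surely the intended argument: characterize the swaps of the first bubble--sort pass by the running--maximum criterion (a swap occurs at position $j$ if and only if $\alpha_{j+1}<\max(\alpha_1,\dots,\alpha_j)$, i.e.\ iff $\alpha_{j+1}$ is not a left--to--right record), then match the set of swap positions against the block structure of $\omega_1$ given by the lemma. The criterion itself is correct, as is your closing remark that after the first pass the list becomes $\omega_2\cdots\omega_m[1,\dots,n]$.

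However, the matching step has a genuine off--by--one gap at the last letter of every block. From property~(2) as stated, $\alpha_{i_k}=\max\{\alpha_1,\dots,\alpha_{i_k+p_k}\}$, you may conclude that $\alpha_{j+1}$ is a non--record only for $j+1\in\{i_k+1,\dots,i_k+p_k\}$, i.e.\ that swaps occur at positions $i_k,\dots,i_k+p_k-1$; whether a swap occurs at the block's last position $j=i_k+p_k$ is governed by $\alpha_{i_k+p_k+1}$, about which property~(2) says nothing, so your sentence ``a swap is produced along the whole of that block'' does not follow from~(2). Worse, property~(3) as stated, $\alpha_{i_k}<\alpha_{i_k+p_k+1}<\dots<\alpha_{i_{k+1}}$, makes $\alpha_{i_k+p_k+1}$ a record, which forces \emph{no} swap at position $i_k+p_k$: read literally, the lemma's hypotheses imply that the first pass is $\prod_k t_{i_k}\cdots t_{i_k+p_k-1}$ rather than $\omega_1$, and a singleton block ($p_k=0$) would produce no swap at all. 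What actually holds for a bubblian (hence reduced) word is the corrected statement $\alpha_{i_k}=\max\{\alpha_1,\dots,\alpha_{i_k+p_k+1}\}$ together with $\alpha_{i_k+p_k+1}<\alpha_{i_k}<\alpha_{i_k+p_k+2}<\dots<\alpha_{i_{k+1}}$. For instance $\omega=t_2t_3t_5\,t_1t_2$ is bubblian with $\omega[1,\dots,6]=[3,4,1,2,6,5]$; the blocks of $\omega_1=t_2t_3t_5$ are $t_2t_3$ and $t_5$, and $\alpha_4=2<\alpha_2=4$, contradicting~(3) as stated. So the off--by--one sits in the lemma itself (whose proof the paper also omits), and a complete proof of the corollary must either prove this corrected form of the lemma or otherwise supply the missing inequality $\alpha_{i_k+p_k+1}<\alpha_{i_k}$. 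Your appeal to Corollary~\ref{C:algow1} cannot fill the hole, since that corollary compares the consecutive increasing factors $\omega_p$ and $\omega_{p+1}$ of $\omega$, not the consecutive blocks inside $\omega_1$. As written, your deduction fails at exactly this point.
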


The following corollary is an easy consequence of Lemma~\ref
{L:wordmax}.

\begin{corollary}\label{C:bub=red}
Let $\omega$ be a reduced word. Then $\omega$ is a bubblian
word.
\end{corollary}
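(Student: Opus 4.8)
The plan is to argue by induction on the number $m$ of components in the increasing decomposition $[\omega_1,\dots,\omega_m]$ of the reduced word $\omega$, exploiting the fact that this decomposition matches the successive passes of the bubble sort algorithm: within a single sweep the indices of the swaps that actually occur are increasing, so one pass is recorded by an increasing word. The base case $m\le 1$ is immediate. The empty word is the transcript of bubble sort on an already sorted list, and a single maximal increasing run $\omega_1$ records the swaps of one sweep through $\omega[1,\dots,n]$, after which (there being no further component) no swap occurs, so $\omega_1$ is the complete transcript; the shape of the input making this happen is exactly the one described by Lemma~\ref{L:wordmax}.

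For the inductive step, assume $m\ge 2$. First I would invoke the corollary immediately preceding this statement, which guarantees that the first pass of the algorithm on $\omega[1,\dots,n]$ is precisely $\omega_1$. It then remains to peel off $\omega_1$ and apply the induction hypothesis to the suffix $\omega_2\cdots\omega_m$. Two points need checking. First, that $\omega_2\cdots\omega_m$ is again reduced: since each of the rules \eqref{perm1}--\eqref{perm4} rewrites a local pattern, any instance occurring in the suffix already occurs in $\omega$, so reducibility of the suffix would contradict reducibility of $\omega$. Second, that the increasing decomposition of $\omega_2\cdots\omega_m$ is exactly $[\omega_2,\dots,\omega_m]$; this follows from the uniqueness of the increasing decomposition, as the boundaries between the maximal increasing runs $\omega_i$ are determined locally and are unchanged by deleting the prefix $\omega_1$.

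Granting these two points, the induction hypothesis gives that $\omega_2\cdots\omega_m$ is a bubblian word. The remaining task is to see that it is the transcript of bubble sort on the list obtained from $\omega[1,\dots,n]$ after the first pass $\omega_1$. Here one uses the factorization $\omega=\omega_1\,(\omega_2\cdots\omega_m)$ in $\mathcal S$: applying the swaps $\omega_1$ to the input $\omega[1,\dots,n]$ produces the list corresponding to the permutation $\omega_2\cdots\omega_m$, which is exactly the input sorted by the bubblian word $\omega_2\cdots\omega_m$. Since each pass of bubble sort is an independent full sweep, the continuation of the algorithm after the first pass coincides with bubble sort restarted on this intermediate list. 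Concatenating $\omega_1$ with the transcript of that continuation reconstitutes the complete execution on $\omega[1,\dots,n]$, so $\omega$ is itself bubblian.

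The one genuinely delicate point — and the place where Lemma~\ref{L:wordmax} does the real work — is the identification, packaged in the preceding corollary, of a reduced initial segment $\omega_1$ with the swap sequence of a single bubble sort pass. Once that identification is available, the gluing of successive passes described above is purely formal, which is why the present statement is only a corollary.
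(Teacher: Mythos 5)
Your proof is correct and takes essentially the same route as the paper: the paper derives this statement as an ``easy consequence'' of Lemma~\ref{L:wordmax} through the preceding corollary (the first pass of bubble sort on $\omega[1,\dots,n]$ is exactly $\omega_1$), and your induction on the number of increasing runs, peeling off one pass at a time, is precisely the argument the paper leaves implicit. The details you make explicit (reducedness of the suffix, invariance of the increasing decomposition under deleting $\omega_1$, and the gluing of successive passes) are exactly the right ones to fill in.
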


Since the bubblian word associated to a given permutation
is unique and the rewriting process terminates,
the bubblian words are the reduced words.

\subsection{Insertion sort}
The basic idea of the insertion sort algorithm is the
following: the beginning of the list being already sorted,
the first non sorted element of the list is put at the
right place in the already sorted part. Thus, an elementary
transform made by the algorithm is a cycle
$(i,i+1,\dots,i+p)=t_{i+p}\,t_
{i+p-1}\,\dots\,t_{i+1}\,t_i$ in $\mathcal{S}$. Thus, any
permutation $\sigma$ can be written on the form
 \[
 \sigma=\prod_{p=1}^{m}(i_p,\dots,i_p+q_p)=\prod_{p=1}^
{m}t_{i_p+q_p}\,\dots\,t_{i_p},
 \]
and the word on $\Sigma^*$ produced by the algorithm, which
we will call \emph{insertion word}, is a particular
decomposition of $\sigma$. Let $\omega$ be a word on
$\sigma^*$. The rewriting rules for the insertion sort
algorithm are very similar to the rewriting rules for the
bubble sort algorithm. In the following equations, $i$, $j$
and $k$ are arbitrary positive integers and $\omega$ is a
word on $\Sigma$:
 \begin{equation}\label{permins1}
 \text{ if } \dist(i,\omega)>1, \text{ then }
 t_i\,\omega\,t_i \longrightarrow \omega;
 \end{equation}
 \begin{equation}\label{permins2}
 \text{ if } \dist(i+1,\omega)>1, \text{ then }
 t_{i+1}\,t_{i}\,\omega\,t_{i+1}\longrightarrow t_{i}
\,t_{i+1}\,t_{i}\,\omega;
 \end{equation}
 \begin{equation}\label{permins3}
 \text{ if } \dist(i+1,\omega)>1 \text{, then }
 t_{i+1}\,\omega\,t_{i}\longrightarrow \omega\,t_{i+1}
\,t_i;
 \end{equation}
 \begin{equation}\label{permins4}
 \text{ if } j-i>1,\text{ then }
 t_{j+1}\,t_j\,t_{i}\longrightarrow t_{j+1}\,t_i\,t_j.
 \end{equation}
Let $\omega$ be a word on $\sigma^*$.
Using similar methods that in Subsection~\ref{Ss:bubble},
we can prove that, by applying repeatedly rules~\ref
{permins1} to~\ref{permins4}, we obtain the insertion
word uniquely associated to $\omega$.

\subsection{Selection sort}
The basic idea of the selection sort algorithm is the
following: the beginning of the list being already sorted,
the algorithm finds the smallest element of the non sorted
list and put it at the right place at the end of the
already sorted part. As in the previous subsection, an
elementary transform made by the algorithm is a cycle
$(i,i+1,\dots,i+p)$, which can be written as $t_{i+p}\,t_
{i+p-1}\,\dots\,t_{i+1}\,t_i$ in $\Sigma^*$. The word on
$\Sigma^*$ produced by the algorithm, called \emph
{selection word}, is a particular decomposition of
$\sigma$.

Let $\omega$ be a word on $\sigma^*$. The rewriting rules
for the selection sort algorithm are very similar to the
rewriting rules for the bubble sort algorithm. In the
following equations, $i$, $j$ and $k$ are arbitrary
positive integers and $\omega$ is a word on $\Sigma$:
 \begin{equation}\label{permsel1}
 \text{ if } \dist(i,\omega)>1, \text{ then }
 t_i\,\omega\,t_i \longrightarrow \omega;
 \end{equation}
 \begin{equation}\label{permsel2}
 \text{ if } \dist(i,\omega)>1, \text{ then }
 t_{i}\,\omega\,t_{i+1}\,t_{i}\longrightarrow
\omega\,t_{i+1}\,t_{i}\,t_{i+1};
 \end{equation}
 \begin{equation}\label{permsel3}
 \text{ if } \dist(i,\omega)>1 \text{, then }
 t_{i+1}\,\omega\,t_{i}\longrightarrow t_{i+1}
\,t_i\,\omega;
 \end{equation}
 \begin{equation}\label{permsel4}
 \text{ if } i-j>1,\text{ then }
 t_{i}\,t_{j}\,t_{j-1}\longrightarrow t_{j}\,t_i\,t_
{j-1}.
 \end{equation}
Let $\omega$ be a word on $\sigma^*$.
Using similar methods that in Subsection~\ref{Ss:bubble},
we can prove that, by applying
repeatedly rules~\ref{permins1} to~\ref{permins4}, we can
obtain the insertion word uniquely associated to $\omega$.
\medskip


\begin{thebibliography}{VGT88}

\bibitem{Akh00}
{\sc A.~Akhavi.}
\newblock Worst--case complexity of the optimal {LLL}
algorithm.
\newblock In {\em Proceedings of LATIN'2000 - Punta del
Este}.
\newblock LNCS 1776, pp 476--490.

\bibitem{DaFlVa}
{\sc H.~Daud\'e, Ph. Flajolet, and B.~Vall\'ee.}
\newblock An average-case analysis of the {G}aussian
algorithm for lattice
 reduction.
\newblock {\em Comb., Prob. \& Comp.}, 123:397--433, 1997.

\bibitem{DaVa}
{\sc Daud\'e, H., and Vall\'ee, B.}
\newblock An upper bound on the average number of
iterations of the {LLL}
 algorithm.
\newblock {\em Theoretical Computer Science 123(1)\/}
(1994), pp.~95--115.

\bibitem{JoSt98}
{\sc A.~Joux and J.~Stern.}
\newblock Lattice reduction: {A} toolbox for the
cryptanalyst.
\newblock {\em J. of Cryptology}, 11:161--185, 1998.


\bibitem{KaSc96}
{\sc M.~Kaib and C.~P. Schnorr.}
\newblock The generalized {G}auss reduction algorithm.
\newblock {\em J. of Algorithms}, 21:565--578, 1996.

\bibitem{Kann.G-Z}
{\sc R.~Kannan.}
\newblock Improved algorithm for integer programming and
related lattice
 problems.
\newblock In {\em 15th Ann. ACM Symp. on Theory of
Computing}, pages 193--206,
 1983.

\bibitem{Lag1}
{\sc J.~C. Lagarias.}
\newblock Worst-case complexity bounds for algorithms in
the theory
of integral
 quadratic forms.
\newblock {\em J. Algorithms}, 1:142--186, 1980.

\bibitem{Lag0}
{\sc J.~C. Lagarias.}
\newblock The computational complexity of simultaneous {D}
iophantine
approximation problems
\newblock {\em SIAM J. Computing}, 14:196--209, 1985.


\bibitem{Len83}
{\sc H.W. Lenstra.}
\newblock Integer programming with a fixed number of
variables.
\newblock {\em Math. Oper. Res.}, 8:538--548, 1983.




\bibitem{Len00}
{\sc H.W. Lenstra.}
\newblock Flags and lattice basis reduction.
\newblock In {\em Procedings of the 3rd European Congress of Mathematics - Barcelona July 2000}
\newblock {I: 37-51, Birkh\"auser Verlag, Basel}



\bibitem{LLL82}
{\sc A.~K. Lenstra, H.~W. Lenstra, and L.~Lov{\'a}sz.}
\newblock Factoring polynomials with rational coefficients.
\newblock {\em Math. Ann.}, 261:513--534, 1982.

\bibitem{Mag34}
{\sc W. Magnus.}
\newblock {\"U}ber $n$-dimensionale {G}
ittertransformationen.
\newblock {\em Acta Math.}, 64:353--357, 1934.

\bibitem{MKS76}
{\sc W. Magnus, A. Karrass, and D. Solitar.}
\newblock Combinatorial group theory.
\newblock Dover, New York, 1976 (second revised edition).

\bibitem{Nie24}
{\sc J. Nielsen.}
\newblock Die {G}ruppe der dreidimensionale {G}
ittertransformationen.
\newblock {\em Kgl Danske Videnskabernes Selskab., Math.
Fys. Meddelelser},
V 12: 1--29, 1924.

\bibitem{Nie24b}
{\sc J. Nielsen.}
\newblock Die {I}somorphismengruppe der freien {G}ruppen.
\newblock {\em Math. Ann.}, 91:169--209, 1924.
\newblock translated in english by J. Stillwell in {\em J.
Nielsen collected
papers}, Vol 1.

\bibitem{ValGaussRevisit}
{\sc B.~Vall\'ee.}
\newblock Gauss' algorithm revisited.
\newblock {\em J. of Algorithms}, 12:556--572, 1991.

\bibitem{vSpr}
{\sc O.~von Sprang.}
\newblock {\em Basisreduktionsalgorithmen f{\"u}r Gitter
kleiner Dimension}.
\newblock PhD thesis, Universit{\"a}t des Saarlandes, 1994.

\end{thebibliography}
\end{document}